\newtheorem{definition}{Definition}
\newtheorem{proposition}{Proposition}
\newtheorem{theorem}{Theorem}
\newtheorem{example}{Example}
\newtheorem{lemma}{Lemma}
\newtheorem{corollary}{Corollary}
\newcommand{\commentout}[1]{}
\title{Strengthening Nash Equilibria}
\author{Ivan Geffner \and Moshe Tennenholtz\thanks{The work by Ivan Geffner and Moshe Tennenholtz was supported by funding from
the European Research Council (ERC) under the European Union’s Horizon 2020
research and innovation programme (grant agreement 740435).}}
\date{}
\begin{document}

\maketitle

\begin{abstract}
Nash equilibrium is often heralded as a guiding principle for rational decision-making in strategic interactions. However, it is well-known that Nash equilibrium sometimes fails as a reliable predictor of outcomes, with two of the most notable issues being the fact that it is not resilient to collusion and that there may be multiple Nash equilibria in a single game. In this paper, we show that a mechanism designer can get around these two issues for free by expanding the action sets of the original game. More precisely, given a normal-form or Bayesian game $\Gamma$ and a Nash equilibrium $\vec{\sigma}$ in $\Gamma$, a mechanism designer can construct a new game $\Gamma^{\vec{\sigma}}$ by expanding the action set of each player and defining appropriate utilities in the action profiles that were not already in the original game. We show that the designer can construct $\Gamma^{\vec{\sigma}}$ in such a way that (a) $\vec{\sigma}$ is a semi-strong Nash equilibrium of $\Gamma^{\vec{\sigma}}$, and (b) $\vec{\sigma}$ Pareto-dominates or quasi Pareto-dominates all other Nash equilibria of $\Gamma^{\vec{\sigma}}$.
\end{abstract}

\newpage

\section{Introduction}

Suppose that three pirates want to share a treasure consisting of $300$ coins. Each pirate can individually choose if it cooperates or defects. If all three pirates cooperate, they split the treasure equally and each one of them gets $100$ coins. The same happens if only one pirate defects (she cannot overpower the other two), or if all three pirates defect. However, if exactly two pirates defect, they can keep the treasure for themselves and get $150$ coins each while the remaining pirate gets $0$. Clearly, all three pirates cooperating is a Nash equilibrium since no pirate gets additional utility by defecting. However, this equilibrium is not resilient to coalitions since two pirates can get additional utility by defecting together. Now, suppose that these three pirates are in an identical situation with the exception that there is an additional action available for them: \emph{Betray}. Betraying works as follows. If the number of pirates that play either \emph{Defect} or \emph{Betray} is different than two, nothing happens and all three pirates split the treasure equally. Otherwise, we have the following cases depending on what the two non-cooperative pirates play:
\begin{itemize}
    \item \textbf{(Defect, Defect)}: The two defecting pirates get $150$ coins each while the remaining pirate gets $0$.
    \item \textbf{(Betray, Defect)}: The pirate that betrays kills the pirate that defects and gets all $300$ coins.
    \item \textbf{(Betray, Betray)}: Both pirates kill each other. The remaining pirate gets all $300$ coins.
\end{itemize}

As in the previous game, the strategy profile in which all three pirates cooperate is a Nash equilibrium that is not resilient to collusion. Nevertheless, we can argue that in this scenario it is never in the interest of the pirates to collude. If two pirates decide to play \textbf{(Defect, Defect)}, one could easily betray the other in order to get a bigger share. In general, this type of Nash equilibrium is called \emph{semi-strong}~\cite{penn2005congestion}. In a semi-strong Nash equilibrium, no coalition can increase the utility of all of its members while simultaneously satisfying that no member of the coalition can further increase its utility by defecting (inside of the coalition). 

In the game with three pirates described above, we showed that the addition of the \emph{Betray} action converted a Nash equilibrium that was vulnerable to coalitions to a semi-strong one. A natural question that follows is if a mechanism designer can make any Nash equilibrium semi-strong by adding appropriate actions and utilities to the original game. In this paper, we give a positive answer to this question. Moreover, in normal-form games or Bayesian games with independent types, we show that we can add actions in such a way that the original equilibrium is not only semi-strong, but also Pareto-dominates all other Nash equilibria of the extended game. In arbitrary Bayesian games, we show that the original equilibrium quasi Pareto-dominates all other equilibria, which means that the total welfare of any subset $S$ of players with $|S| \ge 2$ is always greater in the original equilibrium. This implies that the mechanism designer can bluff about the existence of these additional actions since they would never be played in practice.

Perhaps not surprisingly, strengthening a pure Nash equilibrium $\vec{\sigma}$ is relatively simple. We can do so by simply adding a \emph{report} action to each player and defining the new utilities in such a way that players that defect from $\vec{\sigma}$ while being reported get punished (note that if two players play \emph{report}, they both are defecting while being reported), and players who report defecting players get rewarded. The real difficulty of the problem arises when $\vec{\sigma}$ is a mixed strategy, since in this case it is hard to tell if a player $i$ is defecting or if the action played by $i$ is a legitimate outcome from randomization. Additional complexity is added when players have (possibly correlated) private information and when the game is infinite (i.e., when the sets of possible actions are infinite). In fact, these cases require carefully calibrated mechanisms, which serve as the main technical contribution of our paper. At a high level, our approach is to allow players to \emph{bet} or \emph{predict} what actions or action profiles are going to be played by other players (besides playing an action in the original game). If they guess correctly, they get rewarded and the player whose action was predicted gets punished. If they guess wrong, they get punished. We show that we can calibrate the payoffs of such bets in such a way that, if a coalition defects and they all win, there exists a player in the coalition that has incentives to bet on other members of the coalition or there exists a player in the coalition that has incentives to modify its strategy because of the bets placed by other players.

\subsection{Related work}

The work in this paper intersects with two large branches of the economics and computer science literature. The first one is the study of \textbf{collusion-resilient protocols} in multi-agent systems. Ben-Or~\cite{ben1983another} and Rabin~\cite{rabin1983randomized} 
provided protocols that implement consensus in the presence of $t$ byzantine faults if the number of players $n$ satisfies $n > 3t$. These protocols were later generalized by Ben-Or, Goldwasser and Widgerson~\cite{bgw88} who showed that, if $n > 3t$, we can implement synchronous multiparty secure computation of any function $f$ on a finite domain in the presence of at most $t$ byzantine faults. Many subsequent authors added strategic incentives into the mix~\cite{abraham2011distributed,
aiyer2005bar,
abraham2008lower,
geffner2023communication,
vilacca2011n,
ranchal2021rational}. Bernheim, Peleg and Whinston~\cite{bernheim1987coalition} introduced a recursive definition for \emph{coalition-proof equilibrium}. Intuitively, a strategy profile $\vec{\sigma}$ is coalition-proof if there is no coalition that can defect from the main strategy to another coalition-proof strategy in such a way that all of its members get a better utility. Penn, Polukarov and Tennenholtz~\cite{penn2005congestion} defined \emph{semi-strong Nash equilibrium} in a similar way, but with only one level of recursion (i.e., for a coalition to succeed, their new strategy must also be a Nash equilibrium). Abraham, Dolev, Gonen and Halpern~\cite{abraham2006distributed} adapted Aumann's definition of \emph{$k$-resilience}~\cite{aumann1959acceptable} to cheap-talk games, and showed that every synchronous mediator can be implemented by a $k$-resilient cheap-talk strategy (i.e., a strategy in which coalitions of at most $k$ agents do not benefit from defecting from the proposed strategy) if $n > 2k$ and if there is an appropriate way to punish defecting players. Most of these results have been generalized to the asynchronous setting as well~\cite{bracha1984asynchronous, bracha1985asynchronous, ben1993asynchronous, geffner2023lower, geffner2021security}. In particular, Abraham, Dolev, Geffner and Halpern~\cite{abraham2019implementing} provided a $k$-resilient implementation of asynchronous mediators if $n > 3k$ and there exists a punishment strategy. Although most of the results in this area require that the ratio between the number of players $n$ and the maximum size $t$ or $k$ of the defecting coalition satisfies a certain threshold, our results show that, if we only need semi-strongness, we can take the protocol used for $t = 1$ or $k = 1$ and expand the set of actions in such a way that the same protocol tolerates deviations of coalitions of arbitrary size (according to the semi-strong definition). For example, this means that if a mechanism designer can modify (or can bluff about modifying) the action sets of the players, we provide a way to design semi-strong protocols that implement mediators in synchronous or asynchronous systems that require only three or four players, respectively.

The second closely related branch is \textbf{mechanism design}. The aim of a mechanism designer
is to design the set of strategies of the players and to define the utilities that they would get by following said strategies in order to influence the outcomes obtained from rational play. 
Sometimes, this influence can be exerted with the inclusion of a third-party mediator that may interfere with the game in different ways.
Aumann\cite{aumann1974subjectivity} considered the inclusion of a cheap-talk mediator in normal-form and Bayesian games. More precisely, given a normal-form game $\Gamma$, Aumann constructed a game $\Gamma_d$ where players can communicate via cheap-talk with a mediator $d$ before playing an action in $\Gamma$. The Nash equilibria of the resulting games were defined as correlated equilibria. Forges~\cite{forges1986approach} considered the same extension in Bayesian games and gave a geometric characterization of the resulting equilibria, which were labelled as \emph{communication equilibria}.  Monderer and Tennenholtz~\cite{monderer2009strong} showed that we can guarantee the existence of strong Nash equilibria in games with several classes of mediators. In their setting, some of these mediators can go beyond sending messages or signals to the players, such as for instance playing on their behalf when permitted. In another paper, Monderer and Tennenholtz~\cite{monderer2003k} considered a setting where the mediator has the power to \emph{promise payments} to the players (i.e., to increase their utility on given outcomes) but cannot communicate. They study the minimum payment that is necessary to influence the behavior of the players. Monderer and Tennenholtz showed that, given any normal-form game $\Gamma$ and any pure Nash equilibrium $\vec{\sigma}$ in $\Gamma$, there exists a mediator $d$ that promises payments only outside of the equilibrium path in such a way that $\vec{\sigma}$ is not only a Nash equilibrium but also a dominant strategy in the newly generated game. Tennenholtz~\cite{tennenholtz2004program} later considered a type of normal-form games where players are programs and a third-party mediator is able to access and reveal the source code of other players before they commit to an action. Tennenholtz showed that the set of Nash equilibria in this setting (or \emph{program equilibria}, as defined in \cite{tennenholtz2004program}) is richer than without the inclusion of the mediator. For example, there exist program equilibria in the one-shot Prisoner's Dilemma where both players cooperate. Oesterheld et al.~\cite{oesterheld2022similarity} refined this idea and defined \emph{similarity-based coopaeration}, where each player only has access to a similarity parameter instead of the whole source code of other players. They showed that, in their setting, there also exist equilibria in which both players cooperate in the Prisoner's Dilemma. Our setting can also be viewed as including a mediator into a normal-form or Bayesian game in order to influence the outcome. In this case, the mediator cannot communicate with the players but can extend the game by adding new actions with arbitrary utilities.

\subsection{Paper Structure}

The rest of the paper is structured as follows. In Section~\ref{sec:definitions} we give the necessary definitions to formalize and prove our main results, which are stated in Section~\ref{sec:results}. Section~\ref{sec:sketch} provides a high-level overview of the proofs, along with several examples that showcase the main difficulties that are needed to overcome in each setting. We give a formal proof of our main results in Sections~\ref{sec:proof-thm1} and \ref{sec:arbitrary-bayesian}. The first of these sections deals with the case of finite normal-form games and finite Bayesian games with independent types, and the latter deals with (arbitrary) finite Bayesian games. These proofs are generalized in Section~\ref{sec:infinite} in order to hold for games with infinite actions as well. We give a conclusion in Section~\ref{sec:conclusion}, along with a brief summary about the remaining open problems.

\section{Basic definitions}\label{sec:definitions}

\subsection{Games and Equilibria}

A normal-form game is a tuple $(P, A, U)$ where $P = \{1,2,\ldots,n\}$ is the set of players, $A := A_1 \times A_2 \times \ldots \times A_n$ is the set of possible actions profiles and $U = (u_1, u_2, \ldots, u_n)$ is a tuple of utility functions such that $u_i: A \times \mathbb{R}$ outputs the utility of player $i$ given the action profile played. In a normal form game, a (mixed) strategy $\sigma_i$ for player $i$ is simply a distribution of actions of $A_i$ (we denote by $\Delta A_i$ the set of such possible distributions). A strategy profile $\vec{\sigma} = (\sigma_1, \ldots, \sigma_n)$ is a Nash Equilibrium of $\Gamma$ if no player can increase its utility by switching to another strategy. More precisely, $\vec{\sigma}$ is a Nash equilibrium if $u_i(\vec{\sigma}) \ge u_i(\vec{\sigma}_{-i}, \tau_i)$ for all players $i$ and all strategies $\tau_i$ for $i$ (where $\vec{\sigma}_{-i}$ denotes that all players but $i$ play their part of strategy profile $\vec{\sigma}$).

A Bayesian game is a generalization of normal-form games where players have private types and the utility of each player depends not only on the action profile played but also on its own type. More formally, a Bayesian game is a tuple $(P,T,q, A, U)$ where $T = T_1 \times T_2 \times \ldots \times T_n$ is the set of possible type profiles and $q$ is the prior distribution over $T$ that is common knowledge among the players. $P$, $A$ and $U$ denote the same concepts as in normal-form games, except that the domain of each utility function $u_i$ is $T_i \times A$ instead of just $A$. In Bayesian games, strategies are functions from types to distributions over action profiles. More precisely, in a Bayesian game $\Gamma = (P,T,q,A,U)$, a strategy $\sigma_i$ for player $i$ maps elements from $T_i$ to distributions over $A$. As in normal-form games, a strategy profile $\vec{\sigma}$ is a Nash equilibrium if $u_i(\vec{\sigma}) \ge u_i(\vec{\sigma}_{-i}, \tau_i)$ for all players $i$ and all strategies $\tau_i$ for $i$. For simplicity, we will assume that no player has a type that has $0$ probability of being sampled.

Given a normal-form or Bayesian game $\Gamma$, we say that $\Gamma$ is \emph{finite} if the set of actions is finite, and we say that $\Gamma$ is \emph{bounded} if all utility functions are bounded. Note that if a game is finite, it is also bounded.


\commentout{
\begin{equation}\label{eq:expected-payoff}
u_i(\vec{\sigma}) = \sum_{t_i \in T_i} q(t_i) \sum_{\vec{t}}
q(\vec{t} \mid t_i) u_i(\vec{\sigma}(\vec{t})),
\end{equation}
where 
$u_i(\vec{\sigma}(\vec{t}))$ denotes the expected utility of player
$i$ when an action profile 
is chosen
according to 
$(\sigma(\vec{t}))$. In Bayesian games, Nash equilibria are defined in the same way as in normal form games: a strategy profile $\vec{\sigma}$ is a (Bayesian) Nash equilibrium if $u_i(\vec{\sigma}) \ge u_i(\vec{\sigma}_{-i}, \tau_i)$ for all players $i$ and all strategies $\tau_i$ for $i$.
}

\subsection{Resilience to coalitions}

There are well-known notions of equilibria that are resilient to collusion and joint deviations. The first of such notions is \emph{strong Nash equilibrium}~\cite{aumann1959acceptable}, which occurs if no possible coalition can strictly increase the payoff of all its members by jointly deviating. More precisely, in a normal-form or Bayesian game $\Gamma = (P,T,q,A,U)$, a strategy profile $\vec{\sigma}$ is a strong Nash equilibrium if, for all subsets $S \subseteq P$ and all strategy profiles $\vec{\tau}_S$ for players in $S$, there exists a player $i \in S$ such that $u_i(\vec{\sigma}) \ge u_i(\vec{\sigma}_{-S}, \vec{\tau}_S)$.

In this paper we focus on a variant of strong Nash equilibrium introduced by Penn, Polukarov and Tennenholtz~\cite{penn2005congestion} in which, for a joint deviation to be successful, not only all members must strictly benefit from colluding, but also the strategy performed by the members of the coalition must  be a Nash equilibrium inside of the larger game (in which the players outside of the coalition simply play their part of $\vec{\sigma}$). The following definition makes this precise.

\begin{definition}\label{def:semi-strong}
Let $\Gamma$ be a normal-form or a Bayesian game 
and $\vec{\sigma}$ be a strategy profile for $\Gamma$. Then, $\vec{\sigma}$ is a semi-strong Nash equilibrium if, for all subsets $S$ and all strategy profiles $\vec{\tau}_S$ for $S$, there exists a player $i \in S$ and a strategy $\tau'_i$ for $i$ such that $$u_i(\vec{\sigma}) \ge u_i(\vec{\sigma}_{-S}, \vec{\tau}_S)\quad \mbox{or} \quad u_i(\vec{\sigma}_{-S}, \vec{\tau}_{S \setminus \{i\}}, \tau'_i) > u_i(\vec{\sigma}_{-S}, \vec{\tau}_S)$$
\end{definition}

Note that if the first condition is satisfied for some $i$, then $\vec{\tau}_S$ is not profitable for $i$, while the second condition implies that $i$ can further deviate from $\vec{\tau}_S$ to increase its utility from the one it would get in the coalition.

\subsection{Game extensions}

The results of this paper involve games that result from extending the set of actions of simpler games. We say that a Bayesian game $\Gamma' = (P', T', q', A', U')$ \emph{extends} another Bayesian game $\Gamma = (P,T,q,A,U)$ if $P' = P$, $T' = T$, $q' = q$, $A_i \subseteq A'_i$ for all $i \in P$, and $u_i(\vec{a}) = u'_i(\vec{a})$ for all $\vec{a} \in A$ and $i \in P$. Intuitively, game extensions must have the same players, action profiles, and utilities as the original game, but they may also include additional actions for the players that may give arbitrary utilities when played. The definition for normal-form games is analogous.

For future reference, note that if $\Gamma'$ extends $\Gamma$, we can view strategy profiles of $\Gamma$ as strategy profiles in $\Gamma'$ in a natural way: each player $i$ plays each action in $A_i$ with the same probability as in $\Gamma$, and plays each action in $A'_i \setminus A_i$ with probability $0$. Unless it is specified, for the rest of the paper we will always use such convention.

\subsection{Comparing Nash Equilibria}

Given a Bayesian or Normal-form game $\Gamma$ and two Nash equilibria $\vec{\sigma}$ and $\vec{\tau}$, we say that $\vec{\sigma}$ \emph{Pareto-dominates} $\vec{\tau}$ if $u_i(\vec{\sigma}) \ge u_i(\vec{\tau})$ for all $i \in P$. Moreover, we say that $\vec{\sigma}$ \emph{quasi Pareto-dominates} $\vec{\tau}$ if, for all subsets $S \subseteq P$ with $|S| \ge 2$, $$\sum_{i \in S} u_i(\vec{\sigma}) \ge \sum_{i \in S} u_i(\vec{\tau}).$$

Note that, if $|P| \ge 2$ and $\vec{\sigma}$ quasi Pareto-dominates $\vec{\tau}$, 
\begin{itemize}
    \item At most one player $i$ satisfies $u_i(\vec{\tau}) > u_i(\vec{\sigma})$.
    \item The total welfare of the players is greater with $\vec{\sigma}$ than with $\vec{\tau}$.
\end{itemize}

\section{Main Results}\label{sec:results}

The main contribution of this paper can be summarized by the following Theorems.

\begin{theorem}\label{thm:main}
Let $\Gamma = (P,T,q,A,U)$ be a bounded Bayesian game with independent types (i.e., that $q$ samples the types independently). 
Then, for each Nash equilibrium $\vec{\sigma}$ of $\Gamma$ there exists an extension 
$\Gamma^{\vec{\sigma}} = (P, T, q, A^{\vec{\sigma}}, U^{\vec{\sigma}})$
of $\Gamma$ such that:
\begin{itemize}
    \item [(a)] $\vec{\sigma}$ is a semi-strong Nash equilibrium of $\Gamma^{\vec{\sigma}}$.
    \item [(b)]
    $\vec{\sigma}$ Pareto-dominates all other Nash equilibria of $\Gamma^{\vec{\sigma}}$.
\end{itemize}

Moreover, if $A$ is finite, $A^{\vec{\sigma}}$ satisfies that $|A^{\vec{\sigma}}_i| \le |A_i|\cdot |P| \cdot |A|_{max}$ for all $i \in P$, where $|A|_{max} = \max_{i \in P}{|A_i|}$.
\end{theorem}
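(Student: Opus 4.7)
The plan is to extend $\Gamma$ by allowing each player $i$, alongside any $a_i \in A_i$, to attach a \emph{prediction} that some other player $j$ will play a specific action $a_j$. Concretely, set
\[
A^{\vec{\sigma}}_i = A_i \cup \bigl\{(a_i, j, a_j) : a_i \in A_i,\ j \in P\setminus\{i\},\ a_j \in A_j\bigr\},
\]
whose size in the finite case is at most $|A_i|(1+(|P|-1)|A|_{\max}) \le |A_i|\cdot|P|\cdot|A|_{\max}$. Let $\bar{\sigma}_j(a_j)$ denote the ex-ante marginal probability that $j$ plays $a_j$ under $\vec{\sigma}$ (well-defined because types are independent). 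I would then define $u^{\vec{\sigma}}_i$ to agree with $u_i$ on the old action profiles; on any profile where $i$ plays $(a_i, j, a_j)$ I add to $i$'s payoff a fair-bet score equal to $+\lambda(1-\bar{\sigma}_j(a_j))/\bar{\sigma}_j(a_j)$ if $j$'s realized base action is $a_j$ and to $-\lambda$ otherwise (so that the expected bet value is $0$ under $\vec{\sigma}$), and simultaneously subtract a penalty $Q$ from $j$'s payoff whenever the prediction is correct. For a zero-support prediction ($\bar{\sigma}_j(a_j)=0$) I use instead a large reward $M$ for a correct guess and a zero loss for an incorrect one. The constants $\lambda, M, Q$ will be tuned large enough relative to the uniform bound on $|u_i|$.

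For part (a), $\vec{\sigma}$ is a Nash equilibrium of $\Gamma^{\vec{\sigma}}$ because under $\vec{\sigma}$ every bet has expected value $0$, so no deviation involving bets beats an original best response. For semi-strongness, suppose a coalition $S$ with $|S|\ge 2$ deviates to $\vec{\tau}_S$ with every $i\in S$ strictly better off. If every $j\in S$ satisfied $\bar{\tau}_j = \bar{\sigma}_j$, then, using independence of types, each $i \in S$ would see the same marginal distribution of $(a_k)_{k \neq i}$ as under $\vec{\sigma}$, so the Nash property of $\vec{\sigma}$ would give $u_i(\vec{\tau}_S,\vec{\sigma}_{-S}) \le u_i(\vec{\sigma})$, a contradiction. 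Hence some $j \in S$ and some $a_j^*$ satisfy $\bar{\tau}_j(a_j^*) > \bar{\sigma}_j(a_j^*)$. Any $i \in S\setminus\{j\}$ can then deviate to a strategy that, for each $t_i$, plays $(a_i^*(t_i), j, a_j^*)$ where $a_i^*(t_i)$ is a type-conditional best base-action reply to the others' marginals. The base payoff is at least $u_i(\vec{\tau}_S,\vec{\sigma}_{-S})$ (the type-by-type maximum bounds the mixed average) and the bet adds the strictly positive $\lambda(\bar{\tau}_j(a_j^*)-\bar{\sigma}_j(a_j^*))/\bar{\sigma}_j(a_j^*)$, so $i$ strictly improves.

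For part (b), I would show that at any Nash equilibrium $\vec{\sigma}'$ of $\Gamma^{\vec{\sigma}}$ the base-action ex-ante marginals $\bar{\mu}'_j$ (defined analogously to $\bar{\sigma}_j$) satisfy $\bar{\mu}'_j=\bar{\sigma}_j$ for every $j$. Otherwise some $(j^*,a_{j^*}^*)$ has $\bar{\mu}'_{j^*}(a_{j^*}^*) > \bar{\sigma}_{j^*}(a_{j^*}^*)$, and choosing $\lambda, M$ large enough relative to the utility bound makes that bet strictly dominate every non-bet action for every $i\ne j^*$. In the unique-maximum case every such $i$ bets on $(j^*,a_{j^*}^*)$ with probability $1$, and in general at least one $i$ places positive probability on some $(j,a_j^*)$ with $\bar{\mu}'_j(a_j^*)>\bar{\sigma}_j(a_j^*)$; hence $j$'s expected prediction penalty from $a_j^*$ is at least a positive multiple of $Q$, and taking $Q$ large enough relative to $|u_j|$ forces $j$'s best response to exclude $a_j^*$, contradicting $\bar{\mu}'_j(a_j^*)>0$. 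With matched marginals, the base component of $u^{\vec{\sigma}}_i(\vec{\sigma}')$ equals $u_i(\vec{\sigma})$, every bet has expected score $0$, and the prediction penalty on $i$ is nonnegative, so $u^{\vec{\sigma}}_i(\vec{\sigma}') \le u_i(\vec{\sigma})$, giving Pareto-dominance. The main obstacle will be the simultaneous calibration of $\lambda$, $M$, and $Q$: one must ensure that profitable bets strictly dominate base actions, that the implied prediction penalty outweighs any base-utility incentive to keep the offending marginal, and that even the edge case of ties between many profitable bets (which could dilute the penalty on any single deviator) is still ruled out by large enough constants.
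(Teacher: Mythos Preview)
Your construction is essentially the paper's: augment each action with an optional ``prediction'' of another player's action, score the prediction so that its expected value is zero under $\vec{\sigma}$, and penalize the target of a correct prediction. The genuine gap is in the semi-strongness argument. You assume (a1), locate $j\in S$ and $a_j^*$ with $\bar{\tau}_j(a_j^*)>\bar{\sigma}_j(a_j^*)$, and then claim that any $i\in S\setminus\{j\}$ strictly improves by switching to $(a_i^*(t_i),j,a_j^*)$. But the condition you are trying to contradict, (b1), says $(\vec{\sigma}_{-S},\vec{\tau}_S)$ is already a Nash equilibrium; under (b1) player $i$ is \emph{already} best-responding, and nothing prevents $\tau_i$ from already betting on $(j,a_j^*)$ while playing a base best reply. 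Your sentence ``the base payoff is at least $u_i(\vec{\tau}_S,\vec{\sigma}_{-S})$'' conflates the base-game component with the full payoff in $\Gamma^{\vec{\sigma}}$ and ignores the penalty $i$ incurs from the other coalition members' predictions; switching to the deterministic $a_i^*(t_i)$ can make $i$ \emph{more} predictable and hence worse off. In short, having a \emph{bettor} deviate cannot deliver the contradiction once (b1) is assumed.

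The paper obtains the contradiction from the opposite side. From (b1) every member of $S$ must bet with probability~$1$ (a strictly profitable bet exists), and every bet placed must target some $(k,a_k)$ with $\bar{\tau}_k(a_k)>\bar{\sigma}_k(a_k)$. A pigeonhole count then gives a pair $(k,a_k)$ receiving expected betting mass at least $1/|A|_{\max}$, and there exists $a_k'$ with $\bar{\tau}_k(a_k')<\bar{\sigma}_k(a_k')$ that no one bets on. Player $k$ strictly gains by rerouting the probability on $a_k$ to $a_k'$: the base-utility loss is at most $p\cdot M$ while the penalty saving is at least $p\cdot C/|A|_{\max}$, so the single game-dependent choice $C>M\cdot|A|_{\max}$ suffices. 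This same pigeonhole lower bound is what resolves the worry you flag in part~(b) about ``diluted'' penalties; ``large enough $Q$'' is not enough without a uniform bound independent of the particular deviation. Two smaller points: your scoring rule $\lambda(1-\bar{\sigma}_j(a_j))/\bar{\sigma}_j(a_j)$ is unbounded as $\bar{\sigma}_j(a_j)\to 0$, which makes $\Gamma^{\vec{\sigma}}$ unbounded even when $\Gamma$ is bounded (the paper uses the bounded score $1-p$ versus $-p$); and the infinite-action case needs a separate construction in which players bet on \emph{sets} of actions rather than single actions.
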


Intuitively, part (a) of Theorem~\ref{thm:main} states that given any game and any Nash equilibrium, we can expand the game such that the original Nash equilibrium becomes semi-strong. At the same time, part (b) implies that none of the newly added actions will ever be played if all players play optimally. The last part of the Theorem states that, if $\Gamma$ is finite, then the size of the resulting extension is polynomial over the size of $\Gamma$. The following Theorem extends Theorem~\ref{thm:main} to arbitrary Bayesian games.

\begin{theorem}\label{thm:main2}
Let $\Gamma = (P,T,q,A,U)$ be a bounded Bayesian game.
Then, for each Nash equilibrium $\vec{\sigma}$ of $\Gamma$ there exists an extension 
$\Gamma^{\vec{\sigma}} = (P, T, q, A^{\vec{\sigma}}, U^{\vec{\sigma}})$
of $\Gamma$ such that:
\begin{itemize}
    \item [(a)] $\vec{\sigma}$ is a semi-strong Nash equilibrium of $\Gamma^{\vec{\sigma}}$.
    \item [(b)]
    $\vec{\sigma}$ quasi Pareto-dominates all other Nash equilibria of $\Gamma^{\vec{\sigma}}$.
\end{itemize}

Moreover, if $A$ is finite, $A^{\vec{\sigma}}$ satisfies that $|A^{\vec{\sigma}}_i| = |A_i| + |A|$ for all $i \in P$.
\end{theorem}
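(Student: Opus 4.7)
The plan is to extend $\Gamma$ by adjoining, for every action profile $\vec{a}\in A$, a single \emph{bet} action $b_{\vec{a}}$ to each player's action set, giving $|A_i^{\vec{\sigma}}|=|A_i|+|A|$. Semantically, $b_{\vec{a}}$ means ``I predict the other players are about to play $\vec{a}_{-i}$''. The utilities on profiles that include bets are designed so that: (i) if player $i$ plays $b_{\vec{a}}$ and the non-$i$ players play $\vec{a}_{-i}$, player $i$ gets a large reward $R_{\vec{a},i,t_i}$ and each other player suffers a very large penalty $-M$; (ii) if $i$ plays $b_{\vec{a}}$ but the others do not match $\vec{a}_{-i}$, player $i$ gets penalty $-M$; (iii) $R_{\vec{a},i,t_i}$ is calibrated so that, conditional on $t_i$ and on all other players using $\vec{\sigma}_{-i}$, the interim expected payoff of switching to $b_{\vec{a}}$ equals exactly the interim expected payoff of $\sigma_i(t_i)$. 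This calibration uses only the conditional probabilities $\Pr_{\vec{\sigma}_{-i}}[\vec{a}_{-i}\mid t_i]$, which the designer knows from $\vec{\sigma}$ and $q$.

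For part (a), suppose a coalition $S$ jointly deviates to $\vec{\tau}_S$ and every member strictly benefits. By construction no bet is profitable when the other players follow $\vec{\sigma}$; however, under $\vec{\tau}_S$ the joint distribution of the remaining coalition members' actions, as seen by some $i\in S$ conditional on $t_i$, must shift away from the distribution under $\vec{\sigma}_{S\setminus\{i\}}$, or else $\vec{\tau}_S$ could not change $i$'s expected payoff relative to $\vec{\sigma}$. Since the family $\{b_{\vec{a}}\}_{\vec{a}\in A}$ covers every pure profile, at least one $\vec{a}$ satisfies $\Pr[(\vec{\sigma}_{-S},\vec{\tau}_{S\setminus\{i\}})_{-i}=\vec{a}_{-i}\mid t_i]>\Pr[\vec{\sigma}_{-i}=\vec{a}_{-i}\mid t_i]$, and by choosing $M$ large enough relative to the bound on $|u_i|$ the expected payoff of $b_{\vec{a}}$ strictly exceeds $u_i(\vec{\sigma}_{-S},\vec{\tau}_S\mid t_i)$, providing the required further deviation $\tau'_i$.

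For part (b), take any Nash equilibrium $\vec{\rho}$ of $\Gamma^{\vec{\sigma}}$ distinct from $\vec{\sigma}$. If $\vec{\rho}$ places any positive probability on a bet, the penalty $-M$ absorbed by the predicted players drags down $\sum_{i\in S} u_i(\vec{\rho})$ for every $|S|\geq 2$ that meets those players, and choosing $M$ large enough makes this term dominate so that quasi Pareto-domination holds trivially. If $\vec{\rho}$ puts no mass on bets it is already a Nash equilibrium of $\Gamma$, and for every $i$ and $t_i$ no bet of $i$ strictly beats $\rho_i(t_i)$ against $\vec{\rho}_{-i}$. The ex-interim no-bet constraints, combined with the calibration being anchored at $\vec{\sigma}$, translate into a family of inequalities comparing the conditional distributions of $\vec{\rho}_{-i}$ and $\vec{\sigma}_{-i}$; aggregating these inequalities over any $S$ with $|S|\geq 2$ forces $\sum_{i\in S}u_i(\vec{\rho})\leq \sum_{i\in S}u_i(\vec{\sigma})$.

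The main obstacle is the last step of part (b): turning the collection of per-type, per-player betting constraints into a welfare inequality on arbitrary two-or-more player subsets. The reason quasi Pareto-domination rather than full Pareto-domination is the right target in the correlated-type setting is exactly this: a single player can in principle exploit the type correlation to obtain a strictly higher individual payoff under $\vec{\rho}$ while every one of their own betting constraints stays slack, but any subset of two or more players aggregating their gains must collectively lose welfare, since otherwise the member with the largest induced posterior shift on the others would have a strictly profitable bet and $\vec{\rho}$ would fail to be a Nash equilibrium. Making this accounting quantitative, and in particular choosing the rewards $R_{\vec{a},i,t_i}$ so that a strict welfare gain on some $|S|\geq 2$ always converts into a strict betting incentive for a specific member of $S$, is the core technical difficulty.
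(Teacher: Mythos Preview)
Your high-level idea---let each player bet on the full action profile of the remaining players, calibrate the bet so that it is neutral when everyone plays $\vec{\sigma}$, and impose a large penalty on the others when the bet hits---is precisely the paper's idea. However, several concrete choices you make diverge from the paper's construction in ways that create genuine gaps.

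\textbf{Bets as add-ons versus replacements.} In the paper, the new actions are pairs $(a_i,\vec a_{-i})$: player $i$ still plays an ordinary action $a_i\in A_i$ and \emph{additionally} places a bet on $\vec a_{-i}$. The utility then decomposes cleanly as $u_i^{\vec\sigma}=u_i(\text{base})+w_i(\text{own bet})+\ell_i(\text{others' bets})$. In your construction the bet $b_{\vec a}$ replaces the action, so it is unclear what ``action'' a betting player contributes when \emph{another} player's bet is being evaluated, and the base-game term $u_i$ disappears from the accounting. This is not a cosmetic issue: the paper's entire proof runs through the three-term decomposition.

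\textbf{Calibration.} The paper uses the simple rule ``win $1-p^{\vec\sigma}_{(i,t_i,\vec a_{-i})}$, lose $-p^{\vec\sigma}_{(i,t_i,\vec a_{-i})}$'', so the expected bet payoff against any profile $\vec\tau$ is exactly $p^{\vec\tau}-p^{\vec\sigma}$, well-defined even when $p^{\vec\sigma}=0$. Your calibration $pR+(1-p)(-M)=u_i(t_i,\vec\sigma)$ blows up when $p^{\vec\sigma}_{(i,t_i,\vec a_{-i})}=0$, and precisely those profiles may be the only ones whose probability strictly increases under a deviation. Moreover, because $R$ can be of order $M/p$, a single correct bet can inject arbitrarily large positive utility into the bettor, which undermines your welfare argument in part~(b): the $-M$ penalties on the others need not dominate the $+R$ reward to the bettor when you sum over $S$.

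\textbf{The missing step you identify is exactly what the paper supplies.} The paper does \emph{not} argue part~(a) by exhibiting a specific profitable further deviation, nor does it split part~(b) into ``bets used / no bets used''. Instead it proves a single welfare inequality that yields both parts at once. Assuming (b1), i.e.\ that $(\vec\sigma_{-S},\vec\tau_S)$ is a Nash equilibrium for the members of $S$, the paper bounds, for each $i\in S$ and type $t_i$, the base-utility gain by $\Delta_{(i,t_i)}\cdot M$ and the bet winnings by $\Delta_{(i,t_i)}$ (where $\Delta_{(i,t_i)}$ is the $\ell^1$ distance between the conditional distributions of $\vec a_{-i}$ under $\vec\tau$ and $\vec\sigma$), while each of the $|S|-1$ other members, betting optimally, inflicts on $i$ an expected penalty of at least $\Delta_{(j,t_j)}C/(2|A_{-j}|)$. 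Summing over $i\in S$ and choosing $C=2|A|(M+1)$ makes the penalty term dominate, giving $\sum_{i\in S}u_i^{\vec\sigma}(\vec\sigma_{-S},\vec\tau_S)\le\sum_{i\in S}u_i(\vec\sigma)$. This simultaneously contradicts (a1) for part~(a) and is literally the quasi Pareto statement for part~(b) (take $S=P$, or any $|S|\ge 2$). Your proposal gestures at this aggregation but does not carry it out; the paper's three lemmas (bounding the base-utility shift, the bet winnings, and the inflicted penalties, each in terms of $\Delta$) are the quantitative ingredients you are missing.
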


The only two differences between Theorem~\ref{thm:main} and Theorem~\ref{thm:main2} are the size of the extension and the fact that, in Theorem~\ref{thm:main2}, the resulting equilibrium quasi Pareto-dominates all other Nash equilibria of $\Gamma^{\vec{\sigma}}$ instead of Pareto-dominating. 
Note that Theorem~\ref{thm:main} and Theorem~\ref{thm:main2} do not require that $\Gamma$ is finite. In fact, as shown in Section~\ref{sec:infinite}, the statements still hold when the action sets are uncountable, as long as $\Gamma$ is bounded.

\section{Sketch of the Proofs}\label{sec:sketch}

As in the \emph{three pirates} example from the introduction, the idea behind 
the construction of $\Gamma^{\sigma} = (P, A^{\vec{\sigma}}, U^{\vec{\sigma}})$ is that players can play special actions that reward them if another player goes off the equilibrium path. This construction can be very simple for pure Nash equilibria: suppose that $\vec{\sigma}$ is a pure Nash equilibrium where each player $i$ plays some action $a_i$ with probability 1. Then, we can set $A^{\vec{\sigma}}_i = A_i \cup \{\alpha \}$ and define the payoffs as follows when some player $i$ plays $\alpha$: 
\begin{itemize}
    \item If another player $j$ plays $\alpha$, all players $k$ that didn't play $a_k$ get $m-1$ utility (including those who played $\alpha$), where $m$ is the minimum utility of the game. All other players get the same utility as in equilibrium.
    \item If no other player plays $\alpha$, 
    \begin{itemize}
        \item If all other players $j$ play $a_j$, they all get the same utility as in equilibrium and $i$ gets $m-1$.
        \item Otherwise, $i$ gets the maximum possible utility plus $1$, all players $j$ that played $a_j$ get the same utility as in equilibrium, and all other players get $m-1$. 
    \end{itemize}
\end{itemize}

It is straightforward to check that $\Gamma^{\vec{\sigma}}$ satisfies the conditions of Theorem~\ref{thm:main}: First, no player can increase its utility by defecting from $\vec{\sigma}$. Second, if a coalition defects from $\vec{\sigma}$ and they all increase their utility, it means that none of them is playing $\alpha$. However, this implies that a player can further increase its utility by switching its action to $\alpha$. Lastly, an analogous reasoning shows that $\vec{\sigma}$ is Pareto-optimal: if another strategy profile $\vec{\tau}$ gives a better utility than $\vec{\sigma}$ to all players, it means that none of them is playing $\alpha$. Then, again, this would not be a Nash equilibrium since a player can increase its utility by playing $\alpha$ instead.

If we consider mixed Nash equilibria, the construction becomes more complicated. For instance, there could be Nash equilibria where all actions have non-zero chance of being played, as in the following example. 

\begin{example}
Consider a normal-form game $\Gamma = (P, A, U)$ with $P = \{1,2,\ldots, n\}$, $A_i = \{0,1\}$ for all $i \in P$, and such that $$u_i(a_1, \ldots, a_n) = 
\left\{
\begin{array}{ll}
1 & \mbox{if } \#a_i(a_1, \ldots, a_n) \ge \#(1-a_i)(a_1, \ldots, a_n) \\
0 & \mbox{otherwise,}
\end{array}
\right.
$$
where $\#a_i(a_1, \ldots, a_n)$ denotes the number of players who played action $a_i$. Denote by $\vec{\sigma}$ the strategy profile in which each player chooses its action uniformly at random.
\end{example}

Intuitively, $i$ gets $1$ utility if its action is played by the majority of the players and $i$ gets $0$ utility otherwise. Clearly, if all other players randomize their action, it doesn't matter what action $i$ chooses since both give the same utility in expectation. Therefore, randomizing its own action is also a best response for $i$, which means that $\vec{\sigma}$ is a Nash equilibrium. However, $\vec{\sigma}$ is not semi-strong: if two or more players collude, it is in their interest to vote on the same value. For instance, all of them playing $0$ or all of them playing $1$ strictly increases their expected utility and no agent in the coalition can further increase it by playing anything else. 

In order to create an extension of $\Gamma$ such that $\vec{\sigma}$ is a semi-strong Nash equilibrium we need a different approach than the one given for pure Nash equilibria. The main issue is that it is not easy to tell if a player $i$ is defecting from the proposed strategy or not since, regardless of what $i$ plays, its action could be the result of playing $\sigma_i$. They key idea is that, in equilibrium, $i$ must play $0$ and $1$ with $\frac{1}{2}$ probability each. By contrast, if $i$ defects, one of these actions must be played with probability greater than $\frac{1}{2}$. We can make use of this fact as follows. Suppose that, for each player $i$ and each action $a_i \in \{0,1\}$, we give each player $j$ the possibility of \emph{betting} that $i$ will play $a_i$. If $i$ plays $a_i$, then $j$ gets $x$ utility and $i$ loses $x'$ utility. Otherwise, $j$ gets $-x$ utility. Clearly, if $i$ follows ${\sigma}_i$, there is no incentive for $j$ to place such a bet. However, if $i$ defects, betting on one of $i$'s actions gives $j$ a strictly positive expected utility. We can adjust the values of $x$ and $x'$ such that it is never in the interest of $i$ to collude and defect. The construction given in Section~\ref{sec:construction} follows a similar approach, except for the fact that the utility won by winning a bet and the utility lost by losing a bet may be different. More precisely, the bets are calibrated in such a way that, if $i$ plays some action $a$ with the same probability as in $\sigma_i$, the expected payoff of betting on $(i,a)$ is $0$, while if $i$ plays $a$ with more probability than in $\sigma_i$ the expected payoff of betting on $(i,a)$ is strictly positive.

The construction of $\Gamma^{\vec{\sigma}}$ for Bayesian games is more complicated. The main issue is that, in normal-form games, whenever a player $i$ defects from $\sigma_i$ to a different strategy $\tau_i$, it is guaranteed that there exists an action that is more likely to be played with $\tau_i$ than with $\sigma_i$. However, this is not true for Bayesian games. For example, consider a game such that $T_i = A_i = \{0,1\}$ for all players $i$, and such that all types are sampled independently and uniformly from $\{0,1\}$. If $\sigma_i$ is the strategy in which player $i$ plays whatever its type is and $\tau_i$ is the strategy in which it plays the opposite of its type, it is indistinguishable to the other players if $i$ is playing $\sigma_i$ or if $i$ is playing $\tau_i$. In particular, we cannot design bets that give a strictly positive expected utility when $i$ plays $\tau_i$ and simultaneously give $0$ expected utility when $i$ plays $\sigma_i$. The key observation is that, if $\vec{\tau}_S$ is a profitable deviation for some coalition $S$ and types are independent, each player $i \in S$ must be able to tell the strategy $\tau_j$ of at least one other player $j \in S$. Otherwise, $i$ would get the same utility in $(\vec{\sigma}_{-i}, \tau_i)$ (see Proposition~\ref{prop:distinguishable1}). This observation implies that the construction used for normal-form games can be easily generalized to also cover Bayesian games with independent types.

Unfortunately, the construction used for normal-form games does not quite work in general for Bayesian games, as the following example shows.

\begin{example}\label{example:indistinguishable}
Consider a Bayesian game $\Gamma = (P, T, q, A, U)$ such that $P = \{1,2,3\}$, $T_i = A_i = \{0,1\}$ for all $i \in P$, and $q$ is the uniform distribution over the subset of $T$ such that the sum of coordinates is odd (i.e., $\{(t_1,t_2,t_3) \in T \mid t_1 + t_2 + t_3 \equiv 1 \bmod 2\}$). Given an action profile $(a_1, a_2, a_3)$, all players get $1$ utility if $a_1+a_2+a_3$ is odd. Otherwise, they all get $0$ utility.

Consider two strategy profiles $\vec{\sigma}$ and $\vec{\tau}$. In $\vec{\sigma}$, each player independently plays $0$ and $1$ with equal probability. In $\vec{\tau}$, each player plays its own type.
\end{example}

In Example~\ref{example:indistinguishable}, it is straightforward to check that $\vec{\sigma}$ and $\vec{\tau}$ are both Nash equilibria where each player plays $0$ and $1$ with equal probability. However, players get $1/2$ expected utility with $\vec{\sigma}$ and $1$ expected utility with $\vec{\tau}$. The main difference between the strategy profiles is that, even though each player has $1/2$ chance of having type $0$ and $1/2$ chance of having type $1$, their types are correlated in such a way that their sum is always an odd number. Moreover, we can easily check that players' types are \textbf{pairwise independent}. This means that the probability that a player $i$ plays a certain action $a \in \{0,1\}$ with $\tau_i$ is exactly the same as the probability that $i$ plays $a$ with $\sigma_i$, even if we condition the probability on another player $j$'s type. Therefore, $j$ cannot guess $i$'s action with $\vec{\tau}$ any better than with $\vec{\sigma}$, which implies that the construction used for normal-form games fails in this case. 

In order to deal with situations as the one presented in Example~\ref{example:indistinguishable}, we modify the construction $\Gamma^{\vec{\sigma}}$ used for normal form games in such a way that players can bet on action profiles played by other players instead of individual actions. It can be shown that if bets are calibrated in a similar fashion, $\Gamma^{\vec{\sigma}}$ satisfies the properties of Theorem~\ref{thm:main2}.

To extend the proofs of Theorems~\ref{thm:main} and \ref{thm:main2} to infinite games, we need a slightly different construction, as shown by the following example.

\begin{example}\label{example:infinite}
Consider a normal-form game $\Gamma = (P, A, U)$ with $P = \{1,2\}$, $A_1 = A_2 = \mathbb{N}$, and $$u_1(a_1, a_2) = u_2(a_1, a_2) = \left\{
\begin{array}{ll}
0 & \mbox{ if } a_1 = 0 \mbox{ or } a_2 = 0\\
1 & \mbox{ otherwise.}
\end{array}
\right.$$

Denote by $\vec{\sigma}$ the strategy profile in which each player plays $0$ with probability 1, and by $(\vec{\tau})_N$ the strategy profile in which both players sample their action uniformly at random from $\{1,2, \ldots, N\}$.
\end{example}

It is easy to check that $\vec{\sigma}$ and $(\vec{\tau})_N$ are Nash equilibria of $\Gamma$. Suppose that we construct $\Gamma^{\vec{\sigma}}$ in the same way as in normal-form games and have each player lose $C$ utility whenever the other player guesses its action correctly. By playing $\vec{\sigma}$, each player gets $0$ expected utility. However, if they play $(\vec{\tau})_N$, each player has at most $1/N$ chance of guessing the other player's action correctly. Therefore, regardless of how they place their bets, each player can get at least $1 - \frac{C}{N}$ expected utility in $\Gamma^{\vec{\sigma}}$. If $N$ is large enough, this value is larger than the utility that they would get with $\vec{\sigma}$. 

The main issue presented in Example~\ref{example:infinite} is that players can spread their actions really thin, avoiding this way being caught by other players. Even though this doesn't always lead to a better equilibrium (e.g., it may happen that no matter how player $i$ spreads her actions, it is always better for $i$ to spread them even thinner), this situation is clearly non-desirable. In order to avoid it, the construction in Section~\ref{sec:infinite} allows players to bet on arbitrary sets of actions (or arbitrary sets of action profiles when dealing with Bayesian games) instead of just individual actions. Again, these bets are calibrated in such a way that no player benefits from betting if all other players play their part of $\vec{\sigma}$. However, if a player $i$ defects from $\vec{\sigma}$, betting on the set of actions that $i$ is more likely to play after defecting always gives a strictly positive payoff. It can be shown that the rewards and punishments from betting can be calibrated in such a way that it is always better to follow the proposed equilibrium $\vec{\sigma}$.

\section{Proof of Theorem~\ref{thm:main} for Finite Games}\label{sec:proof-thm1}

In this section we prove Theorem~\ref{thm:main} for the case in which $\Gamma$ is finite. We extend this proof for infinite games in Section~\ref{sec:infinite}.

In the following sections, we prove Theorem~\ref{thm:main} for finite normal-form games. Later, in Section~\ref{sec:bayesian-construction}, we extend the proof to Bayesian games with independent types.

\subsection{Construction of $\Gamma^{\sigma}$}\label{sec:construction}


In this section we give the precise construction of $\Gamma^{\vec{\sigma}}$. Define $$A_i^{\vec{\sigma}} := A_i \cup \left(A_i
 \times \bigcup_{j \in P} \left(\{j\} \times A_j\right)\right).$$ Elements of $A_i^{\vec{\sigma}}$ are either actions of $A_i$ or a tuple $(a_i, j, a_j)$ that indicates that $i$ is playing action $a_i$ but is also betting that player $j$ is playing action $a_j$. Suppose that players play actions $(a_1, b_1), (a_2, b_2), \ldots, (a_n, b_n)$ where, for convenience, we use the notation $b_i := \bot$ to denote that player $i$ placed no bet (i.e., if $i$ played an action in $A_i$). The utility $u^{\vec{\sigma}}_i$ of player $i$ is given by
 $$\begin{array}{lll}
 u^{\vec{\sigma}}_i((a_1, b_1), (a_2, b_2), \ldots, (a_n, b_n)) &  :=  & u_i(a_1, a_2, \ldots, a_n) \\
  & + & w_i^{\vec{\sigma}}((a_1, b_1), (a_2, b_2), \ldots, (a_n, b_n))\\
  & + & \ell_i^{\vec{\sigma}}((a_1, b_1), (a_2, b_2), \ldots, (a_n, b_n)),
  \end{array}
  $$

where $w_i$ is the additional utility that $i$ gets by placing a bet on another player's action, and $\ell_i$ is the additional (negative) utility that $i$ gets if another player bets on $i$'s action and wins. If $p_{(j, a_j)}^{\vec{\sigma}}$ is the probability that strategy $\sigma_j$ assigns to action $a_j$, then $$w_i^{\vec{\sigma}}((a_1, b_1), (a_2, b_2), \ldots, (a_n, b_n)) := 
\left\{
\begin{array}{ll}
0 & \mbox{if } b_i = \bot\\
1 - p_{(j, a_j)}^{\vec{\sigma}} & \mbox{if } b_i \mbox{ is of the form } (j, a_j)\\
- p_{(j, a_j)}^{\vec{\sigma}} & \mbox{if } b_i \mbox{ is of the form } (j, a) \mbox{ with } a \not= a_j.\\
\end{array}
\right.$$
It is important to note that, if $j$ does not defect and $i$ bets on $a_j$, then its expected additional utility is given by $p_{(j, a_j)}^{\vec{\sigma}}\left(1 - p_{(j, a_j)}^{\vec{\sigma}}\right) + \left(1 - p_{(j, a_j)}^{\vec{\sigma}}\right)\left(- p_{(j, a_j)}^{\vec{\sigma}}\right) = 0$.

Finally, we define $\ell_i^{\vec{\sigma}} := \sum_{j \in P} \ell_{(i,j)}^{\vec{\sigma}}$, where $$
  \ell_{(i,j)}^{\vec{\sigma}}((a_1, b_1), (a_2, b_2), \ldots, (a_n, b_n)) := 
\left\{
\begin{array}{ll}
0 & \mbox{if } b_j = \bot \mbox{ or } b_j \mbox{ is of the form } (i, a) \mbox{ with } a \not= a_i\\
-C & \mbox{otherwise.}
\end{array}\right.$$

The constant $C$ used in the previous definition is simply a large value that serves as a deterrent for players to defect from $\vec{\sigma}$: if a player $i$ follows a different strategy $\tau_i$, then there exists some action $a \in A_i$ such that  $a$ is played with strictly more probability in $\tau_i$ than in $\sigma_i$. If this is the case, because of the way in which we defined $w_j$, it is in the interest of other players $j$ to bet on $(i, a_i)$, which decreases $i$'s overall utility. In Section~\ref{sec:proof}, we show that we can choose a value for $C$ so that conditions (a) and (b) of Theorem~\ref{thm:main} are satisfied.

\subsection{Proof of Correctness}\label{sec:proof}

\subsubsection{Part (a)}

We start by showing that $\vec{\sigma}$ is a semi-strong Nash equilibrium if $\Gamma^{\vec{\sigma}}$. First, as observed in the previous section, it is easy to check that our construction of $\Gamma^{\vec{\sigma}}$ satisfies the following property:

\begin{lemma}\label{lemma:bets-basic}
Given a strategy profile $\vec{\tau}$ for $\Gamma^{\vec{\sigma}}$, a player $j \in P$, and an action $a_j \in A_j$, let $p^{\vec{\tau}}_{(j,a_j)}$ denote the probability that $j$ plays an action in $A_j^{\vec{\sigma}}$ such that its first component is $a_j$. Let $x_{(j,a_j)}^{\vec{\tau}}$ denote the expected additional utility of any other player when betting that $j$ plays action $a_j$. Then
$$x_{(j,a_j)}^{\vec{\tau}} = p^{\vec{\tau}}_{(j,a_j)} - p^{\vec{\sigma}}_{(j,a_j)}$$
\end{lemma}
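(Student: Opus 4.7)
The plan is to prove the lemma by a direct calculation from the definition of $w_i^{\vec{\sigma}}$ given in Section~\ref{sec:construction}, conditioning on the outcome of player $j$'s action under $\vec{\tau}$.

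Fix a player $i \neq j$ and consider the contribution to $i$'s utility coming from placing the bet $(j, a_j)$. According to the definition of $w_i^{\vec{\sigma}}$, only the first component of $j$'s action matters when evaluating this bet: if that first component equals $a_j$, the bet contributes $1 - p^{\vec{\sigma}}_{(j,a_j)}$ to $i$'s utility; otherwise it contributes $-p^{\vec{\sigma}}_{(j,a_j)}$. Note that the contribution does not depend at all on what other players play or on what $j$ bets on, so we may indeed speak of the expectation as a function solely of $j$'s marginal under $\vec{\tau}$.

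By definition of $p^{\vec{\tau}}_{(j,a_j)}$, the first event has probability $p^{\vec{\tau}}_{(j,a_j)}$ under $\vec{\tau}$ and the second has probability $1 - p^{\vec{\tau}}_{(j,a_j)}$. Thus
$$x_{(j,a_j)}^{\vec{\tau}} = p^{\vec{\tau}}_{(j,a_j)}\bigl(1 - p^{\vec{\sigma}}_{(j,a_j)}\bigr) + \bigl(1 - p^{\vec{\tau}}_{(j,a_j)}\bigr)\bigl(-p^{\vec{\sigma}}_{(j,a_j)}\bigr).$$
Expanding, the two cross terms $\pm\, p^{\vec{\tau}}_{(j,a_j)} p^{\vec{\sigma}}_{(j,a_j)}$ cancel, leaving $p^{\vec{\tau}}_{(j,a_j)} - p^{\vec{\sigma}}_{(j,a_j)}$, as desired.

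There is no real obstacle here; the only point worth emphasising in the write-up is the observation that the payoff of the bet depends only on the first coordinate of $j$'s action (so bets placed simultaneously by $j$ or by other players do not interact with $i$'s bet), which is why the expectation factors through the marginal $p^{\vec{\tau}}_{(j,a_j)}$. As a sanity check, setting $\vec{\tau} = \vec{\sigma}$ recovers the zero expected payoff already noted in Section~\ref{sec:construction}.
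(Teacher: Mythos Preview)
Your proof is correct and follows essentially the same approach as the paper: a direct computation of the expectation by conditioning on whether the first component of $j$'s action equals $a_j$, yielding the identical one-line calculation. Your additional remark that the bet payoff depends only on $j$'s first coordinate is a helpful clarification but does not change the argument.
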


\begin{proof}
The expected additional utility can be computed as follows:
$$x_{(j,a_j)}^{\vec{\tau}} = p^{\vec{\tau}}_{(j,a_j)}\left(1 - p^{\vec{\sigma}}_{(j,a_j)}\right) + \left(1 - p^{\vec{\tau}}_{(j,a_j)}\right)\left(-p^{\vec{\sigma}}_{(j,a_j)}\right) =  p^{\vec{\tau}}_{(j,a_j)} - p^{\vec{\sigma}}_{(j,a_j)}.$$
\end{proof}

With this, it is straightforward to check that, if $C \ge 1$, $\vec{\sigma}$ is a (standard) Nash equilibrium of $\Gamma^{\vec{\sigma}}$: Fix a player $i$ and assume all other players play $\vec{\sigma}_{-i}$. Since $\vec{\sigma}$ is a Nash equilibrium of $\Gamma$, regardless of the bets $i$ places, it is a best response for $i$ to distribute the first component of its action in the same way as $\sigma_i$. It just remains to check that $i$ gets no utility from placing bets. By Lemma~\ref{lemma:bets-basic}, $i$ gets no utility from betting on other players. Moreover, if $C \ge 1$, $i$ gets negative expected utility by betting on itself. This shows that $\sigma_i$ is indeed a best response for $i$.

Now suppose that $\vec{\sigma}$ does not satisfy the conditions of 
Definition~\ref{def:semi-strong}. This would imply that there exists a 
coalition $S \subseteq P$ and a strategy profile $\vec{\tau}_S$ for players
in $S$ such that (a1) all players $i \in S$ get a strictly better expected 
utility than $u_i(\vec{\sigma})$, and (b1) no player $i \in S$ is better off 
defecting from $\vec{\tau}_S$, assuming that the rest of the players play 
$(\vec{\sigma}_{-S}, \vec{\tau}_{S \setminus\{i\}})$. Given a strategy 
$\tau_i$ for some player $i \in P$ in $\Gamma^{\vec{\sigma}}$, denote by 
$\tilde{\tau}_j$ the strategy for $i$ in $\Gamma$ obtained by projecting the
actions of $A_j^{\vec{\sigma}}$ onto their first component (i.e., by getting 
rid of the bets). Then, we have the following:

\begin{proposition}\label{prop:different-strat}
Let $M$ be the difference between the maximum and minimum utility of $\Gamma$. If $S$ and $\vec{\tau}_S$ satisfy (a1), there exist at least one player $i \in S$ such that $\tilde{\tau}_i \not = \sigma_i$.
\end{proposition}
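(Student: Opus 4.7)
The plan is to prove the contrapositive: assuming $\tilde{\tau}_j = \sigma_j$ for every $j \in S$, I will show that no player $i \in S$ strictly improves over $u_i(\vec{\sigma})$, which contradicts (a1). Fix any $i \in S$ and decompose
$$u_i^{\vec{\sigma}}(\vec{\sigma}_{-S}, \vec{\tau}_S) = \mathbb{E}[u_i] + \mathbb{E}[w_i^{\vec{\sigma}}] + \mathbb{E}[\ell_i^{\vec{\sigma}}]$$
according to the definition of the utility in $\Gamma^{\vec{\sigma}}$. I would then bound each of the three terms separately.

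For the base-game term, I would observe that every player $j \in P$ plays an action in $A_j$ distributed exactly as $\sigma_j$: for $j \in P \setminus S$ this is immediate from the definition of game extensions (noting the convention that $\sigma_j$ assigns probability $0$ to the new actions in $A_j^{\vec{\sigma}} \setminus A_j$), and for $j \in S$ it is precisely the hypothesis $\tilde{\tau}_j = \sigma_j$. Since players in a normal-form game randomize independently, the induced distribution on base-game action profiles matches that of $\vec{\sigma}$, so $\mathbb{E}[u_i] = u_i(\vec{\sigma})$.

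For the winning-bet term, I would apply Lemma~\ref{lemma:bets-basic}: conditional on $i$ placing any specific bet $(j, a_j)$, the expected value of $w_i^{\vec{\sigma}}$ equals $p^{(\vec{\sigma}_{-S}, \vec{\tau}_S)}_{(j, a_j)} - p^{\vec{\sigma}}_{(j, a_j)}$, which vanishes by the previous paragraph. Averaging over $i$'s (possibly random) choice of bet yields $\mathbb{E}[w_i^{\vec{\sigma}}] = 0$. Finally, $\ell_i^{\vec{\sigma}}$ is by construction a sum of terms each equal to $0$ or $-C$, so $\mathbb{E}[\ell_i^{\vec{\sigma}}] \le 0$. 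Combining the three bounds gives $u_i^{\vec{\sigma}}(\vec{\sigma}_{-S}, \vec{\tau}_S) \le u_i(\vec{\sigma})$, contradicting (a1).

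The argument is essentially bookkeeping and I do not anticipate a real obstacle; the only subtle point is recognizing that the independence of players' randomizations in a normal-form game (including over their bet choices) is what lets the expectation pass through Lemma~\ref{lemma:bets-basic}, and that $\ell_i^{\vec{\sigma}}$ can only hurt $i$, never help.
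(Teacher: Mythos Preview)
Your proposal follows essentially the same contrapositive route as the paper's proof, which is a two-line argument: if every $\tilde{\tau}_j = \sigma_j$, then by Lemma~\ref{lemma:bets-basic} no bet yields positive expected value, so each player's utility is at most $u_i(\vec{\sigma})$, contradicting (a1). Your decomposition into base-game, $w_i^{\vec{\sigma}}$, and $\ell_i^{\vec{\sigma}}$ terms is exactly the right bookkeeping.

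There is one small gap to close. Your claim that $\mathbb{E}[w_i^{\vec{\sigma}}] = 0$ invokes Lemma~\ref{lemma:bets-basic}, but that lemma computes the expected payoff when \emph{another} player bets on $j$; it uses independence between the bettor's choice and $j$'s action. The construction allows $i$ to bet on itself, and in that case $i$'s first-component action and its bet may be correlated (for instance, $i$ could always play $(a,(i,a))$, winning the self-bet with probability $1$), so the conditional expectation of $w_i^{\vec{\sigma}}$ given a self-bet need not vanish. The fix is immediate: a winning self-bet simultaneously triggers $\ell_{(i,i)}^{\vec{\sigma}} = -C$, and since $w_i^{\vec{\sigma}} \le 1$ and $C \ge 1$ the combined contribution of a self-bet to $w_i^{\vec{\sigma}} + \ell_i^{\vec{\sigma}}$ is still nonpositive. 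Just handle self-bets as a separate case rather than folding them into the Lemma~\ref{lemma:bets-basic} computation. The paper addresses this point in the paragraph immediately preceding the proposition.
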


\begin{proof}

If all players $i \in S$ satisfy $\tilde{\tau}_i = \sigma_i$. Then, by Lemma~\ref{lemma:bets-basic}, none of the players in $S$ can increase their utility by placing a bet. This would imply that their expected utilities are at most the ones they would get when playing $\vec{\sigma}_S$, which contradicts the (a1) assumption.

\commentout{

Let $i$ be a player in $S$ such that all other players $j$ satisfy $\tilde{\tau}_j = \sigma_j$. Then, by Lemma~\ref{lemma:bets-basic}, $i$ cannot increase its utility by placing a bet, which implies that $$u_i^{\vec{\sigma}}(\vec{\sigma}_{-S}, \vec{\tau}_S) \le u_i(\vec{\sigma}_{-S}, \tilde{\tau}_S).$$
Moreover, since $\tilde{\tau}_j = \sigma_j$ for all $j\in S \setminus \{i\}$, we have that $$u_i(\vec{\sigma}_{-S}, \tilde{\tau}_S) = u_i(\vec{\sigma}_{-i}, \tilde{\tau}_i).$$
Since $\vec{\sigma}$ is a Nash equilibrium of $\Gamma$, this implies that $$u_i^{\vec{\sigma}}(\vec{\sigma}_{-S}, \vec{\tau}_S) \le u_i^{\vec{\sigma}}(\vec{\sigma}),$$ which contradicts (a1). This means that, for each player $i \in S$, there must exist another player $j \in S$ such that $\tilde{\tau}_j \not = \sigma_j$, which implies that there exist at least two players $i \in S$ such that $\tilde{\tau}_i \not = \sigma_i$.

If all players $i \in S$ satisfy $\tilde{\tau}_i = \sigma_i$. Then, by Lemma~\ref{lemma:bets-basic}, none of the players in $S$ can increase their utility by placing a bet. This would imply that their expected utilities are at most the ones they would get when playing $\vec{\sigma}_S$, which contradicts the (a1) assumption.
Suppose that only one player $i \in S$ satisfies $\tilde{\tau}_i \not = \sigma_i$.
Then, $i$ cannot get any utility by betting

Since $S$ and $\vec{\tau}_S$ satisfy (b1), by Lemma~\ref{lemma:bets-basic}, all other players $j \in S$ must bet on $i$ with probability 1. This means that there exists an action $a_i \in A_i$ such that the probability that at least another player bets on $(i,a_i)$ is at least $\frac{1}{|A_i|}$. Let $a_i'$ be any action in $A_i$ such that the probability of playing $a_i'$ with $\tau_i$ is strictly less than the probability of playing $a_i'$ with $\sigma_i$. It is guaranteed that no player will ever bet on $(i, a_i')$ since, by Lemma~\ref{lemma:bets-basic}, betting on it gives a strictly negative expected utility, and we are assuming that $S$ and $\vec{\tau}_S$ satisfy (b1). Consider a strategy $\tau_i'$ in which $i$ plays in the same way as in $\tau_i$, except that $i$ plays $a_i'$ whenever it would play $a_i$ in $\tau_i$. 

Then, we have that $$
\begin{array}{lll}
u_i^{\vec{\sigma}}(\vec{\sigma}_{-S}, \vec{\tau}_{S \setminus \{i\}}, \tau'_i) - u_i^{\vec{\sigma}}(\vec{\sigma}_{-S}, \vec{\tau}_{S})&
 =  & u_i(\vec{\sigma}_{-S}, \tilde{\tau}_{S \setminus \{i\}}, \tilde{\tau}'_i) - u_i(\vec{\sigma}_{-S}, \tilde{\tau}_{S})\\
&  + & w_i^{\vec{\sigma}}(\vec{\sigma}_{-S}, \vec{\tau}_{S \setminus \{i\}}, \tau'_i) - w_i^{\vec{\sigma}}(\vec{\sigma}_{-S}, \vec{\tau}_{S})\\
 & + &  \ell_i^{\vec{\sigma}}(\vec{\sigma}_{-S}, \vec{\tau}_{S \setminus \{i\}}, \tau'_i) - \ell_i^{\vec{\sigma}}(\vec{\sigma}_{-S}, \vec{\tau}_{S}).
\end{array}
$$

Let $p_{(i, a_i)}^{\tilde{\tau}_i}$ be the probability that $i$ plays action $a_i$ when following strategy $\tilde{\tau}_i$. Then, we can bound the first term in the sum by $$u_i(\vec{\sigma}_{-S}, \tilde{\tau}_{S \setminus \{i\}}, \tilde{\tau}'_i) - u_i(\vec{\sigma}_{-S}, \tilde{\tau}_{S}) \ge -p_{(i, a_i)}^{\tilde{\tau}_i}M$$

since the difference in utility by switching from $a_i$ to $a'_i$ can be at most $-M$. Since $\tau'_i$ bets exactly in the same way as $\tau_i$ does, we have that $w_i^{\vec{\sigma}}(\vec{\sigma}_{-S}, \vec{\tau}_{S \setminus \{i\}}, \tau'_i) - w_i^{\vec{\sigma}}(\vec{\sigma}_{-S}, \vec{\tau}_{S}) = 0$. Finally, since the probability that another player bets on $(i,a_i)$ is at least $\frac{1}{|A_i|}$, the following inequality holds: 
$$\ell_i^{\vec{\sigma}}(\vec{\sigma}_{-S}, \vec{\tau}_{S \setminus \{i\}}, \tau'_i) - \ell_i^{\vec{\sigma}}(\vec{\sigma}_{-S}, \vec{\tau}_{S}) \ge p_{(i, a_i)}^{\tilde{\tau}_i}\frac{C}{|A_i|}.$$

In the equation above, we are using the fact that the only difference between $\tau_i$ and $\tau'_i$ is that $i$ is switching from $a_i$ to another action $a'_i$ that no player ever bets on. If $C > M\cdot |A|_{max}$, then $$u_i^{\vec{\sigma}}(\vec{\sigma}_{-S}, \vec{\tau}_{S \setminus \{i\}}, \tau'_i) - u_i^{\vec{\sigma}}(\vec{\sigma}_{-S}, \vec{\tau}_{S}) \ge p_{(i, a_i)}^{\tilde{\tau}_i}\frac{C}{|A_i|}  - p_{(i, a_i)}^{\tilde{\tau}_i}M > 0.$$

This would imply that $i$ can increase its utility by switching from $\tau_i$ to $\tau'_i$, which contradicts our original assumption that $S$ and $\vec{\tau}_S$ satisfy (b1).
}
\end{proof}

Proposition~\ref{prop:different-strat} implies that if a coalition $S$ with strategy $\vec{\tau}_S$ satisfies (a1), there exist at least one player $i\in S$ and an action $a_i \in A_i$ such that $i$ is more likely to play $a_i$ in $\vec{\tau}_S$ than in $\vec{\sigma}_S$. By Lemma~\ref{lemma:bets-basic}, betting that $i$ is going to play $a_i$ gives a strictly positive utility to all other players. In particular, if $S$ and $\vec{\tau}_S$ satisfy (b1), all other players in $S$ must place bets on $i$ or other players in $S$ with probability $1$ since there exist at least one bet with strictly positive expected utility. Since there is a finite number of possible bets, we have the following result.

\begin{proposition}\label{prop:expected-loss}
Let $q_{(i,a_i)}^{(j, \tau_j)}$ be the probability that player $j$ bets on $(i,a_i)$ when playing $\tau_j$. Then, if $S$ and $\vec{\tau}_S$ satisfy (a1) and (b1), there exists a player $i \in S$ and an action $a_i \in A_i$ such that $$\sum_{j \in S} q_{(i,a_i)}^{(j, \tau_j)} \ge \frac{1}{|A|_{max}},$$
where $|A|_{max} = \max_i \{|A_i|\}$.
\end{proposition}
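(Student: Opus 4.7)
The plan is to identify a finite collection of bets that have strictly positive expected value under $(\vec{\sigma}_{-S}, \vec{\tau}_S)$, argue via condition (b1) that each $j \in S$ must place one of these bets with probability $1$, and then extract the desired $(i, a_i)$ by pigeonhole.

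First I would set up the candidate bets. Let $S^* := \{i \in S : \tilde{\tau}_i \neq \sigma_i\}$, which is non-empty by Proposition~\ref{prop:different-strat}. For each $i \in S^*$ some action $a_i \in A_i$ satisfies $p^{\vec{\tau}}_{(i, a_i)} > p^{\vec{\sigma}}_{(i, a_i)}$, so Lemma~\ref{lemma:bets-basic} gives a bet with strictly positive expected value; conversely, for $k \in P \setminus S^*$ we have $\tilde{\tau}_k = \sigma_k$ (this uses that players outside $S$ play $\vec{\sigma}_{-S}$), so bets on $k$'s actions have expected value exactly $0$. Set $\Delta^* := \max_{i \in S^*, a \in A_i}\bigl(p^{\vec{\tau}}_{(i, a)} - p^{\vec{\sigma}}_{(i, a)}\bigr) > 0$ and $B^* := \{(i, a) : i \in S^*,\, a \in A_i,\, p^{\vec{\tau}}_{(i, a)} - p^{\vec{\sigma}}_{(i, a)} = \Delta^*\}$, so that $|B^*| \le |S^*| \cdot |A|_{\max}$.

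Next I would exploit (b1) to pin down each $j \in S$. Since $j$'s betting choice does not affect the first-component distribution of actions, $j$'s best response must maximize the expected betting payoff. Taking $C$ large enough (consistent with the value chosen in Section~\ref{sec:construction}, e.g.\ $C > M\cdot|A|_{\max}$), self-bets are strictly dominated by any element of $B^*$ with $k \neq j$ because the $-C$ penalty from $\ell_{(j,j)}$ swamps the $w$-gain. Therefore, provided $B^*_{-j} := B^* \cap \{(k,a) : k \neq j\}$ is non-empty, $j$ must place a bet in $B^*_{-j}$ with probability $1$; in particular, $j$ never plays a plain action from $A_j$.

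Finally I would conclude by pigeonhole, splitting on $|S^*|$. If $|S^*| \ge 2$, then $B^*_{-j}$ is non-empty for every $j \in S$, so the total betting mass placed by $S$ on $B^*$ is $|S|$, and dividing by $|B^*| \le |S^*|\cdot|A|_{\max} \le |S|\cdot|A|_{\max}$ yields some $(i, a_i) \in B^*$ with $\sum_{j\in S} q^{(j,\tau_j)}_{(i,a_i)} \ge 1/|A|_{\max}$. The delicate sub-case is $|S^*| = 1$, say $S^* = \{i^*\}$: then $i^*$ has no positive-value non-self bet and may refrain from betting, which is the main obstacle. Here I would argue first that $|S| \ge 2$, because a singleton $S = \{i^*\}$ satisfying (a1) would be a profitable unilateral deviation from $\vec{\sigma}$ in $\Gamma^{\vec{\sigma}}$, contradicting the fact (established earlier in Section~\ref{sec:proof}) that $\vec{\sigma}$ is a standard Nash equilibrium; then each of the $|S| - 1 \ge 1$ players $j \in S \setminus \{i^*\}$ bets with probability $1$ on $B^* \subseteq \{i^*\} \times A_{i^*}$, so a pigeonhole over the at most $|A|_{\max}$ actions of $i^*$ again produces some $(i^*, a^*)$ with total betting probability at least $(|S|-1)/|A|_{\max} \ge 1/|A|_{\max}$.
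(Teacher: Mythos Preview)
Your overall strategy matches the paper's: split on whether one or several coalition members deviate, use (b1) to force betting, then pigeonhole. The paper simply argues that when $|S^*|\ge 2$ every $j\in S$ has \emph{some} strictly positive non-self bet, hence bets with probability~$1$ on a pair in $S\times A$, and pigeonholes over all $|S|\cdot|A|_{\max}$ such pairs.

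Your refinement to the set $B^*$ of globally optimal bets is where a step fails. The claim ``if $|S^*|\ge 2$ then $B^*_{-j}\neq\emptyset$ for every $j\in S$'' is not true: take $S^*=\{1,2\}$ with $\tilde{\tau}_1$ deviating by $0.4$ on some action and $\tilde{\tau}_2$ deviating by only $0.1$; then $\Delta^*=0.4$, $B^*\subseteq\{1\}\times A_1$, and $B^*_{-1}=\emptyset$. In that situation player~$1$'s best response is to bet on player~$2$, which has value $0.1<\Delta^*$ and lies outside $B^*$, so your mass count of $|S|$ on $B^*$ breaks. The cleanest fix is to drop the restriction to $B^*$ entirely and pigeonhole over all of $S\times\bigcup_i A_i$, exactly as the paper does; alternatively you can observe that whenever $B^*_{-j}=\emptyset$ the set $B^*$ is concentrated on the single player $j$, so $|B^*|\le|A|_{\max}$ and the remaining $|S|-1\ge 1$ players (for whom $B^*_{-k}=B^*$) already give mass at least $(|S|-1)/|A|_{\max}\ge 1/|A|_{\max}$ on some pair. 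Either repair completes the argument; the rest of your proof, including the handling of $|S^*|=1$ and the exclusion of self-bets via $C>1$, is correct.
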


\begin{proof}
By proposition~\ref{prop:different-strat}, there exists a player $i \in S$ such that $\tilde{\tau}_i \not = \sigma_i$. Because of (b1), if $i$ is the only player in $S$ such that $\tilde{\tau}_i \not = \sigma_i$, all other players in $S$ must bet on $i$ with probability $1$. Since $|S| \ge 2$, it follows that there exist an action $a \in A_i$ such that at least $\frac{1}{|A_i|}$ players bet on $(i,a_i)$ in expectation.

If there are at least two players $i \in S$ such that $\tilde{\tau}_i \not = \sigma_i$ and $S$ and $\vec{\tau}_S$ satisfy (a1) and (b1), then all players in $S$ must place bets on other players in $S$ with probability $1$. This means that, for a fixed $j \in S$, $$\sum_{i \in S, \ a_i \in A_i} q_{(i,a_i)}^{(j, \tau_j)} = 1.$$
If we add the equation above for all $j \in S$ we get that $$\sum_{i,j \in S, \ a_i \in A_i} q_{(i,a_i)}^{(j, \tau_j)} = |S|.$$
Rearranging the terms of the previous sum, we get $$\sum_{i \in S, \ a_i \in A_i} \left(\sum_{j \in S} q_{(i,a_i)}^{(j, \tau_j)} \right) = |S|. $$
Since the outer sum has at most $|S|\cdot|A|_{max}$ elements, there exists $i \in S$ and $a_i \in A_i$ such that $$\sum_{j \in S} q_{(i,a_i)}^{(j, \tau_j)} \ge \frac{|S|}{|S|\cdot|A|_{max}} = \frac{1}{|A|_{max}},$$ as desired.
\end{proof}

Proposition~\ref{prop:expected-loss} shows that,  if $S$ and $\vec{\tau}_S$ satisfy (a1) and (b1), there exists a player $i \in P$ and an action $a_i \in A_i$ such that at least a constant fraction of the players are betting on it in expectation. We next show that, if $C > \max\{M\cdot |A|_{max}, 1\}$, $i$ can strictly increase its utility by decreasing the probability of playing $a_i$. This would contradict the fact that $S$ and $\vec{\tau}_S$ satisfy (b1) and prove Theorem~\ref{thm:main}.

By Proposition~\ref{prop:different-strat}, the probability $p^{\vec{\tau}}_{(i,a_i)}$ that $i$ plays action $a_i$ with strategy $\tilde{\tau}_i$ must satisfy $p^{\vec{\tau}}_{(i,a_i)} > p^{\vec{\sigma}}_{(i,a_i)}$ since, otherwise, players could get a better utility by betting somewhere else. The existence of $a_i$ implies that there exists another action $a'_i \in A_i$ such that $p^{\vec{\tau}}_{(i,a'_i)} < p^{\vec{\sigma}}_{(i,a'_i)}$ (i.e., an action that is played with less probability in $\tilde{\tau}_i$ than in $\sigma_i$).
Since $S$ and $\vec{\tau}_S$ satisfy (b1), it is guaranteed that no player will ever bet on $(i, a_i')$ since, by Lemma~\ref{lemma:bets-basic}, betting on it gives a strictly negative expected utility. Consider a strategy $\tau_i'$ in which $i$ plays in the same way as in $\tau_i$, except that $i$ plays $a_i'$ whenever it would play $a_i$ in $\tau_i$. 

Then, we have that $$
\begin{array}{lll}
u_i^{\vec{\sigma}}(\vec{\sigma}_{-S}, \vec{\tau}_{S \setminus \{i\}}, \tau'_i) - u_i^{\vec{\sigma}}(\vec{\sigma}_{-S}, \vec{\tau}_{S})&
 =  & u_i(\vec{\sigma}_{-S}, \tilde{\tau}_{S \setminus \{i\}}, \tilde{\tau}'_i) - u_i(\vec{\sigma}_{-S}, \tilde{\tau}_{S})\\
&  + & w_i^{\vec{\sigma}}(\vec{\sigma}_{-S}, \vec{\tau}_{S \setminus \{i\}}, \tau'_i) - w_i^{\vec{\sigma}}(\vec{\sigma}_{-S}, \vec{\tau}_{S})\\
 & + &  \ell_i^{\vec{\sigma}}(\vec{\sigma}_{-S}, \vec{\tau}_{S \setminus \{i\}}, \tau'_i) - \ell_i^{\vec{\sigma}}(\vec{\sigma}_{-S}, \vec{\tau}_{S}).
\end{array}
$$

Let $p_{(i, a_i)}^{\tilde{\tau}_i}$ be the probability that $i$ plays action $a_i$ when following strategy $\tilde{\tau}_i$. Then, we can bound the first term in the sum by $$u_i(\vec{\sigma}_{-S}, \tilde{\tau}_{S \setminus \{i\}}, \tilde{\tau}'_i) - u_i(\vec{\sigma}_{-S}, \tilde{\tau}_{S}) \ge -p_{(i, a_i)}^{\tilde{\tau}_i}M$$

since the difference in utility by switching from $a_i$ to $a'_i$ can be at most $-M$. Since $\tau'_i$ bets exactly in the same way as $\tau_i$ does, we have that $w_i^{\vec{\sigma}}(\vec{\sigma}_{-S}, \vec{\tau}_{S \setminus \{i\}}, \tau'_i) - w_i^{\vec{\sigma}}(\vec{\sigma}_{-S}, \vec{\tau}_{S}) = 0$. Finally, since the probability that another player bets on $(i,a_i)$ is at least $\frac{1}{|A_i|}$, the following inequality holds: 
$$\ell_i^{\vec{\sigma}}(\vec{\sigma}_{-S}, \vec{\tau}_{S \setminus \{i\}}, \tau'_i) - \ell_i^{\vec{\sigma}}(\vec{\sigma}_{-S}, \vec{\tau}_{S}) \ge p_{(i, a_i)}^{\tilde{\tau}_i}\frac{C}{|A_i|}.$$

In the equation above, we are using the fact that the only difference between $\tau_i$ and $\tau'_i$ is that $i$ is switching from $a_i$ to another action $a'_i$ that no player ever bets on. If $C > M\cdot |A|_{max}$, then $$u_i^{\vec{\sigma}}(\vec{\sigma}_{-S}, \vec{\tau}_{S \setminus \{i\}}, \tau'_i) - u_i^{\vec{\sigma}}(\vec{\sigma}_{-S}, \vec{\tau}_{S}) \ge p_{(i, a_i)}^{\tilde{\tau}_i}\frac{C}{|A_i|}  - p_{(i, a_i)}^{\tilde{\tau}_i}M > 0.$$

This would imply that $i$ can increase its utility by switching from $\tau_i$ to $\tau'_i$, which contradicts our original assumption that $S$ and $\vec{\tau}_S$ satisfy (b1).

\commentout{
If $S$ and $\vec{\tau}_S$ satisfy (b1), by Lemma~\ref{lemma:bets-basic}, no player in $S$ ever bets on $(i,a'_i)$ since it reports a strictly negative utility. An analogous argument to the one used in the proof of Proposition~\ref{prop:different-strat} shows that, if $i$ switches to a strategy $\tau'_i$ that is identical to $\tau_i$ except that $i$ plays $a'_i$ whenever it would play $a_i$ with $\tau_i$, $i$ strictly increases its expected utility. This contradicts the fact that $S$ and $\vec{\tau}_S$ satisfy (b1), and completes the proof of Part (a).
In fact, let $M$ be the difference between the maximum and the minimum utility that a player can get in $\Gamma$ and let $C > \max\{M\cdot |A|_{max}, 1\}$. Let $i$ and $a_i$ be the player and action of Proposition~\ref{prop:expected-loss}. By Proposition~\ref{prop:different-strat}, the probability $p^{\vec{\tau}}_{(i,a_i)}$ that $i$ plays action $a_i$ with strategy $\tilde{\tau}_i$ must satisfy $p^{\vec{\tau}}_{(i,a_i)} > p^{\vec{\sigma}}_{(i,a_i)}$ since, otherwise, players could get a better utility by betting somewhere else. The existence of $a_i$ implies that there exists another action $a'_i \in A_i$ such that $p^{\vec{\tau}}_{(i,a'_i)} < p^{\vec{\sigma}}_{(i,a'_i)}$ (i.e., an action that is played with less probability in $\tilde{\tau}_i$ than in $\sigma_i$). If $S$ and $\vec{\tau}_S$ satisfy (b1), by Lemma~\ref{lemma:bets-basic}, no player in $S$ ever bets on $(i,a'_i)$ since it reports a strictly negative utility.

Consider a strategy $\tau'_i$ for player $i$ that is identical to $\tau_i$ except that, whenever $i$ would play an action of the form $(a_i,b)$ (with possibly $b = \bot$), $i$ changes its action to $(a'_i, b)$. Then, we have that $$
\begin{array}{lll}
u_i^{\vec{\sigma}}(\vec{\sigma}_{-S}, \vec{\tau}_{S \setminus \{i\}}, \tau'_i) - u_i^{\vec{\sigma}}(\vec{\sigma}_{-S}, \vec{\tau}_{S})&
 =  & u_i(\vec{\sigma}_{-S}, \tilde{\tau}_{S \setminus \{i\}}, \tilde{\tau}'_i) - u_i(\vec{\sigma}_{-S}, \tilde{\tau}_{S})\\
&  + & w_i^{\vec{\sigma}}(\vec{\sigma}_{-S}, \vec{\tau}_{S \setminus \{i\}}, \tau'_i) - w_i^{\vec{\sigma}}(\vec{\sigma}_{-S}, \vec{\tau}_{S})\\
 & + &  \ell_i^{\vec{\sigma}}(\vec{\sigma}_{-S}, \vec{\tau}_{S \setminus \{i\}}, \tau'_i) - \ell_i^{\vec{\sigma}}(\vec{\sigma}_{-S}, \vec{\tau}_{S}).
\end{array}
$$

Let $p_{(i, a_i)}^{\tilde{\tau}_i}$ be the probability that $i$ plays action $a_i$ when following strategy $\tilde{\tau}_i$. Then, we can bound the first term in the sum by $$u_i(\vec{\sigma}_{-S}, \tilde{\tau}_{S \setminus \{i\}}, \tilde{\tau}'_i) - u_i(\vec{\sigma}_{-S}, \tilde{\tau}_{S}) \ge -p_{(i, a_i)}^{\tilde{\tau}_i}M$$

since the difference in utility by switching from $a_i$ to $a'_i$ can be at most $-M$. Since $\tau'_i$ bets exactly in the same way as $\tau_i$ does, we have that $w_i^{\vec{\sigma}}(\vec{\sigma}_{-S}, \vec{\tau}_{S \setminus \{i\}}, \tau'_i) - w_i^{\vec{\sigma}}(\vec{\sigma}_{-S}, \vec{\tau}_{S}) = 0$. Finally, since by assumption at least $\frac{1}{|A|_{max}}$ of the players bet on $(i,a_i)$ in expectation, the following inequality holds: 
$$\ell_i^{\vec{\sigma}}(\vec{\sigma}_{-S}, \vec{\tau}_{S \setminus \{i\}}, \tau'_i) - \ell_i^{\vec{\sigma}}(\vec{\sigma}_{-S}, \vec{\tau}_{S}) \ge p_{(i, a_i)}^{\tilde{\tau}_i}\frac{C}{|A|_{max}}.$$

In the equation above, we are using the fact that the only difference between $\tau_i$ and $\tau'_i$ is that $i$ is switching from $a_i$ to another action $a'_i$ that no player ever bets on. If $C > M\cdot |A|_{max}$, then $$u_i^{\vec{\sigma}}(\vec{\sigma}_{-S}, \vec{\tau}_{S \setminus \{i\}}, \tau'_i) - u_i^{\vec{\sigma}}(\vec{\sigma}_{-S}, \vec{\tau}_{S}) \ge p_{(i, a_i)}^{\tilde{\tau}_i}\frac{C}{|A|_{max}}  - p_{(i, a_i)}^{\tilde{\tau}_i}M > 0,$$

which contradicts our original assumption that $\vec{\tau}_S$ satisfies (b1).
}

\subsubsection{Part (b)}\label{sec:normal-form-part-b}

The proof of part (b) follows from the following proposition.

\begin{proposition}\label{prop:eq-Nash}
If $\vec{\tau}$ is a Nash equilibrium of $\Gamma^{\sigma}$, then $\tilde{\tau}_i = \sigma_i$ for all $i \in P$.
\end{proposition}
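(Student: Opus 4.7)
My plan is to prove the contrapositive: assuming $\tilde{\tau}_i \neq \sigma_i$ for some $i \in P$, I will exhibit a profitable unilateral deviation and thereby conclude that $\vec{\tau}$ cannot be a Nash equilibrium. First, I would let $I = \{k \in P : \tilde{\tau}_k \neq \sigma_k\}$, which is nonempty. For each $k \in I$, since $\sum_{a} p^{\vec{\tau}}_{(k,a)} = \sum_{a} p^{\vec{\sigma}}_{(k,a)} = 1$, there must be actions $a, a' \in A_k$ with $p^{\vec{\tau}}_{(k,a)} > p^{\vec{\sigma}}_{(k,a)}$ and $p^{\vec{\tau}}_{(k,a')} < p^{\vec{\sigma}}_{(k,a')}$; by Lemma~\ref{lemma:bets-basic}, betting on $(k,a)$ yields strictly positive expected additional utility, while betting on $(k,a')$ yields strictly negative expected additional utility.

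The next key step is to exploit the separability of bets and first-component actions in $\Gamma^{\vec{\sigma}}$: after ruling out the self-bet option (which for $C > 1$ is dominated by any bet on another player, since the $\ell_{(j,j)} = -C$ penalty overwhelms any winnings from $w_j \le 1$), a player $j$'s expected utility decomposes so that the bet choice $b_j$ affects only the $w_j$ term while the first-component choice $a_j$ affects only $u_j + \ell_j$. From this I would conclude that in any Nash equilibrium $\vec{\tau}$ each player $j$ bets (with probability $1$) on some $(k,a)$ with $k \neq j$ maximizing $p^{\vec{\tau}}_{(k,a)} - p^{\vec{\sigma}}_{(k,a)}$, and since this maximum is strictly positive, the bet must land on some deviator in $I \setminus \{j\}$. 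A pigeonhole argument essentially identical to the one used to prove Proposition~\ref{prop:expected-loss} (splitting on whether $|I| = 1$ or $|I| \ge 2$) then produces $i^* \in I$ and $a^* \in A_{i^*}$ such that the expected total number of players betting on $(i^*, a^*)$ is at least $1/|A|_{max}$.

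Finally, using that $i^* \in I$, I would take $a' \in A_{i^*}$ with $p^{\vec{\tau}}_{(i^*, a')} < p^{\vec{\sigma}}_{(i^*, a')}$; no player bets on $(i^*, a')$ in Nash since that bet has strictly negative expected utility, so $i^*$'s expected $\ell$-loss when playing $a'$ is $0$, while its expected $\ell$-loss when playing $a^*$ is at least $C/|A|_{max}$. Switching the first component from $a^*$ to $a'$ (while keeping the bet fixed) changes $u_{i^*}$ by at most $M$ but saves at least $C/|A|_{max}$ in $\ell$-loss, giving a strict improvement whenever $C > M \cdot |A|_{max}$ (the constant already chosen in Part (a)). This forces $p^{\vec{\tau}}_{(i^*, a^*)} = 0$, contradicting $p^{\vec{\tau}}_{(i^*, a^*)} > p^{\vec{\sigma}}_{(i^*, a^*)} \ge 0$. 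The hard part will be cleanly formalizing the bet-action separability and verifying the pigeonhole bound in the edge case $|I| = 1$; once those are in hand, the deviation step just rehashes the end of the Part (a) argument applied to a single player rather than to a coalition.
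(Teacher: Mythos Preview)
Your proposal is correct and follows essentially the same route as the paper's own proof: assume some $\tilde{\tau}_i \neq \sigma_i$, invoke the pigeonhole argument of Proposition~\ref{prop:expected-loss} (with $S = P$) to find a pair $(i^*, a^*)$ that attracts at least $1/|A|_{\max}$ expected bets, and then reuse the end-of-Part-(a) deviation (swap $a^*$ for an under-played $a'$) to contradict the Nash property. The paper compresses all of this into three sentences by pointing back to those earlier arguments; your write-up simply unfolds them, and your explicit handling of self-bets and of the $|I| = 1$ edge case is more careful than the paper's terse ``analogous reasoning'' pointer.
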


Intuitively, this proposition says that all Nash equilibria of $\Gamma^{\vec{\sigma}}$ are essentially $\vec{\sigma}$ with some bets on top of it. Note that if $\tilde{\tau}_i = \sigma_i$ for all $i \in P$, $w_i$ gives exactly $0$ utility for all strategies of $i$. Since $\ell_i$ is a negative function, Proposition~\ref{prop:eq-Nash} implies that all Nash equilibria of $\Gamma^{\vec{\sigma}}$ are Pareto-dominated by $\vec{\sigma}$, as desired.

\begin{proof}[Proof of Proposition~\ref{prop:eq-Nash}]
Suppose that there exists a player $i$ such that $\tilde{\tau}_i \not= \sigma_i$. An analogous reasoning to the one used for proving Proposition~\ref{prop:expected-loss} shows that there exists a player $i \in P$ and an action $a_i \in A_i$ such that $$\sum_{j \in S}q_{(i,a_i)}^{\tau_j} \ge \frac{1}{|A|_{max}}.$$
Then, analogously to the proof of part (a), it can be shown that $i$ can strictly increase its utility by switching from $a_i$ to another action $a'_i$, contradicting the fact that $\vec{\tau}$ is a Nash equilibrium of $\Gamma^{\sigma}$.
\end{proof}

\subsection {Generalization to Finite Bayesian Games with Independent Types}\label{sec:bayesian-construction}

As shown in Section~\ref{sec:sketch}, the construction of $\Gamma^{\vec{\sigma}}$ used for normal-form games does not work in general for Bayesian games. However, if types are independent, we have the following proposition that states that, if a coalition can defect in such a way that they all increase their expected utility with respect to $\vec{\sigma}$, each player in the coalition must be able to tell the strategy of another member $j$ of the coalition from $\sigma_j$.

\commentout{
there is one subtle difference between the two scenarios. In normal-form games, whenever a player $i$ defects from $\sigma_i$ to $\tau_i$, by definition it is guaranteed that there will always exist an action $a \in A$ such that $a$ is played with more probability in $\tau_i$ than in $\sigma_i$. However, this may not be the case in Bayesian games. For example, consider a game such that $T_i = A_i = \{0,1\}$ for all players $i$, and all types are sampled independently and uniformly from $\{0,1\}$. If $\sigma_i$ is the strategy in which player $i$ plays whatever its type is and $\tau_i$ is the strategy in which it plays the opposite of its type, it is indistinguishable to the other players if $i$ is playing $\sigma_i$ or if $i$ is playing $\tau_i$. In particular, if we construct $\Gamma^{\vec{\sigma}}$ as in Section~\ref{sec:construction}, even if $i$ switches from $\sigma_i$ to $\tau_i$, there would be no bet that gives strictly positive gains to any of the other players.

The key to circumvent the issue above is the fact that, since $\vec{\sigma}$ is a Nash equilibrium of $\Gamma$, 
no player can increase its utility by defecting. This means that, if types are independent and a coalition $S$ defects in such a way that all of its members benefit from colluding, each member $i \in S$ must be able to distinguish the strategies $\tau_j$ played by the other members of the coalition from their original strategies $\sigma_j$. This is made precise in the following proposition.
}

\begin{proposition}\label{prop:distinguishable1}
    Let $\Gamma = (P,T,q,A,U)$ be a Bayesian game with independent types. Let $S$ be a subset of players and $\vec{\tau}_S$ be a strategy profile for players in $S$ such that $u_i(\vec{\sigma}_{-S}, \vec{\tau}_S) > u_i(\vec{\sigma})$ for all $i \in S$. Then, there exists a player $i \in S$ and an action $a_i \in A_i$ such that $$p_{(i,a_i)}^{\tau_i} > p_{(i, a_i)}^{\sigma_i},$$
    where $p_{(i,a_i)}^{\tau_i}$ (resp., $p_{(i,a_i)}^{\sigma_i}$) is the probability that player $i$ plays action $a_i$ when following strategy $\tau_i$ (resp., $\sigma_i$).
\end{proposition}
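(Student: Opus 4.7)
The plan is to proceed by contradiction. Suppose no such pair $(i, a_i)$ exists, so that $p_{(i,a_i)}^{\tau_i} \le p_{(i,a_i)}^{\sigma_i}$ for every $i \in S$ and every $a_i \in A_i$. Since $\sum_{a_i \in A_i} p_{(i,a_i)}^{\tau_i} = 1 = \sum_{a_i \in A_i} p_{(i,a_i)}^{\sigma_i}$, the componentwise inequality forces equality: the marginal action distribution of $i$ under $\tau_i$ equals the one under $\sigma_i$ for every $i \in S$.

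The next step is to exploit independence of types to show that, from each individual $i \in S$'s viewpoint, the coalition's joint deviation collapses to a unilateral deviation. Fix any $i \in S$ and condition on $i$'s type $t_i$. Because the types $t_j$ for $j \ne i$ are jointly independent of $t_i$, and because each player $j \ne i$ randomizes independently of the others (a strategy $\tau_j$ or $\sigma_j$ only takes $t_j$ as input), the joint distribution of $\vec{a}_{-i}$ conditional on $t_i$ factorizes as a product of marginal action distributions. Under $(\vec{\sigma}_{-S}, \vec{\tau}_S)$ this product is $\prod_{j \in S \setminus \{i\}} p^{\tau_j}_{(j, \cdot)} \cdot \prod_{j \notin S} p^{\sigma_j}_{(j, \cdot)}$, while under $(\vec{\sigma}_{-i}, \tau_i)$ it is $\prod_{j \ne i} p^{\sigma_j}_{(j, \cdot)}$. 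By the equality of marginals established above, these two product distributions coincide, and since the law of $i$'s own action (conditional on $t_i$) is identical in both profiles, it follows that
\[
u_i(\vec{\sigma}_{-S}, \vec{\tau}_S) \;=\; u_i(\vec{\sigma}_{-i}, \tau_i).
\]

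Finally, because $\vec{\sigma}$ is a Nash equilibrium of $\Gamma$, we have $u_i(\vec{\sigma}_{-i}, \tau_i) \le u_i(\vec{\sigma})$. Combining this with the previous equality gives $u_i(\vec{\sigma}_{-S}, \vec{\tau}_S) \le u_i(\vec{\sigma})$, contradicting the hypothesis that $u_i(\vec{\sigma}_{-S}, \vec{\tau}_S) > u_i(\vec{\sigma})$. Hence the original assumption fails and some $(i, a_i)$ with $p_{(i,a_i)}^{\tau_i} > p_{(i,a_i)}^{\sigma_i}$ must exist.

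The only delicate step is the factorization claim that turns the coalitional deviation into a unilateral one: it uses two properties simultaneously, namely that the prior $q$ makes types independent and that a strategy profile by convention randomizes independently across players. I would state this factorization explicitly (perhaps as a short side lemma) before plugging it into the utility comparison, since it is the place where the hypothesis of independent types is used and where a reader is most likely to worry about hidden correlation.
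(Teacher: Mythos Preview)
Your proof is correct and follows essentially the same approach as the paper's: assume the marginal action distributions match, use independence of types to argue that the distribution of $\vec{a}_{-i}$ conditional on $t_i$ is unchanged, deduce $u_i(\vec{\sigma}_{-S},\vec{\tau}_S)=u_i(\vec{\sigma}_{-i},\tau_i)$, and contradict the Nash property of $\vec{\sigma}$. Your argument is in fact slightly more careful than the paper's, since you explicitly derive equality of the marginals from the componentwise inequality $p^{\tau_i}_{(i,a_i)}\le p^{\sigma_i}_{(i,a_i)}$ (the true negation of the conclusion) together with the fact that both sides sum to $1$; the paper jumps straight to assuming equality.
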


\begin{proof}
    Suppose that $p_{(i,a_i)}^{\tau_i} = p_{(i, a_i)}^{\sigma_i}$ for all $i \in S$ and all $a_i \in A_i$. Then, since types are independent, if we fix a player $i \in S$ and a type $t_i$ of $i$, the distribution over action profiles $\vec{a}_{-i}$ played by the other players $j$ conditioned to the fact that $i$ has type $t_i$ is identical when players in $S$ play $\vec{\tau}_S$ and when players in $S$ play $\vec{\sigma}_S$. Therefore, we would have that $$u_i(\vec{\sigma}_{-S}, \vec{\tau}_S) = u_i(\vec{\sigma}_{-i}, \tau_i)$$ for all $i \in S$, and thus $$ u_i(\vec{\sigma}_{-i}, \tau_i) >  u_i(\vec{\sigma}),$$ which contradicts the fact that $\vec{\sigma}$ is a Nash equilibrium of $\Gamma$.
\end{proof}

Note that the proof of Proposition~\ref{prop:distinguishable1} requires that the types of the players are independent. In fact, if they were not, Example~\ref{example:indistinguishable} shows that players can defect in such a way that they all benefit while no player $i$ can tell between the strategy of another player $j$ and $\sigma_j$. However, in the case of independent types, Proposition~\ref{prop:distinguishable1} shows that we can use the same approach as in the case of normal-form games. Define $A^{\vec{\sigma}}$ as in Section~\ref{sec:construction}, and let $$\begin{array}{lll}
 u^{\vec{\sigma}}_i(t_i, ((a_1, b_1), (a_2, b_2), \ldots, (a_n, b_n))) &  :=  & u_i(t_i, (a_1, a_2, \ldots, a_n)) \\
  & + & w_i^{\vec{\sigma}}((a_1, b_1), (a_2, b_2), \ldots, (a_n, b_n))\\
  & + & \ell_i^{\vec{\sigma}}((a_1, b_1), (a_2, b_2), \ldots, (a_n, b_n)),
  \end{array}
  $$

where $w_i$ and $\ell_i$ are defined in the same way as in Section~\ref{sec:construction} and $C > \max\{M \cdot |A|_{\max}, 1\}$, as in Section~\ref{sec:proof}. An analogous reasoning to that of the proof of Proposition~\ref{prop:expected-loss} shows that, if $S$ and $\vec{\tau}_S$ satisfy that $u_i^{\vec{\sigma}}(\vec{\sigma}_{-S}, \vec{\tau}_S) > u_i^{\vec{\sigma}}(\vec{\sigma})$ for all $i \in S$, and that $u_i^{\vec{\sigma}}(\vec{\sigma}_{-S}, \vec{\tau}_S) \ge u_i^{\vec{\sigma}}(\vec{\sigma}_{-S}, \vec{\tau}_{S \setminus \{i\}}, \tau'_i)$ for all $i \in S$ and all strategies $\tau'_i$ for $i$, then there exists a player $i$ and an action $a_i$ such that, regardless of $i$'s type, at least $\frac{1}{|A|_{max}}$ players bet on $(i, a_i)$ in expectation. The rest of the proof proceeds as in the case of normal-form games.

\section{Proof of Theorem~\ref{thm:main2} for Finite Bayesian Games}\label{sec:arbitrary-bayesian}

\commentout{

The construction of $\Gamma^{\vec{\sigma}}$ is slightly different for Bayesian games with independent types and for arbitrary Bayesian games. The reason for this is that, as stressed in the previous section, Proposition~\ref{prop:distinguishable1} does not hold in this setting, as the following example shows.

\begin{example}\label{example:indistinguishable}
Consider a Bayesian game $\Gamma = (P, T, q, A, U)$ such that $P = \{1,2,3\}$, $T_i = A_i = \{0,1\}$ for all $i \in P$, and $q$ is the uniform distribution over the subset of $T$ such that the sum of coordinates is odd (i.e., $\{(t_1,t_2,t_3) \in T \mid t_1 + t_2 + t_3 \equiv 1 \bmod 2\}$). Given an action profile $(a_1, a_2, a_3)$, all players get $1$ utility if $a_1+a_2+a_3$ is odd, and they all get $0$ utility otherwise.

Consider two strategy profiles $\vec{\sigma}$ and $\vec{\tau}$. In $\vec{\sigma}$, each player independently plays $0$ and $1$ with equal probability. In $\vec{\tau}$, each player plays its own type.
\end{example}

In Example~\ref{example:indistinguishable}, it is straightforward to check that $\vec{\sigma}$ and $\vec{\tau}$ are both Nash equilibria in which each player plays $0$ and $1$ with equal probability. However, players get $1/2$ expected utility with $\vec{\sigma}$ and $1$ expected utility with $\vec{\tau}$. The main difference between the strategies is that, even though each player has $1/2$ chance of having type $0$ and $1/2$ chance of having type $1$, their types are correlated in such a way that their sum is always an odd number. Moreover, we can easily check that players' types are \textbf{pairwise independent}. This means that the probability that a player $i$ plays a certain action $a \in \{0,1\}$ with $\tau_i$ is exactly the same as the probability that $i$ plays $a$ with $\sigma_i$, even if we condition the probability on another player $j$'s type. This implies that $j$ cannot guess $i$'s action with $\vec{\tau}$ any better than with $\vec{\sigma}$.
}

The analysis of Example~\ref{example:indistinguishable} given in Section~\ref{sec:sketch} shows that the construction used for normal-form games and for Bayesian games with independent types does not quite work in general for Bayesian games. However, note that if all three players play $\vec{\tau}$ in Example~\ref{example:indistinguishable}, even though player $1$ cannot tell if player $2$ or player $3$ are going to play $0$ or $1$, she knows the following. If her type is $0$, then players $2$ and $3$ are going to play either $(0,1)$ or $(1,0)$. Otherwise, they are going to play $(0,0)$ or $(1,1)$. This means that, even if $i$ cannot guess the action of another player $j$ with higher probability in $\tau_j$ than in $\sigma_j$, she can guess the action profile of all other players with higher probability. This suggests that, instead of allowing players to bet on which action another player will play, they should be able to bet on action profiles of other players. This motivates the following construction. Define $\Gamma^{\vec{\sigma}} := (P, T, q, A^{\vec{\sigma}}, U^{\vec{\sigma}})$. We define the action set $A_i^{\vec{\sigma}}$ of each player as $A^{\vec{\sigma}}_i := A_i \cup A$. 
As in Section~\ref{sec:construction}, we can view each element of $A^{\vec{\sigma}}_i$ with a pair $(a,b)$, where $a$ is the actual action that $i$ plays and $b \in A_{-i}$ is $i$'s bet over the action profile of other players. Again, for convenience, whenever $i$ plays an action $a_i$ without placing a bet, we will use the pair $(a_i, \bot)$ instead of $a_i$. The utility of each player $i$ is defined as follows:

$$\begin{array}{lll}
 u^{\vec{\sigma}}_i((a_1, b_1), (a_2, b_2), \ldots, (a_n, b_n)) &  :=  & u_i(a_1, a_2, \ldots, a_n) \\
  & + & w_i^{\vec{\sigma}}((a_1, b_1), (a_2, b_2), \ldots, (a_n, b_n))\\
  & + & \ell_i^{\vec{\sigma}}((a_1, b_1), (a_2, b_2), \ldots, (a_n, b_n)),
  \end{array}
  $$
where $w_i^{\vec{\sigma}}$ and $\ell_i^{\vec{\sigma}}$ are defined similarly to their counterparts in Section~\ref{sec:construction}. In this case, let $p^{\vec{\sigma}}_{i, t_i, \vec{a}_{-i}}$ denote the probability that the action profile played by $P_{-i}$ is $\vec{a}_{-i}$, conditioned to the fact that all players play $\vec{\sigma}$ and that player $i$ has type $t_i$. Then, we can define $w_i^{\vec{\sigma}}$ as follows:
$$w_i^{\vec{\sigma}}((a_1, b_1), (a_2, b_2), \ldots, (a_n, b_n)) := 
\left\{
\begin{array}{ll}
0 & \mbox{if } b_i = \bot\\
1 - p_{(i, t_i, \vec{a}_{-i})}^{\vec{\sigma}} & \mbox{if } b_i \mbox{ is of the form } \vec{a}_{-i}\\
- p_{(i, t_i, \vec{a}_{-i})}^{\vec{\sigma}} & \mbox{if } b_i \mbox{ is of the form } \vec{a}_i' \mbox{ with } \vec{a}_i' \not= \vec{a}_{-i}.\\
\end{array}
\right.$$

Finally, we define $\ell_i^{\vec{\sigma}} := \sum_{j \not = i} \ell_{(i, j)}^{\vec{\sigma}}$, where $$
  \ell_{(i,j)}^{\vec{\sigma}}((a_1, b_1), (a_2, b_2), \ldots, (a_n, b_n)) := 
\left\{
\begin{array}{ll}
0 & \mbox{if } b_j = \bot \mbox{ or } b_j \mbox{ is of the form } \vec{a}_{-j}' \mbox{ with } \vec{a}_{-j}' \not= \vec{a}_{-j}\\
-C & \mbox{otherwise,}
\end{array}\right.$$
and $C = 2|A|\cdot (M+1)$. Note that, in this case, $\ell_{(i,j)}$ does not depend on $i$.

To prove that this construction satisfies the properties of Theorem~\ref{thm:main} we need a few preliminary lemmas. Before stating the first of these, we need the following definition. Given a player $i \in P$, a type $t_i \in T_i$, and a strategy profile $\vec{\tau}$, define $\Delta_{(i,t_i)}^{\vec{\tau}}$ as $$\Delta_{(i,t_i)}^{\vec{\tau}} = \sum_{\vec{a}_{-i} \in A_{-i}} \left|p_{(i, t_i, \vec{a}_{-i})}^{\vec{\sigma}} - p_{(i, t_i, \vec{a}_{-i})}^{\vec{\tau}}\right|.$$
Intuitively, $\Delta_{(i,t_i)}^{\vec{\tau}}$ can be seen as the ``Manhattan distance'' between the distributions over actions played by $P_{-i}$ that are generated when players play $\vec{\sigma}$ and $\vec{\tau}$ respectively, conditioned to the fact that $i$ has type $t_i$. Our first lemma states that, if players play $\vec{\tau}$, the expected utility of a player $i$ with type $t_i$ is bounded by its expected utility when players play $\vec{\sigma}$ plus $\Delta_{(i,t_i)}^{\vec{\tau}} \cdot M$.

\begin{lemma}\label{lemma:diff-utility-bayesian}
Given a player $i \in P$, its type $t_i \in T_i$, and a strategy profile $\vec{\tau}$ for $\Gamma$, denote by $u_{i}(t_i, \vec{\tau})$ the expected utility of $i$ when players play $\vec{\tau}$, conditioned to the fact that $i$ has type $t_i$. Then, $$u_{i}(t_i, \vec{\tau}) \le u_{i}(t_i, \vec{\sigma}) + \Delta_{(i,t_i)}^{\vec{\tau}} \cdot M.$$
\end{lemma}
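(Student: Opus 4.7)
The plan is to decompose $u_i(t_i,\vec{\tau}) - u_i(t_i,\vec{\sigma})$ into two contributions: (i) one that captures how $i$'s conditional payoff shifts because the distribution of $\vec{a}_{-i}$ given $t_i$ moves from $p^{\vec{\sigma}}_{(i,t_i,\cdot)}$ to $p^{\vec{\tau}}_{(i,t_i,\cdot)}$, and (ii) one that captures the change from $i$ playing $\tau_i$ instead of $\sigma_i$ against the $\vec{\sigma}_{-i}$-distribution. I will bound (i) by $M\cdot\Delta_{(i,t_i)}^{\vec{\tau}}$ via a standard total-variation-style estimate, and show that (ii) is nonpositive using that $\vec{\sigma}$ is a Nash equilibrium of $\Gamma$.

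Concretely, for any strategy $\rho_i$ of $i$ let $v_i(t_i,\vec{a}_{-i},\rho_i) := \sum_{a_i \in A_i}\rho_i(t_i)(a_i)\, u_i(t_i,(a_i,\vec{a}_{-i}))$. Then $u_i(t_i,\vec{\tau}) = \sum_{\vec{a}_{-i}} p^{\vec{\tau}}_{(i,t_i,\vec{a}_{-i})}\, v_i(t_i,\vec{a}_{-i},\tau_i)$, and similarly for $\vec{\sigma}$. Adding and subtracting $\sum_{\vec{a}_{-i}} p^{\vec{\sigma}}_{(i,t_i,\vec{a}_{-i})}\, v_i(t_i,\vec{a}_{-i},\tau_i)$ yields the split into (i) and (ii). Contribution (ii) equals $u_i(t_i,(\vec{\sigma}_{-i},\tau_i)) - u_i(t_i,\vec{\sigma})$, because when the other players follow $\vec{\sigma}_{-i}$ the conditional distribution of $\vec{a}_{-i}$ given $t_i$ is exactly $p^{\vec{\sigma}}_{(i,t_i,\cdot)}$.

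To see that (ii) is at most $0$, I would invoke the standard ex-interim best-response property of Bayesian Nash equilibria: since $\vec{\sigma}$ is a Nash equilibrium of $\Gamma$ and every type of $i$ has positive probability under $q$ (an assumption made in the preliminaries), $\sigma_i(t_i)$ must maximize $i$'s conditional expected utility against $\vec{\sigma}_{-i}$ given $t_i$ for every $t_i \in T_i$; otherwise one could modify $\sigma_i$ only at $t_i$ to strictly increase $i$'s unconditional expected utility, contradicting Nash. Hence replacing $\sigma_i$ by $\tau_i$ can only weakly hurt $i$ conditional on $t_i$.

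For contribution (i), let $m$ be the minimum utility value of $\Gamma$, so $0 \le v_i(t_i,\vec{a}_{-i},\tau_i) - m \le M$ for every $\vec{a}_{-i}$. Since $p^{\vec{\tau}}_{(i,t_i,\cdot)} - p^{\vec{\sigma}}_{(i,t_i,\cdot)}$ sums to $0$ over $A_{-i}$, subtracting $m$ from $v_i$ does not change the sum, and bounding the shifted integrand by $M$ on the set where the signed measure is positive yields an upper bound of $M\cdot\tfrac{1}{2}\Delta_{(i,t_i)}^{\vec{\tau}} \le M\cdot\Delta_{(i,t_i)}^{\vec{\tau}}$. Combining with (ii) $\le 0$ gives the lemma. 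The only nonroutine step is the ex-interim best-response reduction, but the positivity-of-types assumption makes it immediate, so I do not foresee a real obstacle.
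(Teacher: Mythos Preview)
Your proposal is correct and follows essentially the same approach as the paper: both split $u_i(t_i,\vec{\tau})-u_i(t_i,\vec{\sigma})$ into a Nash-equilibrium term (your contribution (ii), handled via the ex-interim best-response property) and a distribution-shift term (your contribution (i), handled by subtracting the minimum utility and bounding by $M$ times the $\ell_1$-distance). The only cosmetic differences are that you integrate out $a_i$ first via $v_i$ and observe the slightly sharper intermediate bound $\tfrac{1}{2}M\cdot\Delta_{(i,t_i)}^{\vec{\tau}}$, while the paper keeps the sum over $a_i$ explicit and goes directly to $M\cdot\Delta_{(i,t_i)}^{\vec{\tau}}$.
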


\begin{proof}
Player $i$'s expected utility when players play $\vec{\tau}$ conditioned to the fact that $i$ has type $t_i$ is given by $$u_{i}(t_i, \vec{\tau}) = \sum_{a_i \in A_i} \tau_i(t_i,a_i) \left(\sum_{a_{-i} \in A_{-i}} p_{(i, t_i, \vec{a}_{-i})}^{\vec{\tau}}u_i(t_i,\vec{a})\right),$$
where $\tau_i(t_i, a_i)$ denotes the probability that $i$ plays $a_i$ when following strategy $\tau_i$, conditioned to the fact that it has type $t_i$. 

We can rewrite the sum as 

$$u_{i}(t_i, \vec{\tau}) = \sum_{a_i \in A_i} \tau_i(t_i,a_i) \left(\sum_{a_{-i} \in A_{-i}} p_{(i, t_i, \vec{a}_{-i})}^{\vec{\sigma}}u_i(t_i,\vec{a}) + (p_{(i, t_i, \vec{a}_{-i})}^{\vec{\tau}} - p_{(i, t_i, \vec{a}_{-i})}^{\vec{\sigma}})u_i(t_i,\vec{a})\right).$$

Note that, since $\vec{\sigma}$ is a Nash equilibrium, we have that $$\sum_{a_i \in A_i} \tau_i(t_i,a_i) \left(\sum_{a_{-i} \in A_{-i}} p_{(i, t_i, \vec{a}_{-i})}^{\vec{\sigma}}u_i(t_i,\vec{a})\right) \le u_i(t_i, \vec{\sigma}).$$

It remains to show that

$$\sum_{a_i \in A_i} \tau_i(t_i,a_i) \left(\sum_{a_{-i} \in A_{-i}} (p_{(i, t_i, \vec{a}_{-i})}^{\vec{\tau}} - p_{(i, t_i, \vec{a}_{-i})}^{\vec{\sigma}})u_i(t_i,\vec{a})\right) \le \Delta_{(i,t_i)}^{\vec{\tau}} \cdot M.$$

Let $M_{min}$ be the minimum utility of any player in $\Gamma$ and $M_{max}$ be the maximum utility of any player in $\Gamma$ (recall that $M = M_{max} - M_{min}$). Then, $$\sum_{a_{-i} \in A_{-i}} (p_{(i, t_i, \vec{a}_{-i})}^{\vec{\tau}} - p_{(i, t_i, \vec{a}_{-i})}^{\vec{\sigma}})u_i(t_i,\vec{a}) = \sum_{a_{-i} \in A_{-i}} (p_{(i, t_i, \vec{a}_{-i})}^{\vec{\tau}} - p_{(i, t_i, \vec{a}_{-i})}^{\vec{\sigma}})(M_{min} + (u_i(t_i,\vec{a}) - M_{min})).$$

Since $\sum_{a_{-i} \in A_{-i}} p_{(i, t_i, \vec{a}_{-i})}^{\vec{\tau}} = \sum_{a_{-i} \in A_{-i}} p_{(i, t_i, \vec{a}_{-i})}^{\vec{\sigma}} = 1$, we have that $$
\begin{array}{ll}
& \sum_{a_{-i} \in A_{-i}} (p_{(i, t_i, \vec{a}_{-i})}^{\vec{\tau}} - p_{(i, t_i, \vec{a}_{-i})}^{\vec{\sigma}})(M_{min} + (u_i(t_i,\vec{a}) - M_{min}))\\
= & \sum_{a_{-i} \in A_{-i}} (p_{(i, t_i, \vec{a}_{-i})}^{\vec{\tau}} - p_{(i, t_i, \vec{a}_{-i})}^{\vec{\sigma}})(u_i(t_i,\vec{a}) - M_{min}) \\
\le & \sum_{a_{-i} \in A_{-i}} \left|p_{(i, t_i, \vec{a}_{-i})}^{\vec{\tau}} - p_{(i, t_i, \vec{a}_{-i})}^{\vec{\sigma}}\right| \cdot (M_{max} - M_{min}) \\
= & \Delta_{(i,t_i)}^{\vec{\tau}} \cdot M.
\end{array}
$$

Therefore, 

$$
\begin{array}{ll}
 & \sum_{a_i \in A_i} \tau_i(t_i,a_i) \left(\sum_{a_{-i} \in A_{-i}} (p_{(i, t_i, \vec{a}_{-i})}^{\vec{\tau}} - p_{(i, t_i, \vec{a}_{-i})}^{\vec{\sigma}})u_i(t_i,\vec{a})\right) \\
\le & (\Delta_{(i,t_i)}^{\vec{\tau}} \cdot M) \sum_{a_i \in A_i} \tau_i(t_i,a_i)  \\
= & \Delta_{(i,t_i)}^{\vec{\tau}} \cdot M,
\end{array}
$$

as desired.

\end{proof}
\commentout{
The second preliminary lemma gives us a lower bound on $\max_{\vec{a}_{-i} \in A_{-i}}\{p_{(i, t_i, \vec{a}_{-i})}^{\vec{\tau}} - p_{(i, t_i, \vec{a}_{-i})}^{\vec{\sigma}}\}$.

\begin{lemma}\label{lem:bayesian-min-diff}
Given a player $i \in P$, its type $t_i \in T_i$, and a strategy profile $\vec{\tau}$ for $\Gamma$, there exists an action profile $\vec{a}_{-i} \in A_{-i}$ such that $$p_{(i, t_i, \vec{a}_{-i})}^{\vec{\tau}} - p_{(i, t_i, \vec{a}_{-i})}^{\vec{\sigma}} \ge \frac{\Delta_{(i,j)}^{\vec{\tau}}}{2|A_{-i}|}.$$
\end{lemma}

\begin{proof}
Let $$X := \sum_{\substack{\vec{a}_{-i} \in A_{-i} \\ p_{(i, t_i, \vec{a}_{-i})}^{\vec{\tau}} - p_{(i, t_i, \vec{a}_{-i})} \ge 0}} (p_{(i, t_i, \vec{a}_{-i})}^{\vec{\tau}} - p_{(i, t_i, \vec{a}_{-i})}^{\vec{\sigma}})$$ and $$Y := \sum_{\substack{\vec{a}_{-i} \in A_{-i} \\ p_{(i, t_i, \vec{a}_{-i})}^{\vec{\tau}} - p_{(i, t_i, \vec{a}_{-i})} < 0}} (p_{(i, t_i, \vec{a}_{-i})}^{\vec{\tau}} - p_{(i, t_i, \vec{a}_{-i})}^{\vec{\sigma}}).$$

We have that $$X + Y = \sum_{\vec{a}_{-i}\in A_{-i}} p_{(i, t_i, \vec{a}_{-i})}^{\vec{\tau}} - \sum_{\vec{a}_{-i}\in A_{-i}} p_{(i, t_i, \vec{a}_{-i})}^{\vec{\sigma}} = 0.$$

Moreover, by definition, $X - Y = \Delta_{(i, t_i)}^{\vec{\tau}}.$ Putting the last two equations together we get that $X = \frac{\Delta_{(i, t_i)}^{\vec{\tau}}}{2}$. Therefore, since $X$ is a sum of $|A_{-i}|$ terms, there exists an action profile $\vec{a}_{-i} \in A_{-i}$ such that 
$$ p_{(i, t_i, \vec{a}_{-i})}^{\vec{\tau}} - p_{(i, t_i, \vec{a}_{-i})}^{\vec{\sigma}} \ge \frac{1}{|A_{-i}|} \cdot \frac{\Delta_{(i, t_i)}^{\vec{\tau}}}{2} = \frac{\Delta_{(i, t_i)}^{\vec{\tau}}}{2|A_{-i}|}.$$
\end{proof}

Lemma~\ref{lem:bayesian-min-diff} allows us to prove the following Corollary.

\begin{Corollary}
Given a player $i \in P$, its type $t_i \in T_i$, and a strategy profile $\vec{\tau}$ for $\Gamma$
\end{Corollary}
}

The second preliminary lemma gives the expected utility that a player can get by betting on the other players' actions.

\begin{lemma}\label{lem:bayesian-basic-bets}
    Given a player $i \in P$, its type $t_i \in T_i$, and a strategy profile $\vec{\tau}$ for $\Gamma^{\vec{\sigma}}$. The expected utility that $i$ gets when betting on action profile $\vec{a}_{-i}$ is given  by $p_{(i, t_i, \vec{a}_{-i})}^{\vec{\tau}} - p_{(i, t_i, \vec{a}_{-i})}.$
\end{lemma}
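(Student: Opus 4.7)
The plan is to compute the expected additional utility directly, by conditioning on whether the action profile realized by $P_{-i}$ matches $i$'s bet $\vec{a}_{-i}$. This parallels the proof of Lemma~\ref{lemma:bets-basic} verbatim, with the marginal probabilities of single actions replaced by the conditional joint probabilities over $A_{-i}$ given $i$'s type.

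First I would fix the player $i$, the type $t_i$, and the strategy profile $\vec{\tau}$, and suppose $i$ places the bet $\vec{a}_{-i}$ (so $b_i = \vec{a}_{-i}$). By the definition of $w_i^{\vec{\sigma}}$, the bonus to $i$ equals $1 - p_{(i,t_i,\vec{a}_{-i})}^{\vec{\sigma}}$ if the first-component projection of the actions played by $P_{-i}$ coincides with $\vec{a}_{-i}$, and $-p_{(i,t_i,\vec{a}_{-i})}^{\vec{\sigma}}$ otherwise. Conditioned on $i$ having type $t_i$ and the remaining players following $\vec{\tau}_{-i}$, the probability of the former event is, by definition of $p^{\vec{\tau}}_{(i,t_i,\vec{a}_{-i})}$, precisely that quantity. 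The expected bonus is therefore
\[
p_{(i,t_i,\vec{a}_{-i})}^{\vec{\tau}}\bigl(1 - p_{(i,t_i,\vec{a}_{-i})}^{\vec{\sigma}}\bigr) + \bigl(1 - p_{(i,t_i,\vec{a}_{-i})}^{\vec{\tau}}\bigr)\bigl(-p_{(i,t_i,\vec{a}_{-i})}^{\vec{\sigma}}\bigr),
\]
which collapses algebraically to $p_{(i,t_i,\vec{a}_{-i})}^{\vec{\tau}} - p_{(i,t_i,\vec{a}_{-i})}^{\vec{\sigma}}$, as claimed.

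There is no real technical obstacle; the only point I would be careful to articulate is that the probabilities $p^{\vec{\tau}}_{(i,t_i,\vec{a}_{-i})}$ are defined in terms of the first-component projection of strategies in $\Gamma^{\vec{\sigma}}$, so the bets placed by players in $P_{-i}$ are irrelevant to which action profile in $A_{-i}$ is realized. One should also note that $\ell_i^{\vec{\sigma}}$ plays no role here, since the lemma isolates only the utility $i$ gains from betting, not the utility $i$ loses from being bet upon.
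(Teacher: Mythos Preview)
Your proposal is correct and is exactly the approach the paper takes: the paper simply states that the proof is equivalent to that of Lemma~\ref{lemma:bets-basic}, and your computation is precisely that argument with $p^{\vec{\sigma}}_{(j,a_j)}$ replaced by $p^{\vec{\sigma}}_{(i,t_i,\vec{a}_{-i})}$.
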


The proof of Lemma~\ref{lem:bayesian-basic-bets} is equivalent to that of Lemma~\ref{lemma:bets-basic}. The following corollary is a direct implication.

\begin{corollary}\label{col:max-wins-bets}
Given a player $i \in P$, its type $t_i \in T_i$, and a Nash equilibrium $\vec{\tau}$ for $\Gamma^{\vec{\sigma}}$. The maximum expected utility $w_i (t_i, \vec{\tau}_i)$ that $i$ can get by betting when players play $\vec{\tau}$ and $i$ has type $t_i$ satisfies $$w_i (t_i, \vec{\tau}_i) \le \Delta_{(i, t_i)}^{\vec{\tau}}$$
\end{corollary}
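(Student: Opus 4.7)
The plan is to derive the corollary directly from Lemma~\ref{lem:bayesian-basic-bets} together with an elementary observation about signed measures on a finite set. Because each action in $A^{\vec{\sigma}}_i$ corresponds either to ``no bet'' or to a single bet on some profile $\vec{a}_{-i} \in A_{-i}$, the strategy $\vec{\tau}_i$ (conditional on type $t_i$) induces a distribution over single-bet choices, where placing no bet yields zero additional utility from $w_i$ and placing the bet $\vec{a}_{-i}$ yields, by Lemma~\ref{lem:bayesian-basic-bets}, expected utility $p_{(i,t_i,\vec{a}_{-i})}^{\vec{\tau}} - p_{(i,t_i,\vec{a}_{-i})}^{\vec{\sigma}}$. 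By linearity of expectation, $w_i(t_i,\vec{\tau}_i)$ is a convex combination of these quantities together with $0$, so
\[
w_i(t_i,\vec{\tau}_i) \;\le\; \max\Bigl(0,\;\max_{\vec{a}_{-i} \in A_{-i}} \bigl(p_{(i,t_i,\vec{a}_{-i})}^{\vec{\tau}} - p_{(i,t_i,\vec{a}_{-i})}^{\vec{\sigma}}\bigr)\Bigr).
\]

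Next I will bound this maximum using the fact that $p^{\vec{\tau}}$ and $p^{\vec{\sigma}}$ are both probability distributions over $A_{-i}$. Writing $d(\vec{a}_{-i}) := p_{(i,t_i,\vec{a}_{-i})}^{\vec{\tau}} - p_{(i,t_i,\vec{a}_{-i})}^{\vec{\sigma}}$, we have $\sum_{\vec{a}_{-i}} d(\vec{a}_{-i}) = 0$, so the positive part and the negative part of $d$ have equal total mass, each equal to $\Delta_{(i,t_i)}^{\vec{\tau}}/2$. In particular, every individual positive value $d(\vec{a}_{-i})$ is bounded above by the full sum of positive values, yielding
\[
\max_{\vec{a}_{-i} \in A_{-i}} d(\vec{a}_{-i}) \;\le\; \frac{\Delta_{(i,t_i)}^{\vec{\tau}}}{2} \;\le\; \Delta_{(i,t_i)}^{\vec{\tau}}.
\]
Combining the two displays gives $w_i(t_i,\vec{\tau}_i) \le \Delta_{(i,t_i)}^{\vec{\tau}}$, as claimed.

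There is no real obstacle in the argument; the only point worth emphasizing is that the Nash-equilibrium hypothesis on $\vec{\tau}$ plays no role in this particular bound (the inequality holds for every $\vec{\tau}_i$), but the hypothesis will be crucial later when this corollary is combined with Lemma~\ref{lemma:diff-utility-bayesian} to control the overall deviation incentives, since it ensures that $i$ actually realizes the best available $w_i$-payoff. The $\Delta_{(i,t_i)}^{\vec{\tau}}$ bound (rather than the tighter $\Delta_{(i,t_i)}^{\vec{\tau}}/2$) is stated because it matches the loss term in Lemma~\ref{lemma:diff-utility-bayesian} up to the multiplicative factor $M$, which is what is needed to be compared against the punishment constant $C = 2|A|\cdot(M+1)$ in the subsequent arguments.
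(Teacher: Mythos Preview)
Your argument is correct and follows essentially the same route as the paper: invoke Lemma~\ref{lem:bayesian-basic-bets} to identify the payoff of each bet as $p_{(i,t_i,\vec{a}_{-i})}^{\vec{\tau}} - p_{(i,t_i,\vec{a}_{-i})}^{\vec{\sigma}}$, then bound the maximum of these differences by $\Delta_{(i,t_i)}^{\vec{\tau}}$. The paper dispatches the last step in one line (``by definition'' the maximum term is at most the $\ell_1$ sum), whereas you additionally observe the sharper bound $\Delta_{(i,t_i)}^{\vec{\tau}}/2$ via the positive/negative decomposition before relaxing to $\Delta_{(i,t_i)}^{\vec{\tau}}$; your remark that the Nash hypothesis is not actually used here is also accurate.
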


\begin{proof}
By Lemma~\ref{lem:bayesian-basic-bets}, the maximum utility that $i$ can get by betting, conditioned to the fact that $i$ has type $t_i$ and players play $\vec{\tau}$ is given by $$\max_{\vec{a}_{-i} \in A_{-i}}\{p_{(i, t_i, \vec{a}_{-i})}^{\vec{\tau}} - p_{(i, t_i, \vec{a}_{-i})}^{\vec{\sigma}}\}.$$

By definition, this value is at most $\Delta_{(i, t_i)}^{\vec{\tau}}$.
\end{proof}

The third and final preliminary lemma gives a lower bound on the expected losses of other players incurred by $i$'s bets.

\begin{lemma}\label{lem:bayesian-expected-loss}
Given a player $i \in P$, its type $t_i \in T_i$, and a Nash equilibrium $\vec{\tau}$ for $\Gamma^{\vec{\sigma}}$, denote by $\ell_{j,i}^{\vec{\sigma}}(t_i,\vec{\tau})$ the expected utility that $j$ loses because of $i$'s bet, conditioned to the fact that $i$ plays $\tau_i$ with type $t_i$. Then, $$\ell_{j,i}^{\vec{\sigma}}(t_i,\vec{\tau}) \le - \frac{\Delta_{(i, t_i)}^{\vec{\tau}} C}{2|A_{-i}|}.$$
\end{lemma}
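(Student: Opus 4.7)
The plan is to combine a pigeonhole argument on the distance $\Delta_{(i,t_i)}^{\vec{\tau}}$ with the fact that, at the Nash equilibrium $\vec{\tau}$, player $i$'s choice of bet is a best response. The bound is trivial when $\Delta_{(i,t_i)}^{\vec{\tau}}=0$ (the right-hand side is then $0$ and $\ell_{j,i}^{\vec{\sigma}}$ is the expectation of a non-positive quantity), so throughout I may assume $\Delta_{(i,t_i)}^{\vec{\tau}}>0$.

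First, I would replay the pigeonhole argument sketched in the commented-out Lemma to show that there exists some $\vec{a}_{-i}^*\in A_{-i}$ with
\[
p_{(i,t_i,\vec{a}_{-i}^*)}^{\vec{\tau}} - p_{(i,t_i,\vec{a}_{-i}^*)}^{\vec{\sigma}} \;\ge\; \frac{\Delta_{(i,t_i)}^{\vec{\tau}}}{2|A_{-i}|}.
\]
The argument is the standard one: splitting $\Delta_{(i,t_i)}^{\vec{\tau}}$ into the sum of its positive parts $X$ and the sum of its negative parts $-Y$, we have $X-Y=0$ (both $p^{\vec{\tau}}$ and $p^{\vec{\sigma}}$ sum to $1$) while $X+Y=\Delta_{(i,t_i)}^{\vec{\tau}}$, so $X=\Delta_{(i,t_i)}^{\vec{\tau}}/2$, and this positive mass is spread over at most $|A_{-i}|$ terms, giving a single coordinate of weight at least $\Delta_{(i,t_i)}^{\vec{\tau}}/(2|A_{-i}|)$. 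By Lemma~\ref{lem:bayesian-basic-bets}, this means the bet $b_i=\vec{a}_{-i}^*$ yields strictly positive expected additional utility at least $\Delta_{(i,t_i)}^{\vec{\tau}}/(2|A_{-i}|)$ for $i$ when her type is $t_i$.

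Next, I would use the Nash property of $\vec{\tau}$. The key structural observation is that $i$'s bet $b_i$ affects neither $u_i$ (which only reads $\vec{a}$) nor $\ell_i$ (which only reads the bets $b_j$ of other players), so conditional on $t_i$ and on any fixed choice of $a_i$, player $i$'s bet choice enters her expected payoff only through $w_i$. Since $\vec{\tau}$ is a Nash equilibrium, for each type $t_i$ the support of the conditional distribution over bets picked by $\tau_i$ must lie among the bets maximizing the expected $w_i$, which by Lemma~\ref{lem:bayesian-basic-bets} is $\max_{\vec{a}_{-i}}\bigl(p_{(i,t_i,\vec{a}_{-i})}^{\vec{\tau}}-p_{(i,t_i,\vec{a}_{-i})}^{\vec{\sigma}}\bigr)$. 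Since the pigeonhole step shows this maximum is strictly positive, ``not betting'' (which yields $0$) is strictly dominated, so $i$ bets with probability $1$ conditional on $t_i$, and every bet $\vec{a}_{-i}$ in the support of $\tau_i$ satisfies
\[
p_{(i,t_i,\vec{a}_{-i})}^{\vec{\tau}} - p_{(i,t_i,\vec{a}_{-i})}^{\vec{\sigma}} \;\ge\; \frac{\Delta_{(i,t_i)}^{\vec{\tau}}}{2|A_{-i}|},
\qquad\text{hence}\qquad
p_{(i,t_i,\vec{a}_{-i})}^{\vec{\tau}} \;\ge\; \frac{\Delta_{(i,t_i)}^{\vec{\tau}}}{2|A_{-i}|}.
\]

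Finally I would compute the expected loss. Letting $q_{\vec{a}_{-i}}^{\tau_i,t_i}$ be the probability that $i$ bets on $\vec{a}_{-i}$ conditional on $t_i$, and noting that conditional on $t_i$ the event ``$i$ bets $\vec{a}_{-i}$'' is independent of the realized action profile of the remaining players (the bet is a function of $t_i$ and private randomness only), the definition of $\ell_{(j,i)}^{\vec{\sigma}}$ gives
\[
\ell_{j,i}^{\vec{\sigma}}(t_i,\vec{\tau}) \;=\; -C\sum_{\vec{a}_{-i}\in A_{-i}} q_{\vec{a}_{-i}}^{\tau_i,t_i}\, p_{(i,t_i,\vec{a}_{-i})}^{\vec{\tau}}.
\]
Restricting the sum to the support of $\tau_i$'s bet distribution, using the lower bound on $p_{(i,t_i,\vec{a}_{-i})}^{\vec{\tau}}$ just established, and using $\sum_{\vec{a}_{-i}} q_{\vec{a}_{-i}}^{\tau_i,t_i}=1$, we obtain
\[
\ell_{j,i}^{\vec{\sigma}}(t_i,\vec{\tau}) \;\le\; -C\cdot\frac{\Delta_{(i,t_i)}^{\vec{\tau}}}{2|A_{-i}|},
\]
as desired. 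The step I expect to be the most delicate is the Nash-equilibrium argument in the second paragraph: one must be careful that $b_i$ truly decouples from $a_i$ in the best-response computation and that the pigeonhole bound is applied uniformly to every bet in the support of $\tau_i$ conditional on $t_i$, rather than only to the bet achieving the maximum.
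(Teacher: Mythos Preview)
Your proposal is correct and follows essentially the same route as the paper: handle the trivial case $\Delta_{(i,t_i)}^{\vec{\tau}}=0$, use the $X+Y=0$, $X-Y=\Delta$ pigeonhole argument to find a profile with $p^{\vec{\tau}}-p^{\vec{\sigma}}\ge \Delta_{(i,t_i)}^{\vec{\tau}}/(2|A_{-i}|)$, invoke the Nash property so that $i$ only bets on maximizers (hence every bet in the support meets this bound), and conclude by computing the expected loss. Your explicit remark that $b_i$ decouples from $a_i$ in $i$'s best-response problem is a point the paper leaves implicit, but otherwise the arguments coincide.
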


\begin{proof}
If $\Delta_{(i, t_i)}^{\vec{\tau}} = 0$, the lemma is automatically true since $\ell_{j,i}^{\vec{\sigma}}(t_i,\vec{\tau}) \le 0$. Therefore, for the rest of the proof we will assume that $\Delta_{(i, t_i)}^{\vec{\tau}} > 0$, and thus that there exists an action profile $\vec{a}_{-i}$ such that $p_{(i, t_i, \vec{a}_{-i})}^{\vec{\tau}} - p_{(i, t_i, \vec{a}_{-i})}^{\vec{\sigma}} > 0$. By Lemma~\ref{lem:bayesian-basic-bets}, this implies that there exists at least one bet for $i$ that gives her strictly positive utility, and therefore that $i$ will always bet with probability $1$ since $\vec{\tau}$ is a Nash equilibrium. 

Let $$X := \sum_{\substack{\vec{a}_{-i} \in A_{-i} \\ p_{(i, t_i, \vec{a}_{-i})}^{\vec{\tau}} - p_{(i, t_i, \vec{a}_{-i})} \ge 0}} (p_{(i, t_i, \vec{a}_{-i})}^{\vec{\tau}} - p_{(i, t_i, \vec{a}_{-i})}^{\vec{\sigma}})$$ and $$Y := \sum_{\substack{\vec{a}_{-i} \in A_{-i} \\ p_{(i, t_i, \vec{a}_{-i})}^{\vec{\tau}} - p_{(i, t_i, \vec{a}_{-i})} < 0}} (p_{(i, t_i, \vec{a}_{-i})}^{\vec{\tau}} - p_{(i, t_i, \vec{a}_{-i})}^{\vec{\sigma}}).$$

We have that $$X + Y = \sum_{\vec{a}_{-i}\in A_{-i}} p_{(i, t_i, \vec{a}_{-i})}^{\vec{\tau}} - \sum_{\vec{a}_{-i}\in A_{-i}} p_{(i, t_i, \vec{a}_{-i})}^{\vec{\sigma}} = 0.$$

Moreover, by definition, $X - Y = \Delta_{(i, t_i)}^{\vec{\tau}}.$ Putting the last two equations together we get that $X = \frac{\Delta_{(i, t_i)}^{\vec{\tau}}}{2}$. Therefore, since $X$ is a sum of at most $|A_{-i}|$ terms, there exists an action profile $\vec{a}_{-i} \in A_{-i}$ such that 
$$ p_{(i, t_i, \vec{a}_{-i})}^{\vec{\tau}} - p_{(i, t_i, \vec{a}_{-i})}^{\vec{\sigma}} \ge \frac{1}{|A_{-i}|} \cdot \frac{\Delta_{(i, t_i)}^{\vec{\tau}}}{2} = \frac{\Delta_{(i, t_i)}^{\vec{\tau}}}{2|A_{-i}|}.$$

We show next that the loss of every other player when $i$ bets on an action that maximizes $p_{(i, t_i, \vec{a}_{-i})}^{\vec{\tau}} - p_{(i, t_i, \vec{a}_{-i})}^{\vec{\sigma}}$ is upper-bounded by $- \frac{\Delta_{(i, t_i)}^{\vec{\tau}} C}{2|A_{-i}|}$. Note that this completes the proof since, by Lemma~\ref{lem:bayesian-basic-bets}, $i$ will always bet on actions that maximize $p_{(i, t_i, \vec{a}_{-i})}^{\vec{\tau}} - p_{(i, t_i, \vec{a}_{-i})}^{\vec{\sigma}}$ since $\vec{\tau}$ is a Nash equilibrium.

Let $\vec{a}_{-i}$ be an action profile that maximizes $p_{(i, t_i, \vec{a}_{-i})}^{\vec{\tau}} - p_{(i, t_i, \vec{a}_{-i})}^{\vec{\sigma}}$. The expected loss $\ell_{j,i}^{\vec{\sigma}}(t_i,\vec{a}_{-i})$ of every other player $j$ when $i$ bets on $\vec{a}_{-i}$  and $i$ has type $t_i$ is given by $$\ell_{j,i}^{\vec{\sigma}}(t_i,\vec{a}_{-i}) = p_{(i, t_i, \vec{a}_{-i})}^{\vec{\tau}} \cdot (-C).$$

Since $p_{(i, t_i, \vec{a}_{-i})}^{\vec{\tau}} \ge  p_{(i, t_i, \vec{a}_{-i})}^{\vec{\sigma}} + \frac{\Delta_{(i, t_i)}^{\vec{\tau}}}{2|A_{-i}|} \ge \frac{\Delta_{(i, t_i)}^{\vec{\tau}}}{2|A_{-i}|}$, we get that $$\ell_{j,i}^{\vec{\sigma}}(t_i,\vec{a}_{-i}) \le   - \frac{\Delta_{(i, t_i)}^{\vec{\tau}} C}{2|A_{-i}|},$$

and this completes the proof.
\end{proof}

With Lemmas~\ref{lemma:diff-utility-bayesian} and \ref{lem:bayesian-expected-loss}, and Corollary~\ref{col:max-wins-bets}, we are ready to prove Theorem~\ref{thm:main}. Let $S$ be a coalition of players and $\vec{\tau}_S$ be a strategy profile for players in $S$ such that (a1) $u_i^{\vec{\sigma}}(\vec{\sigma}_{-S}, \vec{\tau}_S) > u_i^{\vec{\sigma}}(\vec{\sigma})$ for all $i \in S$, and (b1) $u_i^{\vec{\sigma}}(\vec{\sigma}_{-S}, \vec{\tau}_S) \ge u_i^{\vec{\sigma}}(\vec{\sigma}_{-S}, \vec{\tau}_{S \setminus \{i\}}, \tau'_i)$ for all $i \in S$ and all strategies $\tau'_i$ for $i$. By construction, we have that $$
\begin{array}{lll}
u_i^{\vec{\sigma}}(\vec{\sigma}_{-S}, \vec{\tau}_S) & = & u_i(\vec{\sigma}_{-S}, \tilde{\tau}_S)\\
& + & w_i^{\vec{\sigma}}(\vec{\sigma}_{-S}, \vec{\tau}_S)\\
& + & \ell_i^{\vec{\sigma}}(\vec{\sigma}_{-S}, \vec{\tau}_S).
\end{array}
$$

If we denote by $p(i, t_i)$ the probability that $i$ gets type $t_i$ according to distribution $p$, can rewrite the previous identity as  

$$
\begin{array}{lll}
u_i^{\vec{\sigma}}(\vec{\sigma}_{-S}, \vec{\tau}_S) & = & \sum_{t_i \in T_i} p(i, t_i) u_i(t_i, (\vec{\sigma}_{-S}, \tilde{\tau}_S))\\
& + & \sum_{t_i \in T_i} p(i, t_i) w_i^{\vec{\sigma}}(t_i, (\vec{\sigma}_{-S}, \vec{\tau}_S))\\
& + & \sum_{j \in S \setminus \{i\}} \ell_{(i,j)}^{\vec{\sigma}}(\vec{\sigma}_{-S}, \vec{\tau}_S).
\end{array}
$$

Applying Lemma~\ref{lemma:diff-utility-bayesian} and Corollary~\ref{col:max-wins-bets} we get that 
$$
\begin{array}{lll}
u_i^{\vec{\sigma}}(\vec{\sigma}_{-S}, \vec{\tau}_S) & \le & \sum_{t_i \in T_i} p(i, t_i) (u_i(t_i, \vec{\sigma}) + \Delta^{(\vec{\sigma}_{-S}, \tilde{\tau}_S)}_{(i, t_i)} \cdot M)\\
& + & \sum_{t_i \in T_i} p(i, t_i) \Delta^{(\vec{\sigma}_{-S}, \vec{\tau}_S)}_{(i, t_i)}\\
& + & \sum_{j \in S \setminus \{i\}} \ell_{(i,j)}^{\vec{\sigma}}(\vec{\sigma}_{-S}, \vec{\tau}_S),
\end{array}
$$

which can be written as 

$$
\begin{array}{lll}
u_i^{\vec{\sigma}}(\vec{\sigma}_{-S}, \vec{\tau}_S) & \le & \sum_{t_i \in T_i} p(i, t_i) (u_i(t_i, \vec{\sigma}) + \Delta^{(\vec{\sigma}_{-S}, \vec{\tau}_S)}_{(i, t_i)} \cdot (M+1))\\
& + & \sum_{j \in S \setminus \{i\}} \ell_{(i,j)}^{\vec{\sigma}}(\vec{\sigma}_{-S}, \vec{\tau}_S),
\end{array}
$$

By linearity of expectation, this implies that 
\begin{equation}\label{eq:expectation-bayesian}
\begin{array}{lll}
\sum_{i \in S} u_i^{\vec{\sigma}}(\vec{\sigma}_{-S}, \vec{\tau}_S) & \le & \sum_{i \in S} \sum_{t_i \in T_i} p(i, t_i) (u_i(t_i, \vec{\sigma}) + \Delta^{(\vec{\sigma}_{-S}, \vec{\tau}_S)}_{(i, t_i)} \cdot (M+1))\\
& + & \sum_{i \in S} \sum_{j \in S \setminus \{i\}} \ell_{(i,j)}^{\vec{\sigma}}(\vec{\sigma}_{-S}, \vec{\tau}_S).
\end{array}
\end{equation}

Note that $\sum_{i \in S} \sum_{j \in S \setminus \{i\}} \ell_{(i,j)}^{\vec{\sigma}}(\vec{\sigma}_{-S}, \vec{\tau}_S) = \sum_{i \in S} \sum_{j \in S \setminus \{i\}} \ell_{(j,i)}^{\vec{\sigma}}(\vec{\sigma}_{-S}, \vec{\tau}_S)$. Moreover, we can write 
$\ell_{(j,i)}^{\vec{\sigma}}(\vec{\sigma}_{-S}, \vec{\tau}_S)$ as $\sum_{t_i \in T_i} p(i, t_i) \ell_{(j,i)}^{\vec{\sigma}}(t_i, (\vec{\sigma}_{-S}, \vec{\tau}_S))$. Therefore, by Lemma~\ref{lem:bayesian-expected-loss}, we get that 

$$
\ell_{(j,i)}^{\vec{\sigma}}(\vec{\sigma}_{-S}, \vec{\tau}_S) \le \sum_{t_i \in T_i} - p(i, t_i)\cdot \frac{ \Delta^{(\vec{\sigma}_{-S}, \vec{\tau}_S)}_{(i, t_i)} C}{2|A_{-i}|}.
$$

This implies that $$
\sum_{i \in S} \sum_{j \in S \setminus \{i\}} \ell_{(j,i)}^{\vec{\sigma}}(\vec{\sigma}_{-S}, \vec{\tau}_S) \le \sum_{i \in S}\sum_{t_i \in T_i} -p(i, t_i) \cdot (|S| - 1) \cdot 
\frac{ \Delta^{(\vec{\sigma}_{-S}, \vec{\tau}_S)}_{(i, t_i)} C}{2|A_{-i}|}$$

Inserting this into Equation~\ref{eq:expectation-bayesian} gives that $$\sum_{i \in S} u_i^{\vec{\sigma}}(\vec{\sigma}_{-S}, \vec{\tau}_S) \le \sum_{i \in S} \sum_{t_i \in T_i} p(i, t_i) \left[(u_i(t_i, \vec{\sigma}) + \Delta^{(\vec{\sigma}_{-S}, \vec{\tau}_S)}_{(i, t_i)} \cdot (M+1)) - (|S| - 1) \cdot 
\frac{ \Delta^{(\vec{\sigma}_{-S}, \vec{\tau}_S)}_{(i, t_i)} C}{2|A_{-i}|} \right].$$

Since $C = 2|A|\cdot (M+1)$, it follows that $$(|S| - 1) \cdot 
\frac{ \Delta^{(\vec{\sigma}_{-S}, \vec{\tau}_S)}_{(i, t_i)} C}{2|A_{-i}|} \ge \Delta^{(\vec{\sigma}_{-S}, \vec{\tau}_S)}_{(i, t_i)} \cdot (M+1),$$

and therefore 

$$\sum_{i \in S} u_i^{\vec{\sigma}}(\vec{\sigma}_{-S}, \vec{\tau}_S) \le \sum_{i \in S} \sum_{t_i \in T_i} p(i, t_i) u_i(t_i, \vec{\sigma}) = \sum_{i\in S} u_i(\vec{\sigma}).$$

This implies that there exists at least one player $i \in S$ such that $u_i^{\vec{\sigma}}(\vec{\sigma}_{-S}, \vec{\tau}_S) \le u_i(\vec{\sigma})$, which contradicts the (a1) assumption. This proves Theorem~\ref{thm:main} part (a). Part (b) follows from the same argument with $S = P$.

\commentout{

To prove the correctness of $\Gamma^{\vec{\sigma}}$, we start with an analogue of Proposition~\ref{prop:different-strat}.

\begin{proposition}\label{prop:distinguishable2}
Let $\Gamma = (P,T,q,A,U)$ be a Bayesian game with independent types and let $\vec{\sigma}$ be a Nash equilibrium of $\Gamma$. Let $S$ be a subset of players and $\vec{\tau}_S$ be a strategy profile in $\Gamma^{\vec{\sigma}}$ for players in $S$ such that (a1) $u_i^{\vec{\sigma}}(\vec{\sigma}_{-S}, \vec{\tau}_S) > u_i^{\vec{\sigma}}(\vec{\sigma})$ for all $i \in S$ and (b1) $u_i^{\vec{\sigma}}(\vec{\sigma}_{-S}, \vec{\tau}_S) \ge u_i^{\vec{\sigma}}(\vec{\sigma}_{-S}, \vec{\tau}_{S \setminus \{i\}}, \tau'_i)$ for all $i \in S$ and all strategies $\tau'_i$ for $i$. Then, there exist two players $i,j \in S$ and two actions $a_i \in A_i, a_j \in A_j$ such that 
$$
\begin{array}{l}
p_{(i,a_i)}^{\tilde{\tau}_i} > p_{(i, a_i)}^{\sigma_i}\\
p_{(j,a_j)}^{\tilde{\tau}_j} > p_{(j, a_j)}^{\sigma_j}.
\end{array}
$$
\end{proposition}

\begin{proof}
Suppose that $p_{(i,a_i)}^{\tilde{\tau}_i} = p_{(i, a_i)}^{\sigma_i}$ for all players $i \in S$ and all actions $a_i \in A_i$. Then, all bets would give $0$ expected utility. Using the same reasoning as in the proof of Proposition~\ref{prop:distinguishable1}, this would imply that all players in $S$ get the same utility with $\vec{\tau}_S$ and with $\vec{\sigma}_S$, which contradicts (a1). It follows that there is at least one player $i \in S$ and one action $a_i \in A_i$ such that $p_{(i,a_i)}^{\tilde{\tau}_i} > p_{(i, a_i)}^{\sigma_i}$. 

Suppose that there is only one player $i$ that plays an action with higher probability in $\tau_i$ than in $\sigma_i$. As in the proof of Proposition~\ref{prop:different-strat}, (b1) implies that all other agents in $S$ must bet on $(i, a_i)$ for some action $a_i \in A_i$. This means that there is an action $a \in A_i$ such that the probability that someone else voted on $(i,a)$ is at least $\frac{1}{|A_i|}$. Then, the same argument used in the proof of Proposition~\ref{prop:different-strat} shows that $i$ could increase its utility by playing another action instead of $a$. 
\end{proof}

Proposition~\ref{prop:distinguishable2} shows that if $S$ and $\vec{\tau}_S$ satisfy (a1) and (b1), there exist at least two players $i,j$ and two actions $a_i \in A_i, a_j \in A_j$ such that betting on $(i,a_i)$ and $(j, a_j)$ gives strictly positive utility on expectation. This implies that all agents must bet with probability $1$. As in the case of normal-form games, this implies that there exists a player $i$ and an action $a_i \in A_i$ such that, in expectation, at least $\frac{1}{|A|_{\max}}$ players bet on $(i,a_i)$ (see Proposition~\ref{prop:expected-loss}). However, as in the proof of Proposition~\ref{prop:different-strat}, this would imply that $i$ can increase its utility by switching to a different strategy where it never plays $a_i$. This contradicts the fact that $S$ and $\vec{\tau}_S$ satisfy (b1).
}

\commentout{

The construction for Bayesian games is similar to the one for normal-form games given in Section~\ref{sec:construction}. We allow players to bet on the other players' actions, reward them if they get it right, and punish them if they get it wrong. However, there is one subtle difference between the two scenarios. In normal-form games, whenever a player $i$ defects from $\sigma_i$ to $\tau_i$, by definition it is guaranteed that there will be an action $a \in A$ such that $a$ is played with more probability in $\tau$ than in $\sigma_i$. However, this may not be the case in Bayesian games, as the following example shows.

\begin{example}\label{example:indistinguishable}
Consider a Bayesian game $\Gamma = (P, T, q, A, U)$ such that $P = \{1,2,3\}$, $T_i = A_i = \{0,1\}$ for all $i \in P$, and $q$ is the uniform distribution over the subset of $T$ such that the sum of coordinates is odd (i.e., $\{(t_1,t_2,t_3) \in T \mid t_1 + t_2 + t_3 \equiv 1 \bmod 2\}$). Given action profile $(a_1, a_2, a_3)$, all players get $1$ utility if $a_1+a_2+a_3$ is odd, and they all get $0$ utility otherwise.

Consider two strategy profiles. In $\vec{\sigma}$, each player independently plays $0$ and $1$ with equal probability. In $\vec{\tau}$, each player plays its own type.
\end{example}

In Example~\ref{example:indistinguishable}, it is straightforward to check that $\vec{\sigma}$ and $\vec{\tau}$ are both Nash equilibria in which each player plays $0$ and $1$ with equal probability. However, players get $1/2$ expected utility with $\vec{\sigma}$ and $1$ expected utility with $\vec{\tau}$. The main difference between the strategies is that, even though each player has $1/2$ chance of having type $0$ and $1/2$ chance of having type $1$, their types are correlated in such a way that their sum is always an odd number. Moreover, we can easily check that players' types are \textbf{pairwise independent}. This means that the probability that a player $i$ plays a certain action $a \in \{0,1\}$ with $\tau_i$ is exactly the same as the probability that $i$ plays $a$ with $\sigma_i$, even if we condition the probability on another player $j$'s type. This implies that $j$ cannot guess $i$'s action with $\vec{\tau}$ any better than with $\vec{\sigma}$.

The analysis of Example~\ref{example:indistinguishable} shows that the construction of Section~\ref{sec:construction} does not quite work in this setting. However, note that if all three players play $\vec{\tau}$, even though player $1$ cannot tell if player $2$ or player $3$ are going to play $0$ or $1$, she knows that if her type is $0$, then players $2$ and $3$ are going to play either $(0,1)$ or $(1,0)$. Otherwise, they are going to play $(0,0)$ or $(1,1)$. This means that, even if $\tau_i$ and $\sigma_i$ are indistinguishable in the vacuum for each $i \in P$, if player $1$ has type $0$, she knows that the probability of players $2$ and $3$ playing $(0,1)$ is higher with $\vec{\tau}$ than with $\vec{\sigma}$. This suggests that, instead of allowing players to bet over what action another player will play, they should be able to bet on action profiles of other players. This motivates the following construction. Define $\Gamma^{\vec{\sigma}} := (P, T, q, A^{\vec{\sigma}}, U^{\vec{\sigma}})$. We define the action set $A_i$ of each player as $A^{\vec{\sigma}}_i := A_i \cup A$. 
As in Section~\ref{sec:construction}, we can view each element of $A^{\vec{\sigma}}_i$ with a pair $(a,b)$, where $a$ is the actual action that $i$ plays and $b \in A_{-i}$ is $i$'s bet over the action profile of other players. Again, for convenience, whenever $i$ plays action $a_i$ without placing a bet, we will use the pair $(a_i, \bot)$ instead of $a_i$. The utility of each player $i$ is defined as follows:

$$\begin{array}{lll}
 u^{\vec{\sigma}}_i((a_1, b_1), (a_2, b_2), \ldots, (a_n, b_n)) &  :=  & u_i(a_1, a_2, \ldots, a_n) \\
  & + & w_i^{\vec{\sigma}}((a_1, b_1), (a_2, b_2), \ldots, (a_n, b_n))\\
  & + & \ell_i^{\vec{\sigma}}((a_1, b_1), (a_2, b_2), \ldots, (a_n, b_n)),
  \end{array}
  $$
where $w_i^{\vec{\sigma}}$ and $\ell_i^{\vec{\sigma}}$ are defined similarly to their counterparts in Section~\ref{sec:construction}. In this case, let $p^{\vec{\sigma}}_{i, t_i, \vec{a}_{-i}}$ denote the probability that, given $i$'s type $t_i$, ignoring the bets placed by the players, and assuming that all players play their part of $\vec{\sigma}$, the remaining players play action profile $\vec{a}_{-i}$. Then, we can define $w_i^{\vec{\sigma}}$ as follows:
$$w_i^{\vec{\sigma}}((a_1, b_1), (a_2, b_2), \ldots, (a_n, b_n)) := 
\left\{
\begin{array}{ll}
0 & \mbox{if } b_i = \bot\\
1 - p_{(i, t_i, \vec{a}_{-i})}^{\vec{\sigma}} & \mbox{if } b_i \mbox{ is of the form } \vec{a}_{-i}\\
- p_{(i, t_i, \vec{a}_{-i})}^{\vec{\sigma}} & \mbox{if } b_i \mbox{ is of the form } \vec{a}_i' \mbox{ with } \vec{a}_i' \not= \vec{a}_{-i}.\\
\end{array}
\right.$$

Finally, we define $\ell_i^{\vec{\sigma}} := \sum_{j \not = i} \ell_{j}^{\vec{\sigma}}$, where $$
  \ell_{j}^{\vec{\sigma}}((a_1, b_1), (a_2, b_2), \ldots, (a_n, b_n)) := 
\left\{
\begin{array}{ll}
0 & \mbox{if } b_j = \bot \mbox{ or } b_j \mbox{ is of the form } \vec{a}_{-j}' \mbox{ with } \vec{a}_{-j}' \not= \vec{a}_{-i}\\
-C & \mbox{otherwise,}
\end{array}\right.$$
where $C$ is a large constant defined in the next section.

\subsection{Proof of Correctness}\label{sec:bayesian-correctness}

The proof is similar in spirit to the one in Section~\ref{sec:proof}. Given a player $i \in P$ and type $t_i \in T_i$, let $D_{(i,t_i)}^{\vec{\sigma}}$ denote the distribution of action profiles played by the remaining players assuming that they play $\vec{\sigma}$. We also use $\tilde{D}_{(i,t_i)}^{\vec{\sigma}}$ to denote the projection of $D_{(i,t_i)}^{\vec{\sigma}}$ onto its first component (i.e., by getting rid of the bets). The following proposition shows that if a coalition defects from $\vec{\sigma}$ and all players win, at least one player can tell the behavior of the other players from $\vec{\sigma}$. 

\begin{proposition}\label{prop:bayesian-distinguishable}
Let $S$ be a subset of players and $\vec{\tau}_S$ be a strategy for players in $S$ such that $$u_i^{\vec{\sigma}}(\vec{\sigma}_{-S}, \vec{\tau}_S) > u_i^{\vec{\sigma}}(\vec{\sigma})$$
for all $i \in S$. Then, for each player $i \in S$ there exists a type $t_i \in T_i$ such that $$\tilde{D}_{(i,t_i)}^{\vec{\tau}} \not = \tilde{D}_{(i,t_i)}^{\vec{\sigma}}.$$
\end{proposition}

\begin{proof}
Fix player $i \in S$ and suppose that $\tilde{D}_{(i,t_i)}^{\vec{\tau}} = \tilde{D}_{(i,t_i)}^{\vec{\sigma}}$ for all types $t_i \in T_i$. Then, by assumption, we have that 
$$ u_i^{\vec{\sigma}}(\tau_{S}, \vec{\sigma}_{-S}) = u_i^{\vec{\sigma}}(\tau_i, \vec{\sigma}_{-i}).$$

By construction of $\Gamma^{\vec{\sigma}}$, we have that $i$ does not benefit from placing bets in this case. This means that $$u_i^{\vec{\sigma}}(\tau_{S}, \vec{\sigma}_{-S}) = u_i^{\vec{\sigma}}(\tilde{\tau}_i, \vec{\sigma}_{-i}), $$  
and therefore that $$ u_i^{\vec{\sigma}}(\tau_{S}, \vec{\sigma}_{-S}) = u_i(\tilde{\tau}_i, \vec{\sigma}_{-i}).$$

However, since $ u_i^{\vec{\sigma}}(\tau_{S}, \vec{\sigma}_{-S}) > u_i^{\vec{\sigma}}(\vec{\sigma})$, this implies that $u_i(\tilde{\tau}_i, \vec{\sigma}_{-i}) > u_i(\vec{\sigma}),$ which contradicts the assumption that $\vec{\sigma}$ is a Nash equilibrium of $\Gamma$. This proves the proposition.
\end{proof}

Proposition~\ref{prop:bayesian-distinguishable} shows that there exist at least one player $i$ and one type $t_i \in T_i$ such that $i$ benefits from betting whenever it has type $T_i$.  

The rest of the proof is almost identical to the one in Section~\ref{sec:proof}, except for the fact that we require a much larger constant $C$. In fact, let $q(i, t_i)$ denote the probability that $i$'s type is $t_i$, and let $q_min = \min_{i, t_i}\{q(i, t_i)\}$. Then, we can prove an analogue of Proposition~\ref{prop:expected-loss} that goes as follows.

\begin{proposition}
Let $\alpha_{(i,\vec{a}_{-i})}^{\tau_i}$ denote the probability that player $i$ bets on $\vec{a}_{-i}$ when playing $\tau_i$. Then, if $S$ and $\vec{\tau}_S$ satisfy that 
$u_i^{\vec{\sigma}}(\vec{\sigma}_{-S}, \vec{\tau}_S) > u_i^{\vec{\sigma}}(\vec{\sigma})$ for all $i \in S$, there exists a player $i \in S$ and an action profile $a__{-i} \in A_{-i}$ such that $$\sum_{j \in S} \alpha_{\vec{a}_{-j}}^{\tau_j} \ge \frac{q_{min}}{|A|}.$$
\end{proposition}
}

\section{Proof of Theorems~\ref{thm:main} and \ref{thm:main2}, Infinite Case}\label{sec:infinite}

As stated in Section~\ref{sec:results}, the proof of Theorem~\ref{thm:main} can be generalized to games in which the sets of actions and types are infinite. We'll prove this generalization in the case where $\Gamma$ is a normal-form game. The generalization in the case of Bayesian games is identical, except that we use the proofs and constructions from Sections~\ref{sec:bayesian-construction} and \ref{sec:arbitrary-bayesian} as a starting point (as opposed to the proofs and constructions of Section~\ref{sec:proof}).

For ease of exposition, we will begin by generalizing the constructions given in previous sections to games with a countable number of actions. The proof for arbitrary sets is given in Section~\ref{sec:uncountable} and follows the same lines.

\subsection{Countable Case}\label{sec:countable}

As suggested by Example~\ref{example:infinite}, the proof given in Section~\ref{sec:construction} does not quite work when the sets of actions are infinite. In fact, the construction for finite games uses the fact that if a coalition $S$ defects, there exists a player $i$ and an action $a$ such that $\frac{1}{|A|_{max}}$ of the players bet on $(i,a)$ in expectation. This allowed us to adjust $C$ so that it is never beneficial for $i$ to play $a$. If the game is infinite, there is no lower bound of the expected number of players that bet on a particular action, and therefore there is no value of $C$ that does the trick.

We can avoid this problem by slightly modifying the bet space. Suppose that instead of betting on pairs $(i,a)$ with $i \in P$ and $a \in A_i$, players can bet on pairs $(i,X)$ with $i \in P$ and $X \subseteq A_i$. Given the actions and bets $(a_1, b_1), \ldots, (a_n, b_n)$ of the players, we define the utilities of $\Gamma^{\vec{\sigma}}$ as in Section~\ref{sec:construction}, except that 

$$w_i^{\vec{\sigma}}((a_1, b_1), (a_2, b_2), \ldots, (a_n, b_n)) := 
\left\{
\begin{array}{ll}
0 & \mbox{if } b_i = \bot\\
1 - p_{(j, X)}^{\vec{\sigma}} & \mbox{if } b_i \mbox{ is of the form } (j,X) \mbox{ with 
} a_j \in X\\
- p_{(j, X)}^{\vec{\sigma}} & \mbox{if } b_i \mbox{ is of the form } (j,X) \mbox{ with 
} a_j \not\in X\\
\end{array}
\right.$$

where $p_{(j, X)}^{\vec{\sigma}}$ is the probability that the first component of $j$'s action is in $X$, assuming that $j$ uses strategy $\sigma_j$, and

  $$\ell_{(i,j)}^{\vec{\sigma}}((a_1, b_1), (a_2, b_2), \ldots, (a_n, b_n)) := 
\left\{
\begin{array}{ll}
0 & \mbox{if } b_j = \bot \mbox{ or } b_j \mbox{ is of the form } (i,X) \mbox{ with } a \not \in X\\
-M - 1& \mbox{otherwise,}
\end{array}\right.$$

 where $M$ is the difference between the supremum and the infimum utility that a player can get in $\Gamma$. An analogous reasoning to that  of Lemma~\ref{lemma:bets-basic} shows that the expected utility $x^{\tau_j}_{(j, a_j)}$ that a player gets by betting on $(j,X)$ conditioned to the fact that $j$ is playing strategy $\tau_j$, is given by $$x^{\tau_j}_{(j, a_j)} := p_{(j, X)}^{\tau_j} - p_{(j, X)}^{\sigma_j}.$$

This implies that if player $j$ plays some action $a$ with more probability in $\tau_j$ than in $\sigma_j$, including action $a$ in $X$ strictly increases the expected utility of betting on $(j, X)$. This proves the following lemma.

\begin{lemma}\label{lem:same-bets}
Given a normal-form game $\Gamma = (P, A, U)$, let $S \subset P$ be a subset of players and $\vec{\tau}_S$ be a strategy for players in $S$ in $\Gamma^{\vec{\sigma}}$ such that $u_i^{\vec{\sigma}}(\vec{\sigma}_{-S}, \vec{\tau}_{S}) \ge  u_i^{\vec{\sigma}}(\vec{\sigma}_{-S}, \vec{\tau}_{S \setminus\{i\}}, \tau'_i)$ for all $i \in S$ and all strategies $\tau'_i$ for $i$. For each player $i \in S$, let $X_i^>$ (resp., $X_i^<$) be the set of all actions $a \in A_i$ such that the probability that $i$ plays $a$ with $\tilde{\tau}_i$ is strictly greater (resp., smaller) than the probability that $i$ plays $a$ with $\sigma_i$. Then, all bets of the form $(i,X)$ that may occur with positive probability when playing $(\vec{\sigma}_{-S}, \vec{\tau}_{S})$ must satisfy that $X_i^> \subseteq X$ and $X_i^< \cap X = \emptyset$.
\end{lemma}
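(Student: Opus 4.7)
The plan is a direct deviation argument. Suppose for contradiction that some player $j\in S$ places a bet of the form $(i,X)$ with positive probability under $\tau_j$, while either $X_i^>\not\subseteq X$ or $X\cap X_i^<\neq\emptyset$. I will exhibit a strategy $\tau'_j$ that strictly improves $j$'s expected utility against $(\vec{\sigma}_{-S},\vec{\tau}_{S\setminus\{j\}})$, contradicting the hypothesis.

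The technical heart of the argument is the analogue of Lemma~\ref{lemma:bets-basic}: under any strategy profile $\vec{\tau}$, the expected additional utility of bet $(i,X)$ equals
\[
p^{\tilde{\tau}_i}_{(i,X)}-p^{\sigma_i}_{(i,X)}=\sum_{a\in X}\bigl(p^{\tilde{\tau}_i}(a)-p^{\sigma_i}(a)\bigr).
\]
By the definitions of $X_i^>$ and $X_i^<$, each summand with $a\in X_i^>$ is strictly positive, each summand with $a\in X_i^<$ is strictly negative, and every other summand vanishes. Define the ``improved'' set $X':=(X\cup X_i^>)\setminus X_i^<$, which differs from $X$ by the assumed violation. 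Passing from $X$ to $X'$ adds only strictly positive summands (those indexed by $X_i^>\setminus X$) and deletes only strictly negative ones (those indexed by $X\cap X_i^<$), so bet $(i,X')$ strictly beats bet $(i,X)$ in expected payoff.

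Next I would define $\tau'_j$ by replacing, in $\tau_j$, every action of the form $(a_j,(i,X))$ with $(a_j,(i,X'))$, preserving the probability, the first component $a_j$, and every other action of $\tau_j$. This substitution leaves the marginal distribution of $j$'s first-component action unchanged, so both the base-game utility $u_j(\vec{\sigma}_{-S},\tilde{\tau}_S)$ and every penalty $\ell_{(j,k)}$ with $k\neq j$ (which depends only on $j$'s action and on the other players' bets on $j$) are unaffected. Consequently only $w_j$ changes, and its expected value strictly increases by a positive quantity proportional to the (positive) probability with which bet $(i,X)$ originally occurred. This yields $u_j^{\vec{\sigma}}(\vec{\sigma}_{-S},\vec{\tau}_{S\setminus\{j\}},\tau'_j)>u_j^{\vec{\sigma}}(\vec{\sigma}_{-S},\vec{\tau}_S)$, contradicting the hypothesis on $\vec{\tau}_S$.

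The main obstacle is the self-bet case $j=i$, where $\ell_{(i,i)}$ also depends on the replaced bet and could in principle offset the gain in $w_i$. Here the large deterrent $M+1$ makes the total expected contribution of any self-bet $(i,X)$ at most $-p^{\sigma_i}_{(i,X)}-M\cdot p^{\tilde{\tau}_i}_{(i,X)}\le 0$, with equality only when both probabilities are zero. Thus any self-bet placed with positive probability in a best response must be vacuous, in the sense that $X$ meets neither the support of $\sigma_i$ nor that of $\tilde{\tau}_i$; such bets contribute zero to every player's utility and can be discarded without changing anything, after which the substitution argument above applies to every remaining bet $(i,X)$ with $j\neq i$ and gives the full conclusion.
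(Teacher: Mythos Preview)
Your argument is correct and matches the paper's, which proves the lemma in one sentence by noting that the expected payoff of a bet $(j,X)$ equals $p^{\tilde{\tau}_j}_{(j,X)} - p^{\sigma_j}_{(j,X)}$ and hence strictly increases upon adjoining any element of $X_j^{>}$ or deleting any element of $X_j^{<}$. You are more careful than the paper in flagging the self-bet case $j=i$; one small correction there is that your expected-contribution formula $-p^{\sigma_i}_{(i,X)}-M\cdot p^{\tilde{\tau}_i}_{(i,X)}$ presumes the bet is chosen independently of the first-component action, whereas the cleaner route is the pointwise observation that each pure action $(a_i,(i,X))$ contributes at most $0$ to $w_i+\ell_{(i,i)}$, with equality only when $a_i\notin X$ and $p^{\sigma_i}_{(i,X)}=0$.
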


With Lemma~\ref{lem:same-bets}, we can use a similar argument to the one used in the finite case. Suppose that a coalition $S$ defects from the proposed strategy  following strategy profile $\vec{\tau}_S$ in such a way that (a1) all players in $S$ get more utility with $\vec{\tau}_S$ than with $\vec{\sigma}$ and (b1) no player in $S$ can further increase its utility by defecting from $\vec{\tau}_S$. Then, it is straightforward to check that, if (a1) is satisfied, all players in $S$ bet with probability $1$. Therefore, there exists a player $i \in S$ such that at least one other player bets on $i$ in expectation. By Lemma~\ref{lem:same-bets}, all of these bets must include all elements in $X_i^>$ and none of them includes any element of $X_i^<$. Therefore, if $i$ reduces the probability of playing any action in $X_i^>$ by $\delta$ and increases the probability of playing any action in $X_i^<$ by $\delta$, an analogous reasoning to the one at the end of Section~\ref{sec:proof} shows that the utility of $i$ increases at least by $$\delta (M+1) - \delta M,$$

which contradicts the (b1) assumption. The proof for Bayesian games follow the same generalization (betting on sets of actions or action profiles rather than on individual actions or action profiles), using as a basis the constructions in Sections~\ref{sec:bayesian-construction} and ~\ref{sec:arbitrary-bayesian}.

\subsection{Uncountable Case}\label{sec:uncountable}

The idea used for the countable case can be generalized for any probability space, although the proof is  much more technical. We begin by giving the basic notions on probability theory, and then move to the desired generalization.

\subsubsection{Basic Definitions}

Given a set $X$, a \emph{$\sigma$-algebra} on $X$ is a collection $\Sigma$ of subsets of $X$ that is closed under complement, countable unions, and countable intersections. This means that, 
\begin{itemize}
    \item $Y \in \Sigma \Longrightarrow \overline{Y} \in \Sigma$, where $\overline{Y}$ denotes the complement of $Y$ in $X$ (i.e., $X \setminus Y$).
    \item If $\{X_i\}_{i = 1}^{\infty}$ is a countable collection of elements of $\Sigma$, then both $\bigcup_{i \ge 1} X_i$ and $\bigcup_{i \ge 1} X_i$ are also elements of $\Sigma$.
\end{itemize}

 A pair $(X, \Sigma)$ consisting of a set $X$ and a $\sigma$-algebra $\Sigma$ on $X$ is called a \emph{measurable space}. Given a measurable space $(X, \Sigma)$, a measure on $(X, \Sigma)$ is a function $\mu : \Sigma \longrightarrow \mathbb{R} \cup \{-\infty, +\infty\}$ that satisfies 
\begin{itemize}
    \item $\mu(\emptyset) = 0$.
    \item \textbf{Non-negativity:} $\mu(Y) \ge 0$ for all $Y \in \Sigma$.
    \item \textbf{Countable additivity:} For all countable collections  $\{X_i\}_{i = 1}^{\infty}$ of pairwise disjoint sets in $\Sigma$, $\mu(\bigcup_{i = 1}^{\infty} X_i) = \sum_{i = 1}^{\infty} \mu(X_i)$.
\end{itemize}
Intuitively, a measure is a non-negative function that assigns $0$ to the empty set and such that the measure of a countable union disjoint sets is the sum of their measures.

With these definitions, we can define a probability space. A probability space is a tuple $(X, \Sigma, P)$, where $(X, \Sigma)$ is a $\sigma$-algebra, and $P$ is a measure such that the image of all elements of $\Sigma$ is in $[0,1]$, and such that the measure of the whole space is $1$. More precisely, $P$ must satisfy that

\begin{itemize}
    \item $P(E) \in [0,1]$ for all $E \in \Sigma$.
    \item $P(X) = 1$.
\end{itemize}

In the next sections we will assume that, in a normal-form game $\Gamma$, the action sets $A_i$ of each player have an embedded $\sigma$-algebra $\Sigma_i$, and the strategy of each player $i$ consists of choosing a measure $Q_i$ such that $(A_i, \Sigma_i, Q_i)$ is a probability space. Intuitively, this means that $i$ cannot choose which sets of $A_i$ are measurable, but it can choose what probability it is assigned to each of these sets (i.e., with what probability each action is sampled). Given the strategy $P_i$ of each player, 
 the probability space $(A, \Sigma, Q)$ that gives the probability of each subset of action profiles when each player plays $Q_i$ is precisely the product of $(A_i, \Sigma_i, P_i)$ for all $i \in P$, in which $A := A_1 \times A_2 \times \ldots \times A_n$, $\Sigma := \sigma(\Sigma_1 \otimes \ldots \otimes \Sigma_n)$ (i.e., the $\sigma$-algebra generated by sets of the form $(X_1, \ldots, X_n)$ with $X_i \in \Sigma_i$ for all $i$), and $Q := Q_1 \times \ldots \times Q_n$, which is the unique  measure function such that $Q(X_1, \ldots, X_n) = Q_1(X_1)Q_2(X_2) \ldots Q_n(X_n)$. Fixing $(A, \Sigma, Q)$, the expected utility of a player $i$ is given by the Lebesgue integral $$u_i(Q) := \int_{A} u_i(X) \ dQ.$$

In order to avoid confusion between the strategy profile $\vec{\sigma}$ of Theorems~\ref{thm:main} and \ref{thm:main2} and the $\sigma$ used in the definition of $\sigma$-algebras, in the next sections we will use $Q$ instead of $\vec{\sigma}$ to denote the Nash equilibrium of Theorems~\ref{thm:main} and \ref{thm:main2}.

 \subsubsection{Construction of $\Gamma^Q$}\label{sec:construction-uncountable}

The construction of $\Gamma^Q$ for arbitrary probability spaces is quite similar to the one given in Section~\ref{sec:countable}. In fact, we define $A_i^Q := A_i \cup \left(A_i  \times \bigcup_{j \in P} (\{j\} \times \Sigma_j) \right)$. Intuitively, $i$ can either play an action in $A_i$, or play an action in $A_i$ and simultaneously bet that the action that another player $j$ plays is in some measurable subset $S \in \Sigma_j$. The $\sigma$-algebra of $A_i^Q$ is the one spanned by $\Sigma_i$ and  $\sigma(\Sigma_i \otimes \mathcal{P}(\{j\} \times \Sigma_j))$ for each $j \in P$ (where $\mathcal{P}(S)$ denotes the power set of a set $S$).

We define $U^Q$ exactly in the same way as in Section~\ref{sec:countable}. Note that  $p_{(j,X)}^Q$ is well defined since, by construction, $X$ must belong to $\Sigma_j$. In fact, $p_{(j,X)}^Q = Q_j(X)$.

\subsubsection{High Level Proof}

A similar argument to the one given in Lemma~\ref{lemma:bets-basic} shows that, if players follow strategy profile $Q'$, the expected utility that a player $i$ gets when betting on $(j, S)$ for some $S \in \Sigma_j$ is given by $\tilde{Q}'_j(S) - Q_j(S)$ (where, again, $\tilde{Q}'$ is the projection of $Q'$ onto its first component). Therefore, it is optimal for player $i$ to bet on sets that maximize $\tilde{Q}'_j(S) - Q_j(S)$ with probability $1$. However, it is not clear at first if such a bet always exists, since it could be the case that the supremum of $\{\tilde{Q}'_j(S) - Q_j(S)\}_{S \in \Sigma_j}$ is never attained in $\Sigma_j$. The first part of the proof is precisely showing that such a bet always exists. 

\begin{proposition}\label{prop:max-bet-uncountable}
Let $\Gamma^Q$ be the game constructed in Section~\ref{sec:construction-uncountable} and let $j \in P$ be a player that plays strategy $Q'_j$. Then, for all other players $i \in P$, there exists a maximal and a minimal bet of the form $(j, X)$.
\end{proposition}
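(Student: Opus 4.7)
The plan is to recognize that the expected gain from betting on $(j,X)$, namely $\tilde{Q}'_j(X) - Q_j(X)$, is the evaluation at $X$ of a finite signed measure on $(A_j, \Sigma_j)$, and then to invoke the Hahn decomposition theorem to exhibit explicit maximizers and minimizers.

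First I would verify that $\tilde{Q}'_j$ is a well-defined probability measure on $(A_j, \Sigma_j)$. Since the $\sigma$-algebra on $A_j^Q$ is generated in such a way that the projection $\pi_j : A_j^Q \to A_j$ onto the first component is measurable, setting $\tilde{Q}'_j(X) := Q'_j(\pi_j^{-1}(X))$ for $X \in \Sigma_j$ yields a probability measure on $(A_j,\Sigma_j)$. By the argument of Lemma~\ref{lemma:bets-basic} transposed to this setting, the expected additional utility that player $i$ obtains by placing the bet $(j,X)$ is exactly
\[
\mu(X) := \tilde{Q}'_j(X) - Q_j(X),
\]
so finding maximal and minimal bets amounts to maximizing and minimizing $\mu$ over $\Sigma_j$.

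Next I would observe that $\mu$ is a finite signed measure on $(A_j,\Sigma_j)$: it is the difference of two probability measures, hence countably additive and bounded by $1$ in absolute value. The Hahn decomposition theorem therefore supplies a partition $A_j = P \sqcup N$ with $P,N \in \Sigma_j$ such that $\mu(E) \ge 0$ for every measurable $E \subseteq P$ and $\mu(E) \le 0$ for every measurable $E \subseteq N$. For any $X \in \Sigma_j$, writing $X = (X \cap P) \sqcup (X \cap N)$ and using countable additivity gives
\[
\mu(X) = \mu(X \cap P) + \mu(X \cap N) \le \mu(X \cap P) \le \mu(P),
\]
and symmetrically $\mu(X) \ge \mu(N)$. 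Consequently $(j,P)$ is a maximal bet and $(j,N)$ is a minimal bet, which is precisely the statement of the proposition.

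The only delicate step is the measurability setup at the beginning: one has to confirm that the $\sigma$-algebra defined in Section~\ref{sec:construction-uncountable} makes $\pi_j$ measurable so that the pushforward $\tilde{Q}'_j$ makes sense as a measure on $\Sigma_j$, and that $Q_j$ and $\tilde{Q}'_j$ are defined on the same $\sigma$-algebra so that their difference is a signed measure. Once that is pinned down, the Hahn decomposition does all the work and gives explicit extremal sets rather than a mere supremum/infimum, which is exactly what subsequent arguments in Section~\ref{sec:uncountable} will need.
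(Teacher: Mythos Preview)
Your proof is correct and takes a genuinely different route from the paper. The paper does not invoke the Hahn decomposition theorem; instead it proves from scratch (Lemma~\ref{lem:global-bounds}) that any bounded countably additive set function on a $\sigma$-algebra attains its supremum and infimum. The argument there picks a sequence $X_n$ with $f(X_n) > L - 2^{-n}$, uses an inclusion--exclusion estimate (Lemma~\ref{lem:set-diffs}) to show that the finite intersections $Y_m^n = \bigcap_{i=m}^n X_i$ still satisfy $f(Y_m^n) > L - 2^{-(m-1)}$, passes to $Y_m^\infty$, and then takes the increasing union $\bigcup_m Y_m^\infty$ via continuity from below (Lemma~\ref{lem:union}). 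This is, in effect, a bare-hands construction of the positive set in the Hahn decomposition. Your approach is shorter and more standard: recognizing $\tilde{Q}'_j - Q_j$ as a finite signed measure lets the Hahn theorem do everything in one stroke, and the resulting partition $A_j = P \sqcup N$ also feeds directly into Proposition~\ref{prop:min-bet-uncountable}, where the paper again argues by hand that the family of maximizers is closed under countable unions and intersections. The paper's only advantage is self-containment for a reader without measure theory at hand; yours is the cleaner argument and yields more structure (disjointness of the extremal sets) for free.
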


The proof of Proposition~\ref{prop:max-bet-uncountable} is somewhat technical and will be shown in Section~\ref{sec:proofs-technical}. Once we have Proposition~\ref{prop:max-bet-uncountable}, the second part of the proof is showing that, among all sets $X$ that maximize or minimize the bets on $j$, there exists one that minimizes $Q(X)$.

\begin{proposition}\label{prop:min-bet-uncountable}
Let $\Gamma^Q$ be the game constructed in Section~\ref{sec:construction-uncountable} and let $j \in P$ be a player that plays strategy $Q'_j$. Let $\Sigma_{Q'_j}^+$ and $\Sigma_{Q'_j}^-$ be the subsets of $\Sigma_j$ that maximize and minimize $\tilde{Q}'_j(S) - Q_j(S)$, respectively. Then, there exists a set $X^+ \in \Sigma_{Q'_j}^+$  and a set $X^-\in \Sigma_{Q'_j}^-$ such that $X^+ \cap X^- = \emptyset$,  $Q(X^+) \le Q(X)$ for all $X \in \Sigma_{Q'_j}^+$ and $Q(X^-) \le Q(X)$ for all $X \in \Sigma_{Q'_j}^-$.
\end{proposition}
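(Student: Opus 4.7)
The plan is to view $f(S) := \tilde{Q}'_j(S) - Q_j(S)$ as a finite signed measure on $(A_j, \Sigma_j)$ and exploit two closure properties of the extremizer families $\Sigma_{Q'_j}^+$ and $\Sigma_{Q'_j}^-$. First, if $X_1, X_2 \in \Sigma_{Q'_j}^+$ both realize the maximum $M^+ := \sup_{S \in \Sigma_j} f(S)$, then inclusion--exclusion gives $f(X_1 \cap X_2) + f(X_1 \cup X_2) = f(X_1) + f(X_2) = 2M^+$, and since each summand on the left is at most $M^+$, both must equal $M^+$; hence $X_1 \cap X_2$ and $X_1 \cup X_2$ both lie in $\Sigma_{Q'_j}^+$. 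Second, for any decreasing chain $Y_1 \supseteq Y_2 \supseteq \ldots$ in $\Sigma_{Q'_j}^+$, continuity from above of the finite measures $\tilde{Q}'_j$ and $Q_j$ yields $f(\bigcap_n Y_n) = \lim_n f(Y_n) = M^+$, so $\bigcap_n Y_n \in \Sigma_{Q'_j}^+$. The same reasoning applies to $\Sigma_{Q'_j}^-$.

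With these closure properties in hand, I would construct a $Q_j$-minimizer in each family. Let $\alpha^+ := \inf\{Q_j(X) : X \in \Sigma_{Q'_j}^+\}$, which is in $[0,1]$ since $Q_j$ is a probability measure and $\Sigma_{Q'_j}^+$ is nonempty by Proposition~\ref{prop:max-bet-uncountable}. Pick a sequence $(X_n)$ in $\Sigma_{Q'_j}^+$ with $Q_j(X_n) \to \alpha^+$, define $Y_n := X_1 \cap \cdots \cap X_n$, and set $X_0^+ := \bigcap_n Y_n$. Then $(Y_n)$ is a decreasing sequence in $\Sigma_{Q'_j}^+$ with $Q_j(Y_n) \to \alpha^+$, and the two closure properties give $X_0^+ \in \Sigma_{Q'_j}^+$ and $Q_j(X_0^+) = \alpha^+$. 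Carry out the analogous construction to obtain $X_0^- \in \Sigma_{Q'_j}^-$ minimizing $Q_j$ on that family.

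To enforce disjointness, I would establish the identity $f(X_0^+ \cap X_0^-) = 0$ and then pass from $X_0^+$ to $X_0^+ \setminus X_0^-$. The bound $f(X_0^+ \setminus X_0^-) \le M^+$ combined with $f(X_0^+ \setminus X_0^-) = f(X_0^+) - f(X_0^+ \cap X_0^-) = M^+ - f(X_0^+ \cap X_0^-)$ forces $f(X_0^+ \cap X_0^-) \ge 0$; symmetrically, $f(X_0^- \setminus X_0^+) \ge M^-$ forces $f(X_0^+ \cap X_0^-) \le 0$, so the value is exactly $0$. Consequently $f(X_0^+ \setminus X_0^-) = M^+$ and $X^+ := X_0^+ \setminus X_0^-$ lies in $\Sigma_{Q'_j}^+$. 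Since $X^+ \subseteq X_0^+$, we have $Q_j(X^+) \le Q_j(X_0^+) = \alpha^+$, and by minimality this is an equality. Setting $X^- := X_0^-$ then yields $X^+ \cap X^- = \emptyset$ together with the required two minimality properties.

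The main obstacle is the disjointness clause: a direct approach via the Hahn--Jordan decomposition of $f$ would work but is heavy-handed. The cleaner route above sidesteps Hahn--Jordan entirely by using only the two-sided extremality of $M^+$ and $M^-$ to pin $f(X_0^+ \cap X_0^-)$ between $0$ and $0$. A secondary subtlety worth flagging is that the infimum $\alpha^+$ need not be attained by any single element of a minimizing sequence; this is precisely why the argument forms the decreasing chain of partial intersections and invokes continuity of finite measures from above, rather than trying to take an ad hoc limit of the original sequence $(X_n)$.
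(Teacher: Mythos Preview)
Your proof is correct and follows essentially the same strategy as the paper: establish that the extremizer families $\Sigma_{Q'_j}^+$ and $\Sigma_{Q'_j}^-$ are closed under finite and countable intersections, use this to extract $Q_j$-minimizers in each family, and then excise the overlap after showing it carries no $f$-mass. Your treatment is in fact somewhat more explicit than the paper's sketch---you invoke continuity from above of finite measures directly rather than rerunning the $Y_m^n$ machinery of Lemma~\ref{lem:global-bounds}, and you pin down $f(X_0^+ \cap X_0^-)=0$ by a clean two-sided squeeze, whereas the paper asserts the slightly stronger (but derivable from your argument) fact that $Q_j$ and $\tilde Q'_j$ both vanish on the intersection.
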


The proof of Proposition~\ref{prop:min-bet-uncountable} is quite similar to that of Proposition~\ref{prop:max-bet-uncountable} and is also shown in Section~\ref{sec:proofs-technical}. With these propositions we are ready to give complete the proof of correctness of the construction given in Section~\ref{sec:construction-uncountable}. Suppose that  a subset $S$ of players  with $|S| \ge 2$ defects from $Q_S$ to another strategy profile $Q'_S$ in such a way that (a1) $u_i^Q(Q_{-S}, Q'_S) > u_i^Q(Q)$ for all $i \in S$, and (b1) $u_i^Q(Q_{-S}, Q'_{S \setminus\{i\}}, Q''_i) \le u_i^Q(Q_{-S}, Q'_S)$ for all $i \in S$ and all strategies $Q''_i$ for $i$. Then, there must exist a player $i \in S$ and a subset $X \in \Sigma_i$ such that $\tilde{Q}'_i(X) \not = Q_i(X)$ since, otherwise, all bets would give $0$ expected utility and all players would get at most the expected utility they'd get with $Q$, which contradicts (a). If $\tilde{Q}'_i(X) \not = Q_i(X)$, it means that either $\tilde{Q}'_i(X) - Q_i(X) > 0$ or $\tilde{Q}'_i(\overline{X}) - Q_i(\overline{X}) > 0$, which implies that all other players in $S$ must place bets with probability $1$ (otherwise, it contradicts (b1)). It is easy to check that, if this is the case, there exists a player $j$ such that the expected number of other players that bet on $j$ is at least $1$. Let $X^+$ and $X^-$ be the subsets given in Proposition~\ref{prop:min-bet-uncountable} (for player $j$) and consider a strategy $Q''_j$ bets in the same way as in $Q'_j$ but that intuitively does the following in order to select an action in $A_j$:

\begin{enumerate}
    \item \textbf{Step 1:} $j$ samples an action $a$ according to $Q'_j$.
    \item \textbf{Step 2:} If $a \not \in X^{+}$, $j$ plays $a$.
    \item \textbf{Step 3:} Otherwise, $j$ sequentially samples an action $a$ according to $Q_j$ until $a \in X^{-}$. Then, $j$ plays $a$.
\end{enumerate}

Note that this algorithm always terminates since $\tilde{Q}'_j(X^{-}) - Q_j(X^{-}) < 0$, which means that $Q_j(X^-) > 0$. More precisely, what $j$ is doing is selecting an action in $A_i$ according to the probability measure $\tilde{Q}''_j$ on $(A_j, \Sigma_j)$ defined by $$\tilde{Q}''_j (X) := 
\tilde{Q}'_j(X \setminus X^+) + \frac{\tilde{Q}'_j(X^{+})}{Q_j(X^{-})} Q_j(X \cap X^{-}).$$

It is straightforward to check that $\tilde{Q}''_j(X)$ is indeed a probability measure on $(A_j, \Sigma_j)$:
\begin{itemize}
    \item $\tilde{Q}''_j(\emptyset) = \tilde{Q}'_j(\emptyset) + \frac{\tilde{Q}'_j(X^{+})}{Q_j(X^{-})} Q_j(\emptyset) = 0$.
    \item $\tilde{Q}''_j(A_j) = \tilde{Q}'_j(A_j) - \tilde{Q}'_j(X^+) + \frac{\tilde{Q}'_j(X^{+})}{Q_j(X^{-})} Q_j(X^{-}) = 1$.
    \item If $\{X_i\}_{i \ge 1}$ is a countable sequence of pairwise disjoint subsets in $\Sigma_j$, we have that $\left(\bigcup_{i \ge 1} X_i\right) \setminus X^{+} = \bigcup_{i \ge 1} (X_i \setminus X^{+})$, and $\left(\bigcup_{i \ge 1} X_i\right) \cap X^{-} = \bigcup_{i \ge 1} (X_i \cap X^{-})$. Moreover, the subsets of the form $X_i \setminus X^{+}$ are pairwise disjoint, and so are those of the form $X_i \cap X^{-}$. Therefore, 
    $$\tilde{Q}''_j\left(\bigcup_{i \ge 1} X_i\right) = \sum_{i \ge 1} \tilde{Q}'_j(X_i \setminus X^{+}) + \frac{\tilde{Q}'_j(X^{+})}{Q_j(X^{-})} \sum_{i \ge 1} Q_j(X_i \cap X^{-}) = \sum_{i \ge 1} \tilde{Q}''_j(X_i).$$
\end{itemize}

We show next that $j$ strictly improves its utility by switching from $Q'_j$ to $Q''_j$. Recall that the expected utility $u_j^Q(Q_{-S}, Q'_{S \setminus\{j\}}, Q''_j)$ that $j$ gets when playing strategy $Q''_j$ can be decomposed as  

$$
\begin{array}{lll}
u_j^Q(Q_{-S}, Q'_{S \setminus\{j\}}, Q''_j) & = & u_j(Q_{-S}, \tilde{Q}'_{S \setminus\{j\}}, \tilde{Q}''_j)\\
& + & w_j(Q_{-S}, Q'_{S \setminus\{j\}}, Q''_j)\\
& + & \ell_j(Q_{-S}, Q'_{S \setminus\{j\}}, Q''_j).\\
\end{array}$$

As in previous sections, $w_j(Q_{-S}, Q'_{S \setminus\{j\}}, Q''_j) = w_j(Q_{-S}, Q'_{S})$ since, by construction, $j$ places the same bets in $Q'_j$ and $Q''_j$. We finish with the following lemma, whose proof is given in Section~\ref{sec:proofs-technical}.

\begin{lemma}\label{lemma:uncountable-diff1}
$$
\begin{array}{ll}
u_j(Q_{-S}, \tilde{Q}'_{S \setminus\{j\}}, \tilde{Q}''_j) - u_j(Q_{-S}, \tilde{Q}'_{S}) \le Q'_j(X^+)M\\
\ell_j(Q_{-S}, Q'_{S \setminus\{j\}}, Q''_j) - \ell_j(Q_{-S}, Q'_{S}) > Q'_j(X^+)M.
\end{array}$$
\end{lemma}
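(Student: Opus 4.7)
The first inequality I would prove by a direct coupling argument. The construction of $\tilde{Q}''_j$ in Steps 1--3 provides a natural coupling between $\tilde{Q}'_j$ and $\tilde{Q}''_j$: draw $a \sim \tilde{Q}'_j$ and either keep $a$ (when $a \notin X^+$) or replace it by a fresh draw from $Q_j$ conditioned on $X^-$. Under this coupling the two draws agree outside an event of probability exactly $\tilde{Q}'_j(X^+) = Q'_j(X^+)$. Since $j$'s action is the only component of the action profile that differs between $(Q_{-S}, \tilde{Q}'_{S})$ and $(Q_{-S}, \tilde{Q}'_{S\setminus\{j\}}, \tilde{Q}''_j)$, and $u_j$ varies by at most $M$ across action profiles, the expected utility difference is bounded above by $Q'_j(X^+)\cdot M$.

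For the second inequality the heart of the argument is an exact cancellation identity: for every set $X \in \Sigma^+_{Q'_j}$, we have $\tilde{Q}''_j(X) = \tilde{Q}'_j(X) - Q'_j(X^+)$. I plan to establish this by first deriving two structural facts about the extremal families. Write $\alpha = \max_{Y}(\tilde{Q}'_j - Q_j)(Y)$ and $\beta = \min_{Y}(\tilde{Q}'_j - Q_j)(Y)$. First, from the additivity identity $(\tilde{Q}'_j - Q_j)(A \cup B) + (\tilde{Q}'_j - Q_j)(A \cap B) = (\tilde{Q}'_j - Q_j)(A) + (\tilde{Q}'_j - Q_j)(B)$, extremality forces $\Sigma^+_{Q'_j}$ to be closed under finite unions and intersections. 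Second, for any $X \in \Sigma^+_{Q'_j}$, setting $a := (\tilde{Q}'_j - Q_j)(X \cap X^-)$ and applying additivity to the partitions $X = (X\cap X^-) \sqcup (X\setminus X^-)$ and $X^- = (X^- \cap X) \sqcup (X^- \setminus X)$ yields simultaneously $a \geq 0$ and $a \leq 0$, hence $a = 0$ and $X^- \setminus X \in \Sigma^-_{Q'_j}$; the $Q_j$-minimality of $X^-$ on $\Sigma^-_{Q'_j}$ then forces $Q_j(X \cap X^-) = 0$. A symmetric argument using the closure of $\Sigma^+_{Q'_j}$ under intersection and the $Q_j$-minimality of $X^+$ shows $Q_j(X^+ \setminus X) = 0$ and consequently $\tilde{Q}'_j(X \cap X^+) = \tilde{Q}'_j(X^+) = Q'_j(X^+)$. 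Substituting these two facts into the formula for $\tilde{Q}''_j(X)$ gives the identity.

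To conclude, I would observe that every bet $(j,X)$ placed with positive probability under $(Q_{-S}, Q'_S)$ must have $X \in \Sigma^+_{Q'_j}$, since in a Nash equilibrium no player bets on a set yielding strictly suboptimal expected gain (and the existence of $X^+$ with $\tilde{Q}'_j(X^+) - Q_j(X^+) = \alpha > 0$ means the optimum is strictly positive). Because the strategies of all players other than $j$ are unchanged when $j$ switches $Q'_j \to Q''_j$, and $j$'s bet component is identical in $Q'_j$ and $Q''_j$, the identity above gives
\[
\ell_j(Q_{-S}, Q'_{S\setminus\{j\}}, Q''_j) - \ell_j(Q_{-S}, Q'_S) = (M+1) \cdot Q'_j(X^+) \cdot N,
\]
where $N$ is the expected total number of bets placed on $j$. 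By the argument preceding the lemma we have $N \geq 1$, and by construction $Q'_j(X^+) > 0$, so the right-hand side is strictly greater than $Q'_j(X^+) \cdot M$. The main obstacle in the plan is the second step above, where I need the extremal additivity identities and the minimality properties of $X^+$ and $X^-$ guaranteed by Proposition~\ref{prop:min-bet-uncountable} to conspire into exact cancellation for \emph{every} active bet set $X$ rather than merely for $X^+$ itself.
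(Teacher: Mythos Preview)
Your argument is correct. For the first inequality, your coupling argument is essentially the paper's integral decomposition phrased probabilistically; there is no real difference in content.

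For the second inequality, your route genuinely differs from the paper's, and in a useful way. The paper argues the two $\ell_j$ terms separately: it claims that for every maximal bet set $X$ one has $Q_j(X\setminus X^{+})=\tilde Q'_j(X\setminus X^{+})=0$, hence $\tilde Q''_j(X)=0$ and $\ell_j(Q_{-S},Q'_{S\setminus\{j\}},Q''_j)=0$, and then computes $\ell_j(Q_{-S},Q'_S)=-N\cdot Q'_j(X^{+})(M+1)$. But the claim $Q_j(X\setminus X^{+})=0$ is not generally true: with $A_j=\{1,2,3\}$, $Q_j=(\tfrac12,\tfrac14,\tfrac14)$ and $\tilde Q'_j=(\tfrac12,\tfrac12,0)$, both $\{2\}$ and $\{1,2\}$ are maximizers, $X^{+}=\{2\}$, yet $Q_j(\{1,2\}\setminus X^{+})=\tfrac12$. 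In such cases neither of the paper's intermediate equalities holds, although the two errors cancel in the difference. Your identity $\tilde Q''_j(X)=\tilde Q'_j(X)-Q'_j(X^{+})$ for all $X\in\Sigma^{+}_{Q'_j}$ sidesteps this: it uses exactly the minimality of $X^{+}$ and $X^{-}$ from Proposition~\ref{prop:min-bet-uncountable} to get $Q_j(X^{+}\setminus X)=0$ (hence $\tilde Q'_j(X\cap X^{+})=\tilde Q'_j(X^{+})$) and $Q_j(X\cap X^{-})=0$, which together yield the cancellation directly on the difference. This is cleaner and avoids the gap in the paper's version of the argument.
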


Note that this lemma implies that $u_j^Q(Q_{-S}, Q'_{S \setminus\{j\}}, Q''_j) > u_j(Q_{-S}, Q'_{S})$, contradicting the (b1) assumption. This completes the proof of Theorem~\ref{thm:main} part (a). A similar argument can be used to prove part (b), as done in Section~\ref{sec:normal-form-part-b}.

\subsubsection{Proof of the Technical Results}\label{sec:proofs-technical}

In this section, we prove the technical propositions and lemmas given in the previous section.

We begin with Proposition~\ref{prop:max-bet-uncountable}. Its proof follows from the following lemma.

\begin{lemma}\label{lem:global-bounds}
Let $\Sigma$ be a $\sigma$-algebra and let $f:\Sigma \longrightarrow \mathbb{R}$ be a bounded function. Then, $f$ has a global minimum and a global maximum on $\Sigma$.
\end{lemma}

Note that Proposition~\ref{prop:max-bet-uncountable} is an immediate consequence of Lemma~\ref{lem:global-bounds} since the expected utility of a bet is bounded between $1$ and $-1$. To prove Lemma~\ref{lem:global-bounds} we need the following well-known result.

\begin{lemma}\label{lem:union}
Let $\Sigma$ be a $\sigma$-algebra on some set $\Omega$, let $\{X_n\}_{n \ge 1}$ be a sequence of elements of $\Sigma$, and let $f$ be a function from $\Sigma$ to $\mathbb{R}$ that satisfies countable additivity and such that 
\begin{itemize}
\item [(a)] $X_n \subseteq X_{n+1}$ for all $n \in \mathbb{N}$, and
\item [(b)] $\lim_{n \rightarrow \infty} f(X_n) = L$, for some $L \in \mathbb{R}$.
\end{itemize}
Then $f\left(\bigcup_{n \ge 1} X_n\right) = L$.
\end{lemma}

\begin{proof}
Let $Y_n := X_{n+1} \setminus X_{n}$. Then, by (a), $X_1$ and all sets $Y_n$ are pairwise disjoint. Therefore, since $f$ satisfies countable additivity, we have that $$f\left(\bigcup_{i = 1}^n X_i\right) = f(X_n) = f(X_1) + \sum_{i = 1}^{n-1} f(Y_i).$$

Since $\lim_{n \rightarrow \infty} f(X_n) = L$, it means that the sum is convergent and the limit is equal to $L - f(X_1)$. Therefore, $$f\left(\bigcup_{n \ge 1} X_n\right) = f(X_1) + \sum_{i = 1}^{\infty} f(Y_i) = L.$$
\end{proof}

\commentout{
We also need its analogue for intersections.

\begin{lemma}\label{lem-intersection}
Let $\Sigma$ be a $\sigma$-algebra on some set $\Omega$, let $\{X_n\}_{n \ge 1}$ be a sequence of elements of $\Sigma$, and let $f$ be a function from $\Sigma$ to $\mathbb{R}$ that satisfies countable additivity and such that 
\begin{itemize}
\item [(a)] $X_{n+1} \subseteq X_n$ for all $n \in \mathbb{N}$, and
\item [(b)] $\lim_{n \rightarrow \infty} f(X_n) = L$, for some $L \in \mathbb{R}$.
\end{itemize}
Then $f\left(\bigcap_{n \ge 1} X_n\right) = L$.
\end{lemma}

\begin{proof}
The proof is identical to that of Lemma~\ref{lem:union} except that, if we define $Y_n := X_{n} \setminus X_{n+1}$, we have that $$f\left(\bigcap_{i = 1}^n X_i\right) = f(X_n) = f(X_1) - \sum_{i = 1}^{n-1} f(Y_i).$$
\end{proof}
}

To prove Lemma~\ref{lem:global-bounds} we also need the following observation.

\begin{lemma}\label{lem:set-diffs}
Let $\Sigma$ be a $\sigma$-algebra on some set $\Omega$ and let $f$ be a function from $\Sigma$ to $\mathbb{R}$ that satisfies countable additivity and such that $f(X) \le L$ for all $X \in \Sigma$. Let $a_1, a_2, \ldots, a_n$ be a finite sequence of positive real numbers and $X_1, X_2, \ldots, X_n$ be a finite sequence of elements of $\Sigma$ such that $f(X_i) > L - a_i$. Then, $$f\left(\bigcap_{i = 1}^n X_i \right) > L - \sum_{i = 1}^n a_i$$.
\end{lemma}

\begin{proof}
We will only show this result for $n = 2$. The general case follows by induction.

Given two sets $X$ and $Y$ in $\Sigma$ such that $f(X) > L - a$ and $f(Y) > L - b$, let $X' = X \setminus Y$ and let $Y' = Y \setminus X$. Then, we have that 
$$\begin{array}{lll}
f(X') + f(X \cap Y) & >& L - a\\
f(Y') + f(X \cap Y) &>& L - b\\
f(X\cup Y) &\le& L
\end{array}
$$
Writing $X\cup Y$ as $X' \cup Y' \cup (X \cap Y)$, we can add the first two equations and substract the third one to get the desired result.
\end{proof}

With these lemmas we are ready to prove Lemma~\ref{lem:global-bounds}.

\begin{proof}[Proof of Lemma~\ref{lem:global-bounds}]
We will only show that $f$ has a global maximum since, if $X$ is a global maximum of $f$, then $\overline{X}$ is a global minimum (and viceversa). Let $L$ be the supremum of $\{f(X)\}_{X \in \Sigma}$ and let $X_1, X_2, \ldots$ be a sequence of elements of $\Sigma$ such that $f(X_n) > L - \frac{1}{2^n}$. Let $Y_m^n = \bigcap_{i = m}^n X_i$. By Lemma~\ref{lem:set-diffs} we have that $f(Y_m^n) > L - \sum_{i = m}^n \frac{1}{2^i} = L - (\frac{1}{2^{m-1}} - \frac{1}{2^n})$.

If we take the limit when $n \rightarrow \infty$, we have that $f(Y_m^\infty) > L - \frac{1}{2^{m-1}}$ (note that $Y_m^\infty \in \Sigma$ since $\Sigma$ is closed under countable intersections). By construction, $Y_m^{\infty} \subseteq Y_{m+1}^{\infty}$ for all $m \in \mathbb{N}$. Moreover, $$\lim_{m \rightarrow \infty} f(Y_m^{\infty}) = \lim_{m \rightarrow \infty} L - \frac{1}{2^{m-1}} = L.$$

By Lemma~\ref{lem:union}, this means that $$f\left(\bigcup_{m = 1}^{\infty} Y_m^{\infty}\right) = L.$$

This completes the proof.
\end{proof}

The proof of Proposition~\ref{prop:min-bet-uncountable} is analogous to that of Proposition~\ref{prop:max-bet-uncountable}. The key observation is that the set of maximal and the set of minimal bets are closed under countable unions and intersections, which means that we can do an analogue of the proof of Lemma~\ref{lem:global-bounds} restricted to $\Sigma_{Q_j^+}$ and to $\Sigma_{Q_j^-}$. To prove this, note that if $f$ is a countably additive bounded function defined on a $\sigma$-algebra $\Sigma$, we have that $f(X) = f(X\cap Y) + f(X \setminus Y)$ and $f(Y) = f(X \cap Y) + f(Y \setminus X)$ for any two sets $X,Y \in \Sigma$. If Both $X$ and $Y$ maximize $f$, then $ f(Y \setminus X) = f(X \setminus Y) = 0$. Otherwise, $f(X \cap Y)$ or $f(X \cup Y)$ would be strictly greater than $f(X)$. This implies that $f(X \cap Y) = f(X \cup Y) = f(X) = f(Y)$.

It is important to note that if we restrict the proof of Lemma~\ref{lem:global-bounds} to  $\Sigma_{Q_j^+}$ and to $\Sigma_{Q_j^-}$, finding the minimum does not reduce to finding the maximum (as we did in Lemma~\ref{lem:global-bounds}) since $\Sigma_{Q_j^+}$ and $\Sigma_{Q_j^-}$ are not closed under complements. However, we can reproduce an analogous proof using analogues to Lemma~\ref{lem:union} and Lemma~\ref{lem:set-diffs} for intersections and infimums, respectively. Finally, note that we can assume w.l.o.g. that the sets $X^+$ and $X^-$ in Proposition~\ref{prop:min-bet-uncountable} are disjoint since, if $X^+ \cap X^- \not = \emptyset$, then we can use the same argument as above to show that $Q'_j(X^ \cap X^-) = Q_j(X^ \cap X^-) = 0$. If this happens, we can take $X^{+} \setminus X^{-}$ and $X^{-} \setminus X^+$ instead of $X^+$ and $X^-$.

It remains to show Lemma~\ref{lemma:uncountable-diff1}.

\begin{proof}[Proof of Lemma~\ref{lemma:uncountable-diff1}]
We prove each equation separately.

\textbf{First equation: $u_j(Q_{-S}, \tilde{Q}'_{S \setminus\{j\}}, \tilde{Q}''_j) - u_j(Q_{-S}, \tilde{Q}'_{S}) \le Q'_j(X^+)M$.}

For simplicity, let $R$ denote the strategy profile where players in $S$ play $\tilde{Q}'_S$ and the rest of the players play $Q_{-S}$.

By definition, $$u_j(R_{-j}, \tilde{Q}''_j) = \int_{A} u_j(X) \ d(R_{-j}, \tilde{Q}''_j).$$

We can split $A$ into $\Omega_1 := X^+ \times A_{-i}$ and $\Omega_2 := (A_i \setminus X^+) \times A_{-i}$ to get $$u_j(R_{-j}, \tilde{Q}''_j) = \int_{\Omega_1} u_j(X) \ d(R_{-j}, \tilde{Q}''_j) + \int_{\Omega_2} u_j(X) \ d(R_{-j}, \tilde{Q}''_j).$$

By construction, $Q_j''(X) = 0$ for all $X \subseteq X^{+}$, which means that $\int_{\Omega_1} u_j(X) \ d(R_{-j}, \tilde{Q}''_j) = 0$. Additionally, if we define $\Omega_3 := X^{-} \times A_{-i}$, we have that $$\int_{\Omega_2} u_j(X) \ d(R_{-j}, \tilde{Q}''_j) = \int_{\Omega_2} u_j(X) \ dR + \frac{\tilde{Q}'_j(X^{+})}{Q_j(X^{-})} \int_{\Omega_3} u_j(X)  \ d(R_{-j}, Q_j).$$

Note that 
$
u_j(R) = \int_{\Omega_2} u_j(X) \ dR + \int_{\Omega_1} u_j(X) \ dR
$, which implies that 

$$u_j(R_{-j}, \tilde{Q}''_j) - u_j(R) = \frac{\tilde{Q}'_j(X^{+})}{Q_j(X^{-})} \int_{\Omega_3} u_j(X)  \ d(R_{-j}, Q_j) - \int_{\Omega_1} u_j(X) \ dR.$$

 If $M_{\max}$ is the supremum of $u_j$ and $M_{\min}$ is its infimum, we can bound $u_j(R_{-j}, \tilde{Q}''_j) - u_j(R)$ by 
$$
u_j(R_{-j}, \tilde{Q}''_j) - u_j(R) \le \frac{\tilde{Q}'_j(X^{+})}{Q_j(X^{-})} M_{\max} \cdot (R_{-j}, Q_j)(\Omega_3) - M_{\min} \cdot R(\Omega_1).$$

Note, however, that $(R_{-j}, Q_j)(\Omega_3) = Q_j(X^-)$ and $R(\Omega_1) = \tilde{Q}'_j(X^{+})$. Therefore, 
$$
u_j(R_{-j}, \tilde{Q}''_j) - u_j(R) \le  Q'_j(X^+)(M_{\max} - M_{\min}),
$$

as desired.

\textbf{Second equation: $\ell_j(Q_{-S}, Q'_{S \setminus\{j\}}, Q''_j) - \ell_j(Q_{-S}, Q'_{S}) > Q'_j(X^+) \cdot M$.}

The second equation follows from the observation we made in the proof of Proposition~\ref{prop:min-bet-uncountable}: if $(j,X)$ is a maximal bet on $j$, then $Q_j(X \setminus X^{+}) = Q_j'(X \setminus X^{+}) = 0$. Since $\tilde{Q}''_j(X^+) = 0$, it follows that $\ell_j(Q_{-S}, Q'_{S \setminus\{j\}}, Q''_j) = 0$. Thus, it remains to show that $\ell_j(Q_{-S}, Q'_{S}) < - Q'_j(X^+)\cdot M$. This follows from the same observation since then $\ell_j(Q_{-S}, Q'_{S}) = \mathbb{E}[\mbox{\# of other players betting on j}] \cdot Q'_j(X^+) \cdot (-M-1)$. Since we assumed that at least one other player bets on $j$ in expectation, it follows that $\ell_j(Q_{-S}, Q'_{S}) < - Q'_j(X^{+})\cdot M$.
\end{proof}

\commentout{
Before proving the general case of Theorems~\ref{thm:main} and \ref{thm:main2} we need this preliminary lemma:

\begin{lemma}\label{lem:global-bounds}
Let $\Sigma$ be a $\sigma$-algebra and let $f:\Sigma \longrightarrow \mathbb{R}$ be a bounded function. Then, $f$ has a global minimum and a global maximum on $\Sigma$.
\end{lemma}

To prove Lemma~\ref{lem:global-bounds} we need the following well-known result.

\begin{lemma}\label{lem:union}
Let $\Sigma$ be a $\sigma$-algebra on some set $\Omega$, let $\{X_n\}_{n \ge 1}$ be a sequence of elements of $\Sigma$, and let $f$ be a function from $\Sigma$ to $\mathbb{R}$ that satisfies countable additivity and such that 
\begin{itemize}
\item [(a)] $X_n \subseteq X_{n+1}$ for all $n \in \mathbb{N}$, and
\item [(b)] $\lim_{n \rightarrow \infty} f(X_n) = L$, for some $L \in \mathbb{R}$.
\end{itemize}
Then $f\left(\bigcup_{n \ge 1} X_n\right) = L$.
\end{lemma}

\begin{proof}
Let $Y_n := X_{n+1} \setminus X_{n}$. Then, by (a), $X_1$ and all sets $Y_n$ are pairwise disjoint. Therefore, since $f$ satisfies countable additivity, we have that $$f\left(\bigcup_{i = 1}^n X_i\right) = f(X_n) = f(X_1) + \sum_{i = 1}^{n-1} f(Y_i).$$

Since $\lim_{n \rightarrow \infty} f(X_n) = L$, it means that the sum is convergent and the limit is equal to $L - f(X_1)$. Therefore, $$f\left(\bigcup_{n \ge 1} X_n\right) = f(X_1) + \sum_{i = 1}^{\infty} f(Y_i) = L.$$
\end{proof}

We also need its analogue for intersections.

\begin{lemma}\label{lem-intersection}
Let $\Sigma$ be a $\sigma$-algebra on some set $\Omega$, let $\{X_n\}_{n \ge 1}$ be a sequence of elements of $\Sigma$, and let $f$ be a function from $\Sigma$ to $\mathbb{R}$ that satisfies countable additivity and such that 
\begin{itemize}
\item [(a)] $X_{n+1} \subseteq X_n$ for all $n \in \mathbb{N}$, and
\item [(b)] $\lim_{n \rightarrow \infty} f(X_n) = L$, for some $L \in \mathbb{R}$.
\end{itemize}
Then $f\left(\bigcap_{n \ge 1} X_n\right) = L$.
\end{lemma}

\begin{proof}
The proof is identical to that of Lemma~\ref{lem:union} except that, if we define $Y_n := X_{n} \setminus X_{n+1}$, we have that $$f\left(\bigcap_{i = 1}^n X_i\right) = f(X_n) = f(X_1) - \sum_{i = 1}^{n-1} f(Y_i).$$
\end{proof}

Finally, we need the following observation.

\begin{lemma}\label{lem:set-diffs}
Let $\Sigma$ be a $\sigma$-algebra on some set $\Omega$ and let $f$ be a function from $\Sigma$ to $\mathbb{R}$ that satisfies countable additivity and such that $f(X) \le L$ for all $X \in \Sigma$. Let $a_1, a_2, \ldots, a_n$ be a finite sequence of positive real numbers and $X_1, X_2, \ldots, X_n$ be a finite sequence of elements of $\Sigma$ such that $f(X_i) > L - a_i$. Then, $$f\left(\bigcap_{i = 1}^n X_i \right) > L - \sum_{i = 1}^n a_i$$.
\end{lemma}

\begin{proof}
We will only show this result for $n = 2$. The general case follows by induction.

Given two sets $X$ and $Y$ in $\Sigma$ such that $f(X) > L - a$ and $f(Y) > L - b$, let $X' = X \setminus Y$ and let $Y' = Y \setminus X$. Then, we have that 
$$\begin{array}{lll}
f(X') + f(X \cap Y) & >& L - a\\
f(Y') + f(X \cap Y) &>& L - b\\
f(X\cup Y) &\le& L
\end{array}
$$
Writing $X\cup Y$ as $X' \cup Y' \cup (X \cap Y)$, we can add the first two equations and substract the third one to get the desired result.
\end{proof}

With these lemmas we are ready to prove Lemma~\ref{lem:global-bounds}.

\begin{proof}[Proof of Lemma~\ref{lem:global-bounds}]
We will only show that $f$ has a global maximum since, if $X$ is a global maximum of $f$, then $\overline{X}$ is a global minimum (and viceversa). Let $L$ be the supremum of $\{f(X)\}_{X \in \Sigma}$ and let $X_1, X_2, \ldots$ be a sequence of elements of $\Sigma$ such that $f(X_n) > L - \frac{1}{2^n}$. Let $Y_m^n = \bigcap_{i = m}^n X_i$. By Lemma~\ref{lem:set-diffs} we have that $f(Y_m^n) > L - \sum_{i = m}^n \frac{1}{2^i} = L - (\frac{1}{2^{m-1}} - \frac{1}{2^n})$.

If we take the limit when $n \rightarrow \infty$, we have that $f(Y_m^\infty) > L - \frac{1}{2^{m-1}}$ (note that $Y_m^\infty \in \Sigma$ since $\Sigma$ is closed under countable intersections). By construction, $Y_m^{\infty} \subseteq Y_{m+1}^{\infty}$ for all $m \in \mathbb{N}$. Moreover, $$\lim_{m \rightarrow \infty} f(Y_m^{\infty}) = \lim_{m \rightarrow \infty} L - \frac{1}{2^{m-1}} = L.$$

By Lemma~\ref{lem:union}, this means that $$f\left(\bigcup_{m = 1}^{\infty} Y_m^{\infty}\right) = L.$$

This completes the proof.
\end{proof}

\begin{proof}
We will only show that $f$ has a global maximum since, if $X$ is a global maximum of $f$, then $\overline{X}$ is a global minimum. Let $L$ be the supremum of $\{f(X)\}_{X \in \Sigma}$ and let $X_1, X_2, \ldots$ be a sequence of elements of $\Sigma$ such that $f(X_n) > L - \frac{1}{2^n}$. 
\end{proof}

The key insight to prove the correctness of the construction given in Section~\ref{sec:construction-uncountable} is that, if we fix two players $i$ and $j$, $i$ always has a maximal bet on $j$ (i.e., a set $X_j \in \Sigma_j$ such that betting on $(j, X_j)$ gives the maximum possible expected utility to $i$).

\begin{proposition}

\end{proposition}

$(\Omega, \mathcal{F}, P)$. The key insight is the following proposition, whose proof is given in Section~\ref{sec:proof-probability}.

\begin{proposition}\label{prop:max-bets}
Consider the construction of $\Gamma^{\vec{\sigma}}$ given in Section~\ref{sec:countable}. Suppose that some player $i$ plays strategy $\tau_i$ instead of $\sigma_i$. Then, there exists a subset $X \in \mathcal{F}$ such that $(i,X)$ maximizes the expected utility of betting on $i$.
\end{proposition}

Intuitively, Proposition~\ref{prop:max-bets} says that, no matter how a player $i$ defects, all other players have maximal bets on $i$.

In fact, we can get an analogue of Lemma~\ref{lem:same-bets} that states the following.

\begin{lemma}\label{lem:same-bets-uncountable}
Given a normal-form game $\Gamma = (P, A, U)$, let $S \subset P$ be a subset of players and $\vec{\tau}_S$ be a strategy for players in $S$ in $\Gamma^{\vec{\sigma}}$ such that $u_i^{\vec{\sigma}}(\vec{\sigma}_{-S}, \vec{\tau}_{S}) \ge  u_i^{\vec{\sigma}}(\vec{\sigma}_{-S}, \vec{\tau}_{S \setminus\{i\}}, \tau'_i)$ for all $i \in S$ and all strategies $\tau'_i$ for $i$. For each player $i \in S$, let $X_i^>$ (resp., $X_i^<$) be the set of all actions $a \in A_i$ such that the probability that $i$ plays $a$ with $\tilde{\tau}_i$ is strictly greater (resp., smaller) than the probability that $i$ plays $a$ with $\sigma_i$. Then, if players play $(\vec{\sigma}_{-S}, \vec{\tau}_{S})$, the probability that a player in $S$ makes a bet of the form $(i,X)$ such that $X \setminus X_i^>$ or $X_i^< \cap X$ have positive Lebesgue measure is $0$.
\end{lemma}

The case in which the action sets are uncountable is almost identical to case in which they are countable, except for the fact that players can \emph{misplay} (i.e., playing sub-optimal actions) as long as the measure of the set of misplays is $0$ (note that this does not rule out that some sub-optimal plays may have positive density on an optimal strategy). 

Suppose that all sets $A_i$ are subsets of $\mathbb{R}$ with positive Lebesgue measure and consider the construction of $\Gamma^{\vec{\sigma}}$ given in Section~\ref{sec:countable}. In the uncountable case, we can get an analogue of Lemma~\ref{lem:same-bets} that states the following.

\begin{lemma}\label{lem:same-bets-uncountable}
Given a normal-form game $\Gamma = (P, A, U)$, let $S \subset P$ be a subset of players and $\vec{\tau}_S$ be a strategy for players in $S$ in $\Gamma^{\vec{\sigma}}$ such that $u_i^{\vec{\sigma}}(\vec{\sigma}_{-S}, \vec{\tau}_{S}) \ge  u_i^{\vec{\sigma}}(\vec{\sigma}_{-S}, \vec{\tau}_{S \setminus\{i\}}, \tau'_i)$ for all $i \in S$ and all strategies $\tau'_i$ for $i$. For each player $i \in S$, let $X_i^>$ (resp., $X_i^<$) be the set of all actions $a \in A_i$ such that the probability that $i$ plays $a$ with $\tilde{\tau}_i$ is strictly greater (resp., smaller) than the probability that $i$ plays $a$ with $\sigma_i$. Then, if players play $(\vec{\sigma}_{-S}, \vec{\tau}_{S})$, the probability that a player in $S$ makes a bet of the form $(i,X)$ such that $X \setminus X_i^>$ or $X_i^< \cap X$ have positive Lebesgue measure is $0$.
\end{lemma}

\begin{proof}
Suppose that, for some player $j \in S$, the probability that $j$ makes a bet of the form $(i,X)$ such that $X \setminus X_i^>$ or $X_i^< \cap X$ have positive Lebesgue measure is not $0$. Consider a strategy $\vec{\tau}'_j$ for $j$ that is identical to $\tau_j$, except for the fact that, if $j$ would make a bet $(i,X)$ such that $X \setminus X_i^>$ or $X_i^< \cap X$ have positive Lebesgue measure, it includes the elements of $X \setminus X_i^>$ and excludes those of $X_i^< \cap X$ from $X$ before placing the bet. By construction, the utility of $j$ strictly increases this way, which contradicts the assumption that $j$ cannot increase its utility by defecting from $\tau_j$.
\end{proof}

\commentout{
Lemma~\ref{lem:same-bets-uncountable} immediately implies the following Corollary.

\begin{corollary}\label{col:expected-loss-uncountable}
If $\Gamma$, $S$, and $\vec{\tau}_S$ satisfy the conditions Lemma~\ref{lem:same-bets-uncountable}, betting on $i$ gives $i$ an expected loss of at least $\tau_i(X_i^>)(M+1)$ utility and at most $(1 - \tau_i(X_i^<))(M+1)$ utility, where $\tau_i(X_i^>)$ is the probability that $i$ plays an action on $X_i^>$ with $\tilde{\tau}_i$.
\end{corollary}
}

The rest of the proof is identical to the countable and finite cases. Suppose that a coalition $S$ defects from the proposed strategy  following strategy profile $\vec{\tau}_S$ in such a way that (a1) all players in $S$ get more utility with $\vec{\tau}_S$ than with $\vec{\sigma}$ and (b1) no player in $S$ can further increase its utility by defecting from $\vec{\tau}_S$. Then, it is straightforward to check that, if (a1) is satisfied, all players in $S$ bet with probability $1$. Therefore, there exists a player $i \in S$ such that at least one other player bets on $i$ in expectation. Consider a strategy $\tau'_i$ for $i$ that is identical to $\tau_i$ except that, whenever it would play an action in $X_i^>$, it replaces it by an action in $X_i^<$.

Then, as in Section~\ref{sec:proof}, we have that 

$$u_i(\vec{\sigma}_{-S}, \tilde{\tau}_{S \setminus \{i\}}, \tilde{\tau}'_i) - u_i(\vec{\sigma}_{-S}, \tilde{\tau}_{S}) \ge -\tau_i(X_i^>)M.$$

Moreover, by Lemma~\ref{lem:same-bets-uncountable}, we also have that

$$\ell_i^{\vec{\sigma}}(\vec{\sigma}_{-S}, \vec{\tau}_{S \setminus \{i\}}, \tau'_i) - \ell_i^{\vec{\sigma}}(\vec{\sigma}_{-S}, \vec{\tau}_{S}) \ge \tau_i(X_i^>)(M+1).$$

Altogether, this implies that $$u_i^{\vec{\sigma}}(\vec{\sigma}_{-S}, \tilde{\tau}_{S \setminus \{i\}}, \tilde{\tau}'_i) - u_i^{\vec{\sigma}}(\vec{\sigma}_{-S}, \tilde{\tau}_{S}) > 0,$$

which contradicts the (b1) assumption.
}

\section{Conclusion and Open Problems}\label{sec:conclusion}

This paper shows that if a mechanism designer has the power to add new actions to a normal-form or Bayesian game, it can shield a given Nash equilibrium from collusion and from equilibrium selection issues. More precisely, given a normal-form (resp., Bayesian) game $\Gamma$ and a Nash equilibrium $\vec{\sigma}$ of $\Gamma$, we showed that exists an extension $\Gamma^{\vec{\sigma}}$ of $\Gamma$ such that (a) $\vec{\sigma}$ is a semi-strong Nash equilibrium of $\Gamma$ and (b) $\vec{\sigma}$ Pareto-dominates (resp., quasi Pareto-dominates) all other equilibria of $\Gamma^{\vec{\sigma}}$. This result opens the door for three important follow-up problems.

\begin{itemize}
\item \textbf{Can we also get Pareto-dominance for arbitrary Bayesian games?} We proved that the construction of $\Gamma^{\vec{\sigma}}$ given in Section~\ref{sec:bayesian-construction} satisfies that $\vec{\sigma}$ quasi Pareto-dominates all other Nash equilibria of $\Gamma^{\vec{\sigma}}$. We conjecture that, in fact, $\vec{\sigma}$ also Pareto-dominates all other Nash equilibria in this construction.
\item \textbf{Minimizing promises.} As discussed in earlier sections, we can view the newly added actions in $\Gamma^{\vec{\sigma}}$ as \emph{promises} made by the mechanism designer to the players. In practice, if players are rational and have full information, these newly actions should never be played since every Nash equilibrium that includes them is Pareto-dominated (or quasi Pareto-dominated) by $\vec{\sigma}$. Thus, in theory, the mechanism designer can simply bluff about the existence of these new actions (e.g., by promising to play certain amount to a given player). Nevertheless, it is still an open question to see exactly how much utility must the mechanism designer promise in order to construct a game $\Gamma^{\vec{\sigma}}$ that satisfies the properties of Theorem~\ref{thm:main} and \ref{thm:main2}.
\item \textbf{Minimum size of $\Gamma^{\vec{\sigma}}$.} Theorems~\ref{thm:main} and \ref{thm:main2} give upper bounds on the minimum number of actions that must be added in order to construct $\Gamma^{\vec{\sigma}}$. It is still an open problem to verify if there exists a smaller construction.
\end{itemize}

\bibliographystyle{plain}
\bibliography{bibfile}

\begin{thebibliography}{10}

\bibitem{abraham2011distributed}
Ittai Abraham, Lorenzo Alvisi, and Joseph~Y Halpern.
\newblock Distributed computing meets game theory: combining insights from two
  fields.
\newblock {\em Acm Sigact News}, 42(2):69--76, 2011.

\bibitem{abraham2019implementing}
Ittai Abraham, Danny Dolev, Ivan Geffner, and Joseph~Y Halpern.
\newblock Implementing mediators with asynchronous cheap talk.
\newblock In {\em Proceedings of the 2019 ACM Symposium on Principles of
  Distributed Computing}, pages 501--510, 2019.

\bibitem{abraham2006distributed}
Ittai Abraham, Danny Dolev, Rica Gonen, and Joseph~Y Halpern.
\newblock Distributed computing meets game theory: robust mechanisms for
  rational secret sharing and multiparty computation.
\newblock In {\em Proceedings of the twenty-fifth annual ACM symposium on
  Principles of distributed computing}, pages 53--62, 2006.

\bibitem{abraham2008lower}
Ittai Abraham, Danny Dolev, and Joseph~Y Halpern.
\newblock Lower bounds on implementing robust and resilient mediators.
\newblock In {\em Theory of Cryptography: Fifth Theory of Cryptography
  Conference, TCC 2008, New York, USA, March 19-21, 2008. Proceedings 5}, pages
  302--319. Springer, 2008.

\bibitem{aiyer2005bar}
Amitanand~S Aiyer, Lorenzo Alvisi, Allen Clement, Mike Dahlin, Jean-Philippe
  Martin, and Carl Porth.
\newblock Bar fault tolerance for cooperative services.
\newblock In {\em Proceedings of the twentieth ACM symposium on Operating
  systems principles}, pages 45--58, 2005.

\bibitem{aumann1959acceptable}
Robert~J Aumann.
\newblock Acceptable points in general cooperative n-person games.
\newblock {\em Contributions to the Theory of Games}, 4:287--324, 1959.

\bibitem{aumann1974subjectivity}
Robert~J Aumann.
\newblock Subjectivity and correlation in randomized strategies.
\newblock {\em Journal of mathematical Economics}, 1(1):67--96, 1974.

\bibitem{bgw88}
M.~Ben-Or, S.~Goldwasser, and A.~Wigderson.
\newblock Completeness theorems for non-cryptographic fault-tolerant
  distributed computation.
\newblock In {\em Proceedings of the 20th ACM Symposium on Theory of
  Computing}, pages 1--10, 1988.

\bibitem{ben1983another}
Michael Ben-Or.
\newblock Another advantage of free choice (extended abstract) completely
  asynchronous agreement protocols.
\newblock In {\em Proceedings of the second annual ACM symposium on Principles
  of distributed computing}, pages 27--30, 1983.

\bibitem{ben1993asynchronous}
Michael Ben-Or, Ran Canetti, and Oded Goldreich.
\newblock Asynchronous secure computation.
\newblock In {\em Proceedings of the twenty-fifth annual ACM symposium on
  Theory of computing}, pages 52--61, 1993.

\bibitem{bernheim1987coalition}
B~Douglas Bernheim, Bezalel Peleg, and Michael~D Whinston.
\newblock Coalition-proof nash equilibria i. concepts.
\newblock {\em Journal of economic theory}, 42(1):1--12, 1987.

\bibitem{bracha1984asynchronous}
Gabriel Bracha.
\newblock An asynchronous [(n-1)/3]-resilient consensus protocol.
\newblock In {\em Proceedings of the third annual ACM symposium on Principles
  of distributed computing}, pages 154--162, 1984.

\bibitem{bracha1985asynchronous}
Gabriel Bracha and Sam Toueg.
\newblock Asynchronous consensus and broadcast protocols.
\newblock {\em Journal of the ACM (JACM)}, 32(4):824--840, 1985.

\bibitem{forges1986approach}
Francoise Forges.
\newblock An approach to communication equilibria.
\newblock {\em Econometrica: Journal of the Econometric Society}, pages
  1375--1385, 1986.

\bibitem{geffner2021security}
Ivan Geffner and Joseph~Y Halpern.
\newblock Security in asynchronous interactive systems.
\newblock In {\em International Symposium on Stabilizing, Safety, and Security
  of Distributed Systems}, pages 123--140. Springer, 2021.

\bibitem{geffner2023communication}
Ivan Geffner and Joseph~Y Halpern.
\newblock Communication games, sequential equilibrium, and mediators.
\newblock {\em arXiv preprint, arXiv:2309.14618}, 2023.

\bibitem{geffner2023lower}
Ivan Geffner and Joseph~Y Halpern.
\newblock Lower bounds on implementing mediators in asynchronous systems with
  rational and malicious agents.
\newblock {\em Journal of the ACM}, 70(2):1--21, 2023.

\bibitem{monderer2003k}
Dov Monderer and Moshe Tennenholtz.
\newblock k-implementation.
\newblock In {\em Proceedings of the 4th ACM conference on Electronic
  Commerce}, pages 19--28, 2003.

\bibitem{monderer2009strong}
Dov Monderer and Moshe Tennenholtz.
\newblock Strong mediated equilibrium.
\newblock {\em Artificial Intelligence}, 173(1):180--195, 2009.

\bibitem{oesterheld2022similarity}
Caspar Oesterheld, Johannes Treutlein, Roger Grosse, Vincent Conitzer, and
  Jakob Foerster.
\newblock Similarity-based cooperation.
\newblock {\em arXiv preprint, arXiv:2211.14468}, 2022.

\bibitem{penn2005congestion}
Michal Penn, Maria Polukarov, and Moshe Tennenholtz.
\newblock Congestion games with failures.
\newblock In {\em Proceedings of the 6th ACM Conference on Electronic
  Commerce}, pages 259--268, 2005.

\bibitem{rabin1983randomized}
Michael~O Rabin.
\newblock Randomized byzantine generals.
\newblock In {\em 24th annual symposium on foundations of computer science
  (sfcs 1983)}, pages 403--409. IEEE, 1983.

\bibitem{ranchal2021rational}
Alejandro Ranchal-Pedrosa and Vincent Gramoli.
\newblock Rational agreement in the presence of crash faults.
\newblock In {\em 2021 IEEE International Conference on Blockchain}, pages
  470--475. IEEE, 2021.

\bibitem{tennenholtz2004program}
Moshe Tennenholtz.
\newblock Program equilibrium.
\newblock {\em Games and Economic Behavior}, 49(2):363--373, 2004.

\bibitem{vilacca2011n}
Xavier Vila{\c{c}}a, Joao Leitao, Miguel Correia, and Lu{\'\i}s Rodrigues.
\newblock N-party bar transfer.
\newblock In {\em Proceedings of the 2011 Conference on Principles of
  Distributed Systems}, pages 392--408. Springer, 2011.

\end{thebibliography}

\end{document}